\newtheorem{lemma}{Lemma}[section]
\newtheorem{theorem}[lemma]{Theorem}
 \newtheorem{definition}[lemma]{Definition}
\newtheorem{proposition}[lemma]{Proposition}
\newtheorem{corollary}[lemma]{Corollary }
\newtheorem{remark}[lemma]{Remark}
\newcommand{\bra}[1]{\ensuremath{\langle#1|}}
\newcommand{\ket}[1]{\ensuremath{|#1\rangle}}
\newcommand{\mc}[1]{\ensuremath{\mathcal{#1}}}   
\newcommand{\mb}[1]{\ensuremath{\mathbbm{#1}}}   
\newcommand{\ms}[1]{\ensuremath{\mathsf{#1}}} 
\newcommand{\mt}[1]{\ensuremath{\mathtt{#1}}}
\newcommand{\abbr}[1]{\ensuremath{\mathtt{#1}}}  
\renewcommand{\bar}{\overline}
\newcommand{\Hgen}{\widetilde{\mb{H}}}
\newcommand{\Sgen}{\widetilde{\mb{S}}}
\newcommand{\Egen}{\widetilde{E}}
\newcommand{\DI}{{\Delta \mb{I}}}
\newcommand{\YH}{{\Delta \mb{H}}}
\newcommand{\co}{{\ms{co}}}
\newcommand{\X}{\ms{X}}
\newcommand{\Z}{\ms{Z}}
\newcommand{\sZ}{\ms{\tilde{Z}}}
\newcommand{\XX}{\ms{XX}}
\newcommand{\Sop}[1]{\mathrm{S}_{#1}}  
\numberwithin{equation}{section}
\newcommand{\ver}{{\ms{V}}}
\newcommand{\edge}{{\ms{E}}}
\newcommand{\Jxx}{J_{\ms{xx}}}
\newcommand{\Jzz}{J_{\ms{zz}}}
\newcommand{\Hcore}{\mb{H}_{\ms{core}}}
\newcommand{\Heff}{\mb{H}_1^{\ms{eff}}}
\newcommand{\mdef}{\stackrel{\mathrm{def}}{=}}
\newcommand{\cl}{\ms{Clique}}
\newcolumntype{L}{>{\raggedright\arraybackslash}X}
\newcommand{\shz}[1]{\tilde{\sigma}^z_{#1}}
\newcommand{\Gdis}{\ensuremath{G_{\ms{dis}}}}
\newcommand{\Gshare}{\ensuremath{G_{\ms{share}}}}
\newcommand{\x}[1]{\mt{x}(#1)}
\newcommand{\bst}[1]{\mathbf{b}_{#1}}  
\newcommand{\SLE}{\mathcal{L}^{\text{ind}}}
\newcommand{\Hefftwo}{H_{\ms{eff}}^{(2\times2)}}
\newcommand{\Heffzero}{H_{\ms{eff}}^{0}}
\newcommand{\weff}{w^{\text{\tiny eff}}}
\newcommand{\HL}{\mb{H}_L^{\ms{bare}}}
\newcommand{\HR}{\mb{H}_R^{\ms{bare}}}
\newcommand{\LM}{\abbr{LM}}
\newcommand{\GM}{\abbr{GM}}
\newcommand{\MLIS}{\abbr{MLIS}}
\newcommand{\DMS}{\abbr{deg}-\abbr{MLIS}}
\newcommand{\MIC}{\abbr{dMIC}}
\newcommand{\MDC}{\abbr{dMDC}}
\newcommand{\DDD}{{\sc Dic-Dac-Doa{}}}
\newcommand{\Bc}{{\mathcal{B}}_{a}}
\newcommand{\Ba}{{\mathcal{B}}_{\ms{ang}}}
\newcommand{\PLE}{{\Pi^{\tiny{\ms{ind}}}}}
\newcommand{\Uc}{{\mathbf{U}^{\ms{clique}}}}
\def\final{1} 
\newcommand{\vnote}[1]{[{\small Vicky: \bf #1}]\marginpar{*}}
\newcommand{\sidecomment}[1]{\marginpar{\tiny #1}}
\newcommand{\vnote}[1]{}
\newcommand{\sidecomment}[1]{}
\begin{document}
\title{Limitation of Stoquastic Quantum Annealing: A Structural Perspective}

\author{
  Vicky Choi\\
  Gladiolus Veritatis Consulting Co.\footnote{\url{https://www.vc-gladius.com}}
}
\maketitle       

\begin{abstract}
We analyze the behavior of stoquastic transverse-field quantum annealing (TFQA) on a structured class of Maximum Independent Set (MIS) instances, using the same decomposition framework developed in our companion work on the \DDD{} algorithm (Beyond Stoquasticity)~\cite{Choi-Beyond}.
For these instances, we provide a structural explanation for the anti-crossing
arising from the competition between the
energies associated with a set of degenerate local minima ($\LM{}$) and the global minimum ($\GM{}$),
and analytically derive the associated exponentially small gap.
Our analysis proceeds in two steps.
First, we reduce the dynamics to an effective two-block Hamiltonian \( \Hcore \), constructed from the bare (decoupled) subsystems associated with the $\LM{}$ and $\GM{}$.
This reduction is justified analytically using the structural decomposition.
Second, we reformulate the eigenvalue problem as a generalized eigenvalue problem in a non-orthogonal basis constructed from the bare eigenstates of the subsystems.
This transformation enables a clean perturbative treatment of the anti-crossing structure, independent of the transverse field---unlike standard perturbation theory approach, which requires treating the transverse field as a small parameter.
This paper serves as a supplementary companion to our main work on the \DDD{} algorithm~\cite{Choi-Beyond}, where we demonstrate how appropriately designed non-stoquastic drivers can bypass this tunneling-induced bottleneck.
\end{abstract}

\newpage
\tableofcontents

\section{Introduction}  
This work analyzes the performance limitation of stoquastic quantum annealing (TFQA)~\cite{Farhi2000,Farhi2001,AQC-eq,AQC-Review}  on a structured class of Maximum Independent Set (MIS) instances, using the same structural framework developed in the companion work on the \DDD{} algorithm~\cite{Choi-Beyond}.
In the absence of the \( \XX \)-driver (\( \Jxx = 0 \)), 
the TFQA system evolves under a fully stoquastic Hamiltonian.
The corresponding Hamiltonian takes the form~\footnote{For completeness, we recall that the full algorithm may begin 
with an optional \emph{Stage~0}, whose sole purpose is to set the parameters 
of the problem Hamiltonian before the main evolution begins.  
The Stage~0 Hamiltonian (without the $\XX$-driver term) is
\(
\mb{H}_0(t) = \x{t} \mb{H}_{\ms{X}} + \mt{p}(t)\mb{H}_{\ms{problem}}, 
 t \in [0,1],
\)
with parameter schedules
\(
\x{t} = (1 - t)(\Gamma_0 - \Gamma_1) + \Gamma_1, 
 \mt{p}(t) = t.
\)
During this phase, the problem parameters \( w \) and \( \Jzz \)  
are ramped to their final values as \( \mt{p}(t) \) increases from 0 to 1.
Stage~0 plays no role in the present analysis and may be omitted in practice.
}
:
\[
\mb{H}(t) = \x{t} \mb{H}_{\ms{X}} + \mb{H}_{\ms{problem}},
\]
where
\(
\mb{H}_{\ms{X}} = - \sum_{i} \sigma_i^x\), and
\(
\mb{H}_{\ms{problem}} = \sum_{i \in \ver(G)} (-w_i) \shz{i} 
+ \Jzz \sum_{(i,j) \in \edge(G)} \shz{i} \shz{j}
\)
is the MIS-Ising Hamiltonian, with the shifted-\( \sigma^z \) operator \( \shz{i} := \tfrac{I + \sigma^z_i}{2} \). 
Here, we take \( w_i = w \equiv 1 \) for the unweighted MIS problem, and note that \( \Jzz \) is required to be at least \( w \). 
This is precisely the system Hamiltonian used in the \DDD{} algorithm~\cite{Choi-Beyond}, when the non-stoquastic \( \XX \)-driver is turned off, i.e., \( \Jxx = 0 \).
The annealing schedule simplifies to a single-stage evolution, i.e. there is no need to distinguish between Stage~1 and Stage~2,  with a linearly decreasing transverse field,
$\x{t} = (1 - t)\Gamma_1$.
The system Hamiltonian is stoquastic in the computational basis.

A widely observed but not fully justified limitation of stoquastic TFQA is 
the occurrence of an anti-crossing arising from the competition between the 
energies associated with a set of degenerate local minima (\LM{}) and the global minimum (\GM{}).
While such anti-crossings have long been explained using perturbative arguments 
in the small transverse-field regime (e.g.,~\cite{Amin-Choi,AKR}), we provide 
a structural approach that enables a clean perturbative treatment of the 
anti-crossing structure, independent of the transverse field, yielding an 
analytic derivation of the exponentially small gap.

In particular, we show that the relevant dynamics reduce to an effective two-block Hamiltonian \( \Hcore \), formed from bare subsystems associated with the $\LM{}$ and $\GM{}$. 
This reduction follows the same angular-momentum-based decomposition of the Hamiltonian, derived from the angular momentum structure of the cliques associated with the $\LM{}$\footnote{Although in the TFQA{} case (\( \Jxx = 0 \)) we do not explicitly identify the cliques for constructing the \(\XX\)-driver graph, the cliques associated with the critical local minima exist and are used solely for analytical purposes; they do not need to be constructively identified.} as developed in the main paper, with two additional refinements: (1) we justify that the disjoint case suffices for analysis, and (2) we apply the \( L \)-inner decomposition to derive the effective Hamiltonian $\Hcore$. For completeness, we include the necessary angular momentum decomposition in this paper.

Intuitively, \( \Hcore \) 
can be understood in terms of two embedded bare subsystems (an \( L \)-subsystem and an \( R \)-subsystem) coupled only through the empty-set state. 
Because of the additional transverse-field contribution in the \( L \)-subsystem, the system initially localizes in the \( L \)-block. 
The transition to the \( R \)-block (which supports the global minimum $\GM{}$) occurs through a tunneling-induced anti-crossing, resulting in an exponentially small gap.
While this intuition is conceptually clear, the analytic derivation is nontrivial and requires identifying the appropriate representation to capture the mechanism structurally. 
Our analysis of \( \Hcore \) is structurally guided and departs from conventional perturbation theory. 
It does not rely on a small transverse field \( \mt{x} \); instead, it reformulates the full Hamiltonian \( \Hcore \) in the basis of embedded bare eigenstates, where the overlap structure induces a generalized eigenvalue problem. 
We then express this system in a perturbative form whose perturbation term is independent of \( \mt{x} \), allowing the analysis to proceed without treating \( \mt{x} \) as a small parameter. 
Within this framework, we show that the bare energy levels provide an accurate approximation to the true spectrum of \( \Hcore \), with energy crossings between bare energy levels replaced by tunneling-induced anti-crossings.

In particular, this structure enables a perturbative treatment both away from and near the anti-crossing: we approximate the true ground energy using first-order corrections to the bare energies, and construct a reduced \( 2 \times 2 \) effective Hamiltonian on the shared two-level subspace to
derive a perturbative bound on the gap. 
The resulting bound is exponentially small in system size, confirming that the structural constraints of stoquastic TFQA lead to tunneling-induced bottlenecks that limit algorithmic performance on this family of instances. Our analysis also reveals the structural origin of tunneling-induced anti-crossings. 
In particular, once the evolution enters a regime where the system is localized around critical local minima, a tunneling-induced anti-crossing is inevitable. 

In the \( \Jxx = 0 \) case, the value of \( \Jzz \) can be made arbitrarily large without altering the qualitative dynamics, although the anti-crossing location shifts earlier as \( \Jzz \) increases. This shift occurs because larger \( \Jzz \) accelerates localization within the same-sign block, allowing the effective dynamics to be more accurately captured by \( \Hcore \) at earlier times.
However, if one sets \( \Jzz \) to be very large in the presence of an \(\XX\)-driver term with \( \Jxx > 0 \), such that \( \Jxx \ll \Jzz \),\footnote{Violating the Steering lower bound on \( \Jxx \) in \cite{Choi-Beyond}.} 
then by the cut-off Theorem~8.3 in~\cite{Choi-Beyond}, the effective Hamiltonian at the beginning of Stage~1 is already approximately \( \Hcore \), leaving no opportunity for successful structural steering.
In this case, the system localizes in the \( L \)-region and transitions to the \( R \)-region via tunneling, merely shifting the anti-crossing earlier,
as illustrated by an example in 
Figure~16 of~\cite{Choi-Beyond}. 
See also \cite{Kerman-Tunnel}, which addresses the case \( \Jzz \to \infty \).

Finally we remark that the same localization and tunneling mechanism may arise in less structured graphs as well, where degenerate local minima confine the dynamics in an analogous way.  
Through our detailed analysis, we hope to shed enough light on this anti-crossing mechanism to
provide a structural basis for a more careful assessment of claimed stoquastic speedups.

This paper is organized as follows.
In Section~\ref{sec:background}, we review the prerequisites for our analysis, including the definitions of same-sign and opposite-sign blocks, the two bipartite substructure graphs (\Gdis{} and \Gshare{}), and the definition of an anti-crossing, along with the basic matrix \( \mb{B}(w,x) \) that will be used throughout our analysis.
In Section~\ref{sec:approach}, we show that the gap analysis can be restricted to the same-sign block of the disjoint-structure case.
In Section~\ref{sec:inner}, we present the inner decompositions of the same-sign block and elaborate on the \emph{L-inner decomposition}, which reveals localization to the region associated with local minima (\emph{L-localization}), and derive the effective Hamiltonian \( \Hcore \).
In Section~\ref{sec:Hcore}, we give a detailed analysis of \( \Hcore \).
We conclude with discussion in Section~\ref{sec:end}.

\section{Background and Preliminaries}
\label{sec:background}

In this section, we review the prerequisites for our analysis:  
(1) the definitions of same-sign and opposite-sign states, sectors, and blocks (Section~\ref{sec:same-opp});  
(2) the two fundamental bipartite substructure graphs, distinguishing between the disjoint-structure case \Gdis{} and the shared-structure case \Gshare{} (Section~\ref{sec:dis-share-graphs}); and  
(3) the definition of anti-crossing together with the solution for the basic matrix \( \mb{B}(w,x) \) (Section~\ref{sec:anti-crossings}).

The material in this section is adopted from the main paper~\cite{Choi-Beyond}, and is included here
for completeness and to keep this paper self-contained.

\subsection{Same-Sign vs Opposite-Sign}
\label{sec:same-opp}
The sign structure of quantum states plays a central role in our analysis.  
Throughout this paper, all Hamiltonians are real and Hermitian, so we restrict attention to quantum states with real-valued amplitudes: each component has phase \(0\) (positive) or \(\pi\) (negative).  
\begin{mdframed}
\begin{definition}
Let \( |\psi\rangle = \sum_{x \in \mathcal{B}} \psi(x) |x\rangle \) be a quantum state with real amplitudes in a basis \( \mathcal{B} \).

\begin{itemize}
    \item \( |\psi\rangle \) is called a \textbf{same-sign state} if \( \psi(x) \geq 0 \) for all \( x \in \mathcal{B} \).  
    That is, all components are in phase (with relative phase \( 0 \)).

    \item \( |\psi\rangle \) is called an \textbf{opposite-sign state} if there exist \( x, x' \in \mathcal{B} \) such that \( \psi(x) > 0 \) and \( \psi(x') < 0 \).  
    In this case, some components are out of phase, differing by a relative phase of \( \pi \).
\end{itemize}
\end{definition}
\end{mdframed}
Unless stated otherwise, we take \( \mathcal{B} \) to be the computational basis when referring to same-sign or opposite-sign states.
More generally, the computational basis is assumed whenever the basis is unspecified.

Accordingly, we define the notions of same-sign and opposite-sign bases, sectors, and blocks:
\begin{definition}
An orthonormal basis consisting entirely of same-sign states is called a same-sign basis.  
A basis that includes at least one opposite-sign state is called an opposite-sign basis.
A subspace spanned by a same-sign basis is called a same-sign sector; otherwise, it is called an opposite-sign sector.
A submatrix (or block) of a Hamiltonian is called a same-sign block if it is expressed in a same-sign basis.  
Otherwise, it is referred to as an opposite-sign block.
\end{definition}

\paragraph{Example.}
The state \( \ket{+} = \tfrac{1}{\sqrt{2}}(\ket{0} + \ket{1}) \) is a same-sign state,  
while \( \ket{-} = \tfrac{1}{\sqrt{2}}(\ket{0} - \ket{1}) \) is an opposite-sign state.  
The computational basis \( \{ \ket{0}, \ket{1} \} \) is a same-sign basis of \( \mb{C}^2 \),  
and the Hadamard basis \( \{ \ket{+}, \ket{-} \} \) is an opposite-sign basis.

These definitions connect directly to stoquasticity.  
By the Perron--Frobenius theorem, the ground state of a stoquastic Hamiltonian (i.e., one with non-positive off-diagonal entries in a given basis) is a same-sign state in that basis~\cite{AQC-eq,general-PF2}.  
In particular, when expressed in the computational basis, the ground state of a stoquastic Hamiltonian is necessarily same-sign.  
By contrast, a non-stoquastic Hamiltonian may have a ground state that is either same-sign (\emph{Eventually Stoquastic}) or opposite-sign (\emph{Proper Non-stoquastic})~\cite{Choi2021}.  
In this paper we focus on the stoquastic case, corresponding to \( \Jxx = 0 \).

\subsection{Two Fundamental Bipartite Substructure Graphs}
\label{sec:dis-share-graphs}
Recall that an independent set is \emph{maximal} if no larger independent set contains it.  
Each maximal independent set corresponds to a local minimum of the energy function.  
A collection of maximal independent sets (\MLIS{}) all having the same size \( m \)  
corresponds to a set of \emph{degenerate} local minima with equal energy \( -m \).  
When the meaning is clear (i.e., they all have the same size), we refer to such a collection simply as a \DMS{},  
and its cardinality as the \emph{degeneracy}.  
In this work, we use the terms \emph{degenerate local minima} and \DMS{} interchangeably.

\subsection*{\MIC{}: Structured Form of a \DMS{}}
In its simplest structured form, a \DMS{} can be represented as a \MIC{}, 
namely a collection of mutually independent cliques whose vertices together generate exactly all the \MLIS{} in that \DMS{}.

\begin{definition}
A \MIC{} of size \( k \) consists of \( k \) \emph{independent cliques} (i.e., no edges exist between them),  
denoted as \( \text{Clique}(w_i, n_i) \),  
where \( w_i \) is the vertex weight and \( n_i \) is the clique size, for \( 1 \leq i \leq k \).  
Each maximal independent set in the corresponding \DMS{} is formed by selecting exactly one vertex from each clique.  
In this case, the degeneracy of the \MIC{} is given by
\(
\prod_{i=1}^{k} n_i.
\)
\end{definition}

When the cliques are \emph{dependent}---that is, edges may exist between them---we refer to the structure as an \MDC{},  
a \DMS{} formed by a set of dependent cliques.  
This represents a relaxed structure to which our analysis may be generalized.  

As explained in the main paper, we reduce the analysis to
a sequence of simpler bipartite substructures,  
each formed by a critical \MIC{} and the global maximum (\GM{}). 
 
\begin{mdframed}
Recall that each \MIC{} corresponds to a set of degenerate local minima (\LM{}) in the MIS--Ising energy landscape. 
The terms \LM{} and \MIC{} are thus used interchangeably. 
We use \GM{} to refer both to the (global) maximum independent set and to its corresponding global minimum in the energy landscape. 
In what follows, we use \LM{} and \GM{} to denote the \MIC{} and \GM{} in each such bipartite substructure, respectively.
\end{mdframed}


We consider two fundamental bipartite substructures:
\begin{itemize}
    \item \textbf{\Gdis}: The local minima (\LM{}) and the global minimum (\GM{}) are vertex-disjoint.
    \item \textbf{\Gshare}: The \GM{} shares exactly one vertex with each clique in the \LM{}.
\end{itemize}

We begin with the \emph{disjoint-structure graph} \( \Gdis = (V, E) \),  
in which the vertex set \( V \) is partitioned into left and right (disjoint) vertex sets,  
with the following structural properties:
\begin{itemize}
  \item The left component is defined by a set \( L = \{ C_1, \dots, C_{m_l} \} \) of \( m_l \) disjoint cliques,  
        each denoted \( C_i = \cl(w_i, n_i) \).  
        We let \( V_L = \bigcup_{i=1}^{m_l} C_i \) denote the full vertex set.
  \item The right component \( R \) consists of \( m_r \) independent vertices, each with weight \( w_r \). 
  \item Every vertex in \( V_L \) is adjacent to every vertex in \( R \).
\end{itemize}

In this paper we mainly focus on the unweighted MIS case,  
and assume uniform weights \( w_i = w_r = w \).  
Under the MIS--Ising mapping with these uniform weights,  
\( V_L \) corresponds to the degenerate local minima (\LM{}) with degeneracy \( \prod_{i=1}^{m_l} n_i \),  
while \( R \) defines the global minimum (\GM{}) with \( m_g = m_r \).
Each local minimum (in \LM{}) corresponds to a maximal independent set of size \( m_l \), and thus has energy \( -m_l \).  
The global minimum (\GM{}) has energy \( -m_g \).

We now define the \emph{shared-structure graph} \( \Gshare \),  
which differs from \( \Gdis \) in that each vertex in \( R \) is adjacent to all but one vertex in each clique of \( L \).  
This modification allows the \GM{} to include shared vertices from the cliques in \( L \),  
thereby introducing overlap with the \LM{}.
Structurally, \( L \) and \( R \) are defined exactly as in \( \Gdis \),  
but with the adjacency rule modified as above.  
Specifically, the global maximum \GM{} consists of one shared vertex from each clique \(C_i \in L\) together with all \(m_r\) independent vertices in \(R\), yielding a total size \(m_g = m_l + m_r\).
Figure~\ref{fig:dis-share-graph} illustrates both cases. For convenience, we write \( m := m_l \), dropping the subscript $l$ when no confusion arises.  
We assume \( \sum_{i=1}^{m} \sqrt{n_i} > m_g \),
so that an anti-crossing is induced by the competition between \( \LM \) and \( \GM \).

\begin{figure}[h]
  \centering
  \begin{subfigure}[b]{0.36\textwidth}
    \includegraphics[width=\textwidth]{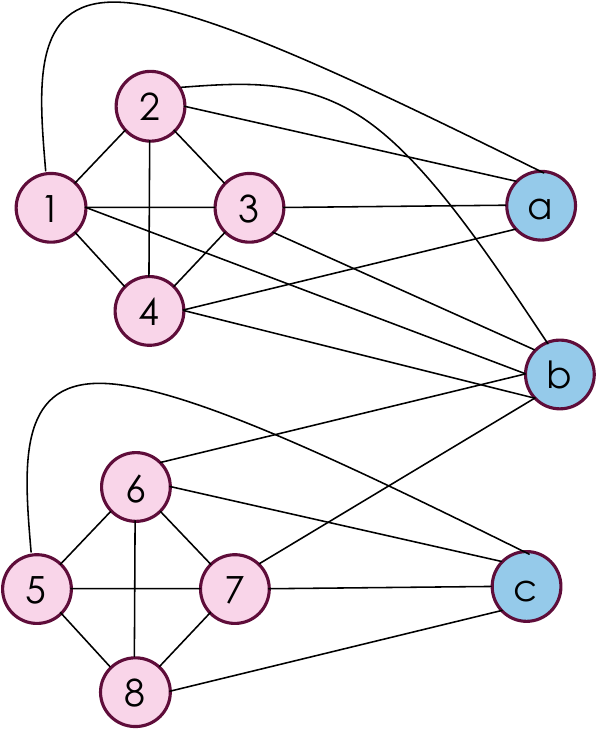}
    \caption{The \LM{} and \GM{} are vertex-disjoint.}
  \end{subfigure}
  \hfill
  \begin{subfigure}[b]{0.36\textwidth}
    \includegraphics[width=\textwidth]{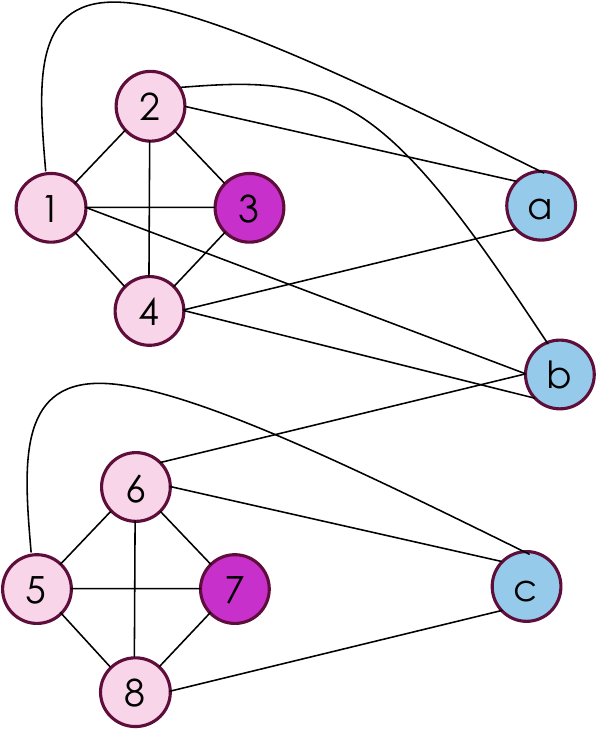}
    \caption{The \GM{} shares exactly one vertex with each clique in the \LM{}.}
  \end{subfigure}
\caption{
Example graphs illustrating the \Gdis{} and \Gshare{} structures.  
Recall that each \LM{} here is structurally a \MIC{}.  
(a) Disjoint-structure graph \Gdis: The set \( L \) consists of \( m_l = 2 \) disjoint cliques,  
each of size \( n_1 = n_2 = 4 \), with their vertices (pink) forming the local minima (\LM{}).  
The set \( R \) (blue) consists of \( m_r = 3 \) independent vertices,  
forming the global minimum (\GM{}).\\  
(b) Shared-structure graph \Gshare: The set \( L \) again consists of two cliques of size \( n_1 = n_2 = 4 \),  
with pink and purple vertices. The purple vertices (one per clique) are shared between the \LM{} and the \GM{}.  
The set \( R \) (blue) contains \( m_r = 3 \) independent vertices.  
The global minimum consists of all vertices in \( R \), together with the shared purple vertices in \( L \), giving \( m_g = 5 \).  
In both cases, edges between the pink vertices in \( L \) and all vertices in \( R \) are complete,  
though not all are shown for visual clarity.
}
\label{fig:dis-share-graph}
\end{figure}

\subsection{Anti-Crossing and the Two-Level Hamiltonian \( \mb{B}(w,x) \)}
\label{sec:anti-crossings}

The term \emph{anti-crossing} is sometimes used loosely, so we begin with a precise notion in the two-level case, then extend it to multi-level systems.
We also introduce a canonical two-level Hamiltonian whose eigensystem will be used throughout our analysis. 

\subsubsection{Level Repulsion vs.~Anti-Crossing}
\label{subsec:level-repulsion}
We begin by distinguishing the concept of an \emph{anti-crossing} (also called \emph{avoided-crossing}) from level repulsion.

Consider a generic two-level Hamiltonian of the form
\[
\mb{H}(x) :=
\begin{bmatrix}
e_1(x) & v(x) \\
v(x) & e_2(x)
\end{bmatrix},
\]
where \( e_1(x) \), \( e_2(x) \), and \( v(x) \) are real-valued functions of a parameter \( x \). 

The eigenvalues of this Hamiltonian are
\(
\lambda_{\pm}(x) = \tfrac{e_1(x) + e_2(x)}{2} \pm \tfrac{1}{2} \sqrt{(e_1(x) - e_2(x))^2 + 4v(x)^2},
\)
and the energy gap between them is
\(
\Delta(x) := \lambda_+(x) - \lambda_-(x) = \sqrt{(e_1(x) - e_2(x))^2 + 4v(x)^2}.
\)
The off-diagonal term \( v(x) \) induces \emph{level repulsion}: if \( v(x) \neq 0 \), then the eigenvalues never cross, and the gap \( \Delta(x) \) remains strictly positive.
Thus, assuming the off-diagonal coupling \( v(x) \) is nonzero, level repulsion is always present.

\begin{definition}
We say that an \emph{anti-crossing} occurs when the two
unperturbed energy levels $e_1(x)$ and $e_2(x)$ cross, i.e.,
$e_1(x_*) = e_2(x_*)$ for some $x_*$, and the off-diagonal coupling
$v(x_*) \neq 0$. 
In this case the eigenvalue curves form an anti-crossing with gap
\(
\Delta_{\min} = 2|v(x_*)|.
\)
\end{definition}
The size of the anti-crossing gap depends on $|v(x_*)|$: stronger coupling leads to a larger gap, while weaker coupling results in a narrower one.

By contrast, if the two diagonal entries \( e_1(x) \) and \( e_2(x) \) remain well separated for all \( x \), then the system exhibits level repulsion but not an anti-crossing. Figure~\ref{fig:B2wx} illustrates an example of level repulsion without an anti-crossing. 

The eigenvectors of the two-level Hamiltonian are given by
\[
\ket{\lambda_{-}(x)} = \cos\theta(x) \ket{0} + \sin\theta(x) \ket{1}, \quad
\ket{\lambda_{+}(x)} = -\sin\theta(x) \ket{0} + \cos\theta(x) \ket{1},
\]
where the mixing angle \( \theta(x) \) satisfies
\(
\tan(2\theta(x)) = \tfrac{2v(x)}{e_1(x) - e_2(x)}.
\)
Thus, near the anti-crossing point \( x = x_* \), the eigenstates interpolate between the unperturbed basis states.

\begin{remark}
The trigonometric expression for eigenvectors in terms of the mixing angle \( \theta(x) \),
is equivalent to the rational-form representation
\[
\ket{\lambda_{-}(x)} = \tfrac{1}{\sqrt{1 + \gamma(x)^2}} \left(\gamma(x) \ket{0} +  \ket{1} \right),\quad
\ket{\lambda_{+}(x)} = \tfrac{1}{\sqrt{1 + \gamma(x)^2}} \left( \ket{0} -\gamma(x) \ket{1} \right),
\]
where the two parametrizations are related by
\(
\gamma(x) = \tfrac{1}{\tan\theta(x)}, \text{and} \quad \tan(2\theta(x)) = \tfrac{2v(x)}{e_1(x) - e_2(x)}.
\)
This rational-form expression is particularly useful for our analysis, as it aligns directly with the basic matrix form introduced below.
\end{remark}

Our earlier work~\cite{Choi2021,Choi2020}, including the development of the original \DDD{} algorithm, was motivated by investigating the structural characteristics of eigenstates around the anti-crossing.

\subsubsection{Anti-crossing in a Multi-level System}
In a multi-level system, the notion of an anti-crossing extends naturally
by restricting the Hamiltonian to the two-dimensional subspace spanned by
the pair of eigenstates whose unperturbed energies intersect.
This reduction yields a $2 \times 2$ effective Hamiltonian that captures the
essential structure of the anti-crossing, including both the energy gap and
the interpolating behavior of the eigenstates.
Thus, the same framework as in the two-level case applies.

With this perspective, we refine the definition of an $(L,R)$-anti-crossing
given in \cite{Choi2021}. Recall that $E_0^A(t)$ denotes the ground state
energy of the Hamiltonian $\mb{H}_A(t)$ projected to the subspace spanned by
the subsets of $A$.
In particular, $E_0^{\LM}(t)$ and $E_0^{\GM}(t)$ denote the bare energies 
associated with $\LM$ (where $A = \bigcup_{M \in \LM} M$) and $\GM$, respectively.

\begin{definition}
  We say an anti-crossing is an $(L,R)$-anti-crossing at $t_*$ if there exist
  bare energies $E_0^L(t)$ and $E_0^R(t)$ such that:
  \begin{enumerate}
    \item $E_0^L(t)$ and $E_0^R(t)$ approximate the unperturbed energy levels
    of the effective $2 \times 2$ Hamiltonian describing the anti-crossing
    for $t \in [t_* - \delta, t_* + \delta]$ for some small $\delta>0$; and
    \item $E_0^L(t)$ and $E_0^R(t)$ cross at $t_*$, i.e.\ $E_0^L(t_*) = E_0^R(t_*)$.
  \end{enumerate}
\end{definition}
See Figure~\ref{fig:AC} for an illustration.

\begin{figure}[!htbp]
  \centering
  \includegraphics[width=0.5\textwidth]{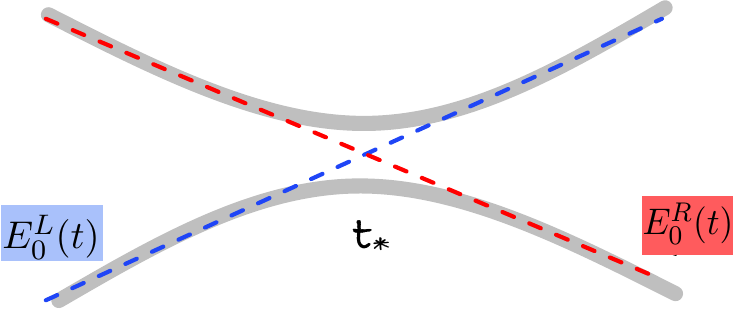}
  \caption{
Schematic of an $(L,R)$-anti-crossing.
Dashed lines: bare energies $E_0^L(t)$ and $E_0^R(t)$ crossing at $t_*$.
Solid gray curves: the two lowest eigenvalues of the Hamiltonian, showing
the avoided crossing that originates from this bare crossing.
  }
  \label{fig:AC}  
\end{figure}


\begin{remark}
In the companion work~\cite{Choi-Beyond}, the absence of an anti-crossing can 
be inferred directly from the non-crossing of the corresponding bare energy 
levels, without explicitly constructing the effective \( 2 \times 2 \) 
Hamiltonian. By contrast, in the present work we evaluate the anti-crossing gap 
by constructing the effective Hamiltonian explicitly.
\end{remark}

\begin{remark}
In~\cite{JarretJordan2014}, it was observed that an exponentially small gap occurs ``if and only if a ground state consists of two `lobes' with exponentially small amplitude in the region between them.'' 
This two-lobe structure corresponds naturally to the two arms of an $(L,R$)-anti-crossing. 
[As shown in the right panel of Figure~\ref{fig:anticrossing-combined}, the ground state undergoes a sharp exchange at the anti-crossing point. 
If one plots the ground state wavefunction at the anti-crossing, it exhibits two lobes: one localized on the $L$ region and the other on $R$ region (with the suppressed region between them quantified by the overlap \( g_0 = \braket{\bar{L}_0 | \bar{R}_0} \)).]
Finally, we remark that while the Cheeger inequalities (used, e.g., in~\cite{AQC-eq} to give a large lower bound on the spectral gap) appear to depend only on the ground state wavefunction, a closer examination reveals that it actually bound the first excited state implicitly.
In this sense, the anti-crossing gap size derived from a \( 2 \times 2 \) effective Hamiltonian is more descriptive and also gives a tighter bound.
\end{remark}

\subsubsection{The Basic Matrix \( \mb{B}(w,x) \): Eigenvalues and Eigenstates}
\label{subsec:basic-matrix}

We define the following effective two-level Hamiltonian, which will serve as a basic building block for our analysis throughout the paper:
\begin{equation}
\mb{B}(w, x) :=
\begin{bmatrix}
-w & -\tfrac{1}{2}x \\
-\tfrac{1}{2}x & 0
\end{bmatrix},
\label{eq:B-matrix}
\end{equation}
where \( w = w(t) \) and \( x = x(t) \) are real-valued parameters, typically derived from problem Hamiltonians and driver strengths. This is a special case of a spin-\( \tfrac{1}{2} \) system, with analytic eigenstructure.
The eigenvalues of $\mb{B}(w,x)$ are
\begin{equation}
\beta_k = -\tfrac{1}{2}\!\left(w + (-1)^k \sqrt{w^2 + x^2}\right), \quad k=0,1,
\label{eq:B-evals}
\end{equation}
with normalized eigenvectors
\begin{equation}
\ket{\beta_0} = \tfrac{1}{\sqrt{1+\gamma^2}} \bigl(\gamma\ket{0} + \ket{1}\bigr), 
\quad
\ket{\beta_1} = \tfrac{1}{\sqrt{1+\gamma^2}} \bigl(\ket{0} - \gamma\ket{1}\bigr),
\label{eq:B-evecs}
\end{equation}
where the mixing coefficient is
\(
\gamma = \tfrac{x}{w+\sqrt{w^2+x^2}}.
\)

Figure~\ref{fig:B2wx} visualizes the eigenspectrum and ground state behavior under variation of \( x \) and \( w \).


\begin{figure}[h!]
  \centering

    \includegraphics[width=0.58\textwidth]{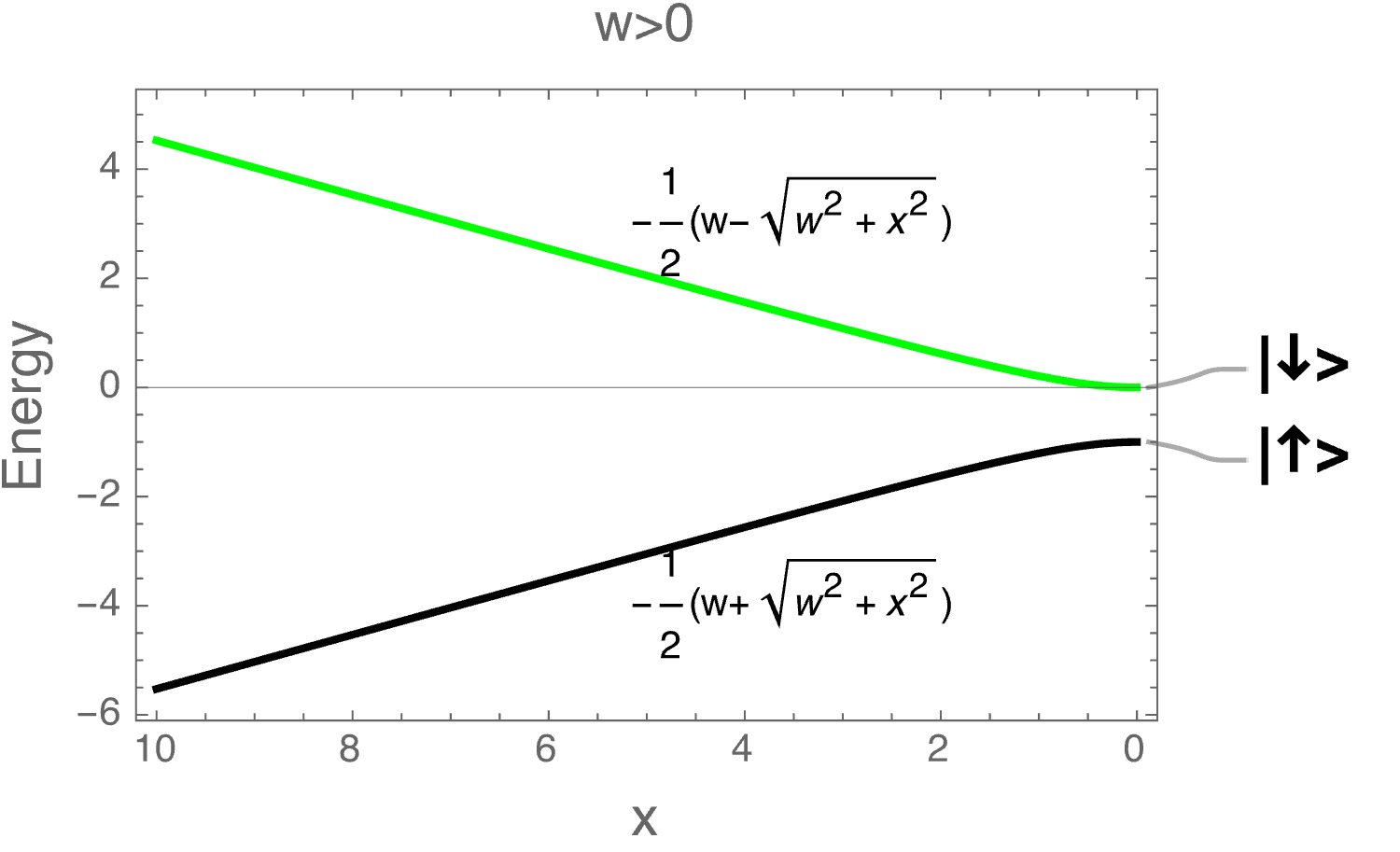}

  \caption{Eigenvalues of the two-level Hamiltonian \( \mb{B}(w, x) \), where \( \beta_k = -\tfrac{1}{2} \left( w + (-1)^k \sqrt{w^2 + x^2} \right), \; k = 0, 1 \). $w=1$.
    The ground state energy (\( \beta_0 \)) is shown in black, and the first excited state energy (\( \beta_1 \)) is shown in blue.
    The energy gap widens as \( x \) increases---there is no anti-crossing in this case.
    The ground state is 
    \( \ket{\beta_0} = \tfrac{1}{\sqrt{1 + \gamma^2}} \left(\gamma  \ket{0} + \ket{1} \right) \), with \( \gamma = \tfrac{x}{w + \sqrt{w^2 + x^2}} \). \( \ket{0} = \ket{\uparrow} \) and \( \ket{1} = \ket{\downarrow} \).}
  \label{fig:B2wx}
\end{figure}

\section{Reduction to Same-Sign Block in the Disjoint-Structure Case}
\label{sec:approach}
In this section, we establish that it is sufficient to restrict our analysis to 
the same-sign block and to focus on the disjoint case.
We first express the full Hamiltonian in block form with respect to the
    low- and high-energy subspaces of $\mc{V}_L$, and justify that for our
    purposes it suffices to analyze the projected Hamiltonian
    $\mb{H}^{\ms{low}}$ acting on $\mc{L}_{-}$ in Section~\ref{sec:reduct-low}.
    Then we describe the angular-momentum based decomposition of $\mb{H}^{\ms{low}}$ in
    in Section~\ref{sec:am-decomp}.

\subsection{Reduction to the Analysis of the Low-Energy Hamiltonian}
\label{sec:reduct-low}
Let $\mc{V}_L$ be the Hilbert space of all vertices in the union of cliques in $L$.
This space decomposes into:
\begin{itemize}
    \item a \emph{low-energy subspace}, spanned by all independent-set states within $L$;
    \item a \emph{high-energy subspace}, spanned by all dependent-set states within $L$.
\end{itemize}
We define
\[
\mc{L}_{-} := (\text{low-energy subspace of }\mc{V}_L) \otimes \mc{V}_R,
\quad
\mc{L}_{+} := (\text{high-energy subspace of }\mc{V}_L) \otimes \mc{V}_R.
\]
Here, $\mc{L}_{+}$ consists of states containing at least one intra-clique edge
and thus incurring the $\Jzz^{\ms{clique}}$ penalty.

Let $\Pi_{-}$ and $\Pi_{+}$ be the orthogonal projectors onto
$\mc{L}_{-}$ and $\mc{L}_{+}$, respectively.
With respect to this decomposition, the system Hamiltonian can be written in block form:
\[
  \mb{H} \;=\;
  \begin{pmatrix}
    \mb{H}^{\ms{low}} & \mb{V} \\
    \mb{V}^\dagger & \mb{H}^{\ms{high}}
  \end{pmatrix},
\]
where $\mb{H}^{\ms{low}}$ and $\mb{H}^{\ms{high}}$ are the projections into
$\mc{L}_{-}$ and $\mc{L}_{+}$, respectively.

In the main paper, we showed that by the end of Stage~0, if
$\Jzz^{\ms{clique}}$ is chosen sufficiently large so that all states in
$\mc{L}_{+}$ lie well above those in $\mc{L}_{-}$ in energy, the Stage~1
evolution is governed exactly by the restricted Hamiltonian
$\Heff = \mb{H}^{\ms{low}}$.
The subsequent analysis can therefore focus entirely on $\mb{H}^{\ms{low}}$.

In the present $\Jxx = 0$ setting, we have not identified the $\XX$-driver
graph explicitly, and we set $\Jzz^{\ms{clique}} = \Jzz$, so we cannot
assume an energetic cut-off between $\mc{L}_{+}$ and $\mc{L}_{-}$.
Nevertheless, removing higher-energy subspaces from a Hamiltonian cannot
\emph{decrease} the spectral gap, since doing so can only eliminate potential
low-lying excitations.
Consequently, if $\mb{H}^{\ms{low}}$ already has a small gap, the full
Hamiltonian $\mb{H}$ can have a gap no larger.
Because the ground state lies in $\mc{L}_{-}$ prior to the anti-crossing,
we have
\(
  \Delta(\mb{H}) \ \le\ \Delta(\mb{H}^{\ms{low}}).
\)
This allows us to use $\mb{H}^{\ms{low}}$ to obtain an \emph{upper bound}
on the spectral gap of the full Hamiltonian.
Thus, in both cases---whether with or without the $\XX$-driver---the
subsequent analysis focuses on $\Heff = \mb{H}^{\ms{low}}$.

Having reduced our focus to the low-energy subspace $\mc{L}_{-}$, 
we next construct an angular momentum basis that reveals the block 
structure of $\mb{H}^{\ms{low}}$ and sets up the tunneling analysis.

\subsection{Angular-Momentum Based Decomposition of the $\mb{H}^{\ms{low}}$}
\label{sec:am-decomp}
Our construction starts from the local angular momentum structure of a single 
clique, restricted to the low-energy subspace (see also ~\cite{Kerman-Tunnel}). 
We then extend this to a global angular-momentum basis for $\mc{L}_{-}$, 
in which $\mb{H}^{\ms{low}}$ becomes block-diagonal, separating 
\emph{same-sign} and \emph{opposite-sign} blocks.

The construction proceeds hierarchically:
\begin{itemize}
    \item \textbf{Single-clique decomposition.}  
    We analyze the low-energy subspace of a single clique in detail, 
    introducing its total angular momentum basis, identifying the 
    same-sign and opposite-sign states, and deriving the block structure 
    of the restricted Hamiltonian (Section~\ref{sec:single-clique}).
    
    \item \textbf{Global block decomposition.}
    By tensoring the single-clique bases over all cliques in $L$ and
    combining with $\mc{V}_R$, we obtain a global basis in which
    $\mb{H}^{\ms{low}}$ takes a block-diagonal form (Section~\ref{sec:block-low}).
    \end{itemize}

\subsubsection{Single-clique Decomposition}
\label{sec:single-clique}

In this section, we focus on analyzing the Hilbert space associated with a single clique.
We begin with a single clique, since each clique in $L$ exhibits the 
same local angular momentum structure; this will serve as the building 
block for the global block decomposition in Section~\ref{sec:block-low}.

To fix notation, let \( G_c = (V, E) \) be a clique,  
where \( V = \{1, \dots, n_c\} \) is the set of vertices, and  
\( E = \{(i,j) : i < j, \; i, j \in V\} \) is the set of edges.  
We assume all vertices in the clique have the same weight \( w_c \),  
i.e., \( w_i = w_c \) for all \( i \in V \).

The Hilbert space \( \mathcal{V}_c \) for a clique of size \( n_c \) consists of \( 2^{n_c} \) computational basis states, each corresponding to a binary string of length \( n_c \). Among these, only \( n_c + 1 \) basis states correspond to independent sets:
\[
\begin{array}{l}
\bst{0} = \texttt{00\ldots0} \\
\bst{1} = \texttt{10\ldots0} \\
\bst{2} = \texttt{01\ldots0} \\
\quad\vdots \\
\bst{n_c} = \texttt{00\ldots1}
\end{array}
\]
with 
\(
\mathbb{N}_{\text{ind}} = \left\{
\bst{0},
\bst{1},
\ldots,
\bst{n_c}
\right\},
\)
where each \( \bst{i} \) is a binary string of length \( n_c \) with a single 1 in position \( i \) (and 0s elsewhere), and \( \bst{0} \) is the all-zero string.

The energy associated with each singleton state is \( -w_c \), while the empty set has energy 0.  
In contrast, the energy of all other bit strings---which correspond to dependent sets---is at least \( (\Jzz^{\text{clique}} - 2w_c) \).  
Hence, the Hilbert space admits a natural decomposition:
\begin{equation}
  \label{eq:clique-decomposition}
\mathcal{V}_c = \SLE \oplus \mathcal{H}^{\text{dep}},
\end{equation}
where \( \SLE \) is the low-energy subspace spanned by \( \mathbb{N}_{\text{ind}} \),  
and \( \mathcal{H}^{\text{dep}} \) is the high-energy subspace spanned by the dependent-set states.

In the following, we identify how \( \SLE \) decomposes into distinct angular momentum sectors, describe the corresponding block structure of the restricted Hamiltonian,  
and provide exact expressions for its eigenvalues and eigenvectors.  

\subsubsection*{Low-Energy Subspace in Total Angular Momentum Basis $\Bc$}

\begin{mdframed}
\begin{lemma}[Lemma 6.1 in \cite{Choi-Beyond}]
  \label{single-clique-basis}
  The total angular momentum basis for \( \SLE \) consists of the states:
  \begin{align}
  \label{eq:Bc}
  (\Bc) \left\{
  \begin{array}{llll}
    \ket{s,-(s-1)}, &\ket{1,(s-1),-(s-1)},&\ldots&\ket{n_c-1,(s-1),-(s-1)}\\
    \ket{s, -s} & & &
  \end{array}
  \right.
  \end{align}
  where \( s = \tfrac{n_c}{2} \) is the total spin.

  Explicitly:
  \begin{itemize}
    \item \( \ket{s, -s} = \ket{\bst{0}} \), representing the empty set.
    \item \( \ket{s,-(s-1)} = \tfrac{1}{\sqrt{n_c}} \sum_{i=1}^{n_c} \ket{\bst{i}} \), a uniform superposition of all singletons with positive amplitudes.
    \item \( \ket{k, s-1, -(s-1)} \), for \( k = 1, \ldots, n_c - 1 \), consists of a superposition of singleton states with both positive and negative amplitudes.
  \end{itemize}
  Thus, \( \ket{s, -s} \) and \( \ket{s,-(s-1)} \) are same-sign states, while \( \ket{k, s-1, -(s-1)} \) are opposite-sign states.
\end{lemma}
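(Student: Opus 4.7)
The plan is to decompose $\SLE$ via the $S_{n_c}$ permutation symmetry of the clique. Because all vertices carry equal weight $w_c$ and are pairwise adjacent, the Hamiltonian restricted to $\SLE$ commutes with arbitrary site permutations, so $\SLE$ carries a representation of the collective spin algebra and decomposes into total-spin sectors. Since $\dim \SLE = n_c + 1$, the angular-momentum decomposition must yield exactly $n_c + 1$ labeled states, matching the cardinality of $\Nl$.

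First I would stratify $\Nl$ by the total $S_z$ eigenvalue. With the vacuum $\ket{\bst{0}}$ sitting at the fully aligned low-weight end, the $S_z = -s$ sector of $\SLE$ is one-dimensional and consists of $\ket{\bst{0}}$ alone, while the $S_z = -(s-1)$ sector is $n_c$-dimensional and is spanned by the singleton states $\ket{\bst{1}}, \ldots, \ket{\bst{n_c}}$. Because $\ket{\bst{0}}$ is the unique state in $\SLE$ at the extremal $S_z$, it must be the lowest-weight vector of the spin-$s$ multiplet, giving $\ket{s,-s} = \ket{\bst{0}}$. Applying the collective raising operator $S^{+} = \sum_i \sigma_i^{+}$ to $\ket{s,-s}$ then produces $\sum_{i=1}^{n_c} \ket{\bst{i}}$ up to a ladder coefficient, which after normalization yields the second same-sign state $\ket{s,-(s-1)} = \tfrac{1}{\sqrt{n_c}}\sum_{i=1}^{n_c} \ket{\bst{i}}$.

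Next, the orthogonal complement of $\ket{s,-(s-1)}$ inside the singleton sector has dimension $n_c - 1$, and these states must sit in distinct copies of the spin-$(s-1)$ representation, since the spin-$s$ multiplet contributes only one state per $S_z$ level. This accounts for $2 + (n_c - 1) = n_c + 1$ basis vectors, matching $\dim \SLE$ exactly and justifying the multiplicity index $k = 1, \ldots, n_c - 1$; the particular orthonormal basis chosen inside this $(n_c-1)$-dimensional subspace is not canonical, but the lemma depends only on sign structure. To see that every $\ket{k,s-1,-(s-1)}$ is opposite-sign, I would invoke a short orthogonality argument: a nonzero real vector orthogonal to the strictly positive uniform state $\ket{s,-(s-1)}$ must contain amplitudes of both signs, since otherwise its inner product with the uniform positive vector would be strictly nonzero. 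I do not anticipate a major technical obstacle here; the only point requiring mild care is fixing spin conventions consistently (orientation of spin-up versus spin-down relative to the MIS encoding, and therefore the sign of $S^{+}$) so that $\ket{\bst{0}}$ indeed sits at $S_z = -s$, after which the angular-momentum algebra is entirely standard.
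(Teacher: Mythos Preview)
Your proposal is correct and matches the approach the paper alludes to. The paper does not include a proof of this lemma here (it is cited from the companion work~\cite{Choi-Beyond}), noting only that the basis ``can be derived either from the Clebsch--Gordan coefficients or directly from the relationships established in the proof''; your $S_z$-stratification, application of $S^+$ to $\ket{\bst{0}}$ to generate $\ket{s,-(s-1)}$, dimension count for the $n_c-1$ spin-$(s-1)$ copies, and orthogonality argument for the opposite-sign property constitute exactly this direct derivation.
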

\end{mdframed}

\paragraph{Remark (Basis Reordering).}  
For convenience, we reorder the basis states in \( \Bc \) as follows:
\[
(\Bc')\quad \ket{s, -(s - 1)},\, \ket{s, -s},\, \ket{1, s - 1, -(s - 1)},\, \dots,\, \ket{n_c - 1, s - 1, -(s - 1)}.
\]
That is, we swap the order of the two same-sign basis states.
This ordering simplifies the representation of operators in the next steps.

\paragraph{Basis Transformation.}  
The transformation between the computational basis \( \{ \ket{\bst{i}} \} \) and the angular momentum basis $(\Bc')$ can be derived either from the Clebsch--Gordan coefficients or directly from the relationships established in the proof. Specifically:
\begin{itemize}
  \item The state \( \ket{s, -(s - 1)} \) is a uniform superposition over all singleton states \( \{ \ket{\bst{i}} \}_{i=1}^{n_c} \).
  \item The remaining states \( \ket{k, s - 1, -(s - 1)} \), for \( k = 1, \dots, n_c - 1 \), form an orthogonal complement to \( \ket{s, -(s - 1)} \) within the subspace spanned by \( \{ \ket{\bst{1}}, \dots, \ket{\bst{n_c}} \} \).
\end{itemize}

We denote the basis transformation matrix from the computational basis to the angular momentum basis by \( \Uc \).

Although the present analysis will later specialize to the 
$\Jxx = 0$ case, we retain the $\Jxx$ term throughout the 
single-clique derivation. This ensures that the resulting 
expressions apply to the general setting and remain consistent 
with the corresponding formulas in the companion paper.

\subsubsection*{Spin Operators in the \(\Bc'\) Basis}
Consider the spin operators on \( \mathcal{V}_c \):
\[
\Sop{\Z} = \tfrac{1}{2}\sum_{i=1}^{n_c} \sigma_i^z, \quad
\Sop{\sZ} = \sum_{i=1}^{n_c} \shz{i}, \quad
\Sop{\X} = \tfrac{1}{2}\sum_{i=1}^{n_c} \sigma_i^x, \quad
\Sop{\XX} = \tfrac{1}{4} \sum_{ij \in \edge(G_{\text{driver}})} \sigma_i^x \sigma_j^x,
\]
where \( G_{\text{driver}} = G_c \).

We project these operators onto the low-energy subspace \( \SLE \) using the projection operator \( \PLE \), and then transform them into the \( \Bc' \) basis via the basis transformation \( \Uc \). For any operator \( \mathrm{X} \), we use a bar to denote the operator:
\[
\bar{\mathrm{X}} = \Uc^{\dagger} \PLE \mathrm{X} \PLE \Uc.
\]

\begin{mdframed}
  \begin{theorem}[Theorem 6.2 in \cite{Choi-Beyond}]
  \label{thm:Sop}
  The restricted operators in the \(\Bc'\) basis are block-diagonal and given explicitly by:
  \begin{align}
    \left\{
    \begin{array}{ll}
    & \bar{\Sop{\sZ}} = \shz{} \oplus 1 \oplus \cdots \oplus 1, \\[5pt]  
    & \bar{\Sop{\X}} =  \tfrac{\sqrt{n_c}}{2} \sigma^x \oplus 0 \oplus \cdots \oplus 0, \\[5pt]
      & \bar{\Sop{\XX}} = \left(\tfrac{n_c - 1}{4}\right)\shz{} \oplus \left(-\tfrac{1}{4}\right) \oplus \cdots \oplus \left(-\tfrac{1}{4}\right).
    \end{array}
    \right.
    \label{eq:transformed-operators}
  \end{align}
  where \( \shz{} \) and \( \sigma^x \) act on the effective spin-\( \tfrac{1}{2} \) (two-dimensional same-sign) subspace, while the scalars act on spin-0 (one-dimensional opposite-sign) subspaces.
\end{theorem}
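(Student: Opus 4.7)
The plan is to verify each of the three operator formulas in~(\ref{eq:transformed-operators}) by evaluating $\PLE \mathrm{X} \PLE$ directly on the explicit basis vectors supplied by Lemma~\ref{single-clique-basis}, then reading off the block structure in the $\Bc'$ ordering. The whole computation lives inside the $(n_c+1)$-dimensional span of $\{\ket{\bst{0}},\ket{\bst{1}},\ldots,\ket{\bst{n_c}}\}$, and the essential observation is that each spin operator acts on this subspace in a simple combinatorial way once the components lying in $\mathcal{H}^{\ms{dep}}$ are discarded by $\PLE$.

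I would first treat $\Sop{\sZ}$: since it counts the size of the independent set represented by a computational basis state, $\Sop{\sZ}\ket{\bst{0}}=0$ and $\Sop{\sZ}\ket{\bst{i}}=\ket{\bst{i}}$ for $i\ge 1$, so the restricted operator is the identity on the entire singleton span and zero on $\ket{\bst{0}}$. This yields eigenvalue $0$ on $\ket{s,-s}$ and eigenvalue $1$ on $\ket{s,-(s-1)}$ together with every opposite-sign state, which is precisely $\shz{}\oplus 1\oplus\cdots\oplus 1$. Next, for $\Sop{\X}=\tfrac12\sum_i\sigma^x_i$, a single bit flip either toggles between the vacuum and a singleton, $\sigma^x_i\ket{\bst{0}}=\ket{\bst{i}}$ and $\sigma^x_i\ket{\bst{i}}=\ket{\bst{0}}$, or else produces a two-element dependent state that $\PLE$ annihilates. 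Summing and collecting, $\PLE\Sop{\X}\PLE$ couples $\ket{\bst{0}}$ to the uniform singleton superposition with weight $\tfrac{\sqrt{n_c}}{2}$ and sends every zero-sum singleton superposition to $0$; in $\Bc'$ this reads as $\tfrac{\sqrt{n_c}}{2}\sigma^x$ on the same-sign pair and $0$ on each opposite-sign slot.

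The substantive step is $\Sop{\XX}$. The key combinatorial fact is that a double flip $\sigma^x_i\sigma^x_j$ preserves $\SLE$ only when exactly one of $i,j$ matches the single occupied vertex, in which case it hops to the other singleton; on $\ket{\bst{0}}$ every double flip exits $\SLE$. Consequently $\PLE\Sop{\XX}\PLE\ket{\bst{0}}=0$, while for $k\ge 1$ one obtains $\PLE\Sop{\XX}\PLE\ket{\bst{k}}=\tfrac14\sum_{l\ne k}\ket{\bst{l}}$, an ``all-to-all minus diagonal'' hopping on the singleton span. Applied to the uniform superposition this gives eigenvalue $\tfrac{n_c-1}{4}$ on $\ket{s,-(s-1)}$, and applied to any zero-sum state $\sum_i c_i\ket{\bst{i}}$ it yields $\tfrac14\sum_l\bigl(\sum_{i\ne l}c_i\bigr)\ket{\bst{l}} = -\tfrac14\sum_l c_l\ket{\bst{l}}$, showing that every opposite-sign state is an eigenstate with eigenvalue $-\tfrac14$.

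The only point requiring care --- and the main obstacle, such as it is --- is this last identification: turning the off-diagonal hopping matrix on the singleton span into $-\tfrac14 I$ on the opposite-sign block relies precisely on the zero-sum constraint that characterizes those states (i.e., their orthogonality to $\ket{s,-(s-1)}$). Once that algebraic reduction is in hand, each operator transparently decomposes into a two-dimensional same-sign block governed by $\shz{}$ or $\sigma^x$ together with a trivial spin-$0$ action on each of the $n_c-1$ opposite-sign sectors, which is exactly the form claimed in~(\ref{eq:transformed-operators}).
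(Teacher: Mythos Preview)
Your proof is correct. The present paper does not actually prove this theorem---it is quoted from the companion work~\cite{Choi-Beyond}---so there is no in-paper argument to compare against; your direct computational verification on the explicit basis of Lemma~\ref{single-clique-basis}, using the zero-sum characterization of the opposite-sign states to diagonalize the all-to-all singleton hopping for $\Sop{\XX}$, is a clean and self-contained route to the claimed block structure.
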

\end{mdframed}

\subsubsection*{Decomposition into Spin-\(\tfrac{1}{2}\) and Spin-$0$ Components}
\label{sec:spin-structure}
The transformed operators given by Theorem~\ref{thm:Sop} offer a clear physical interpretation:
the low-energy subspace \( \SLE \) decomposes into a direct sum consisting of a single effective spin-\(\tfrac{1}{2}\) subsystem,
together with \( n_c - 1 \) (opposite-sign) spin-$0$ subsystems:
\begin{align}
  \SLE = \left[\tfrac{1}{2}\right]_{n_c} \oplus \underbrace{0 \oplus \cdots \oplus 0}_{n_c - 1}.
  \label{eq:single-clique}
\end{align}

The effective spin-\(\tfrac{1}{2}\) subsystem \( \left[\tfrac{1}{2}\right]_{n_c} \) is spanned by the two same-sign basis states:
\(
\left|\tfrac{1}{2}, -\tfrac{1}{2}\right\rangle = |s, -(s-1)\rangle, \quad 
\left|\tfrac{1}{2}, \tfrac{1}{2}\right\rangle = |s, -s\rangle.
\)
The spin-$0$ components correspond to the opposite-sign basis vectors:
\(
|k, s - 1, -(s - 1)\rangle,\) for \(k = 1, \dots, n_c - 1.
\)

Correspondingly, the Hamiltonian decomposes into a same-sign two-dimensional effective spin-\(\tfrac{1}{2}\) block and \( n_c - 1 \) opposite-sign spin-0 blocks.  
The system Hamiltonian \( \mb{H}_1 \), which governs the evolution during Stages~1 and~2, when restricted to the low-energy subspace \( \SLE \) of the clique \( G_c \), takes the form
\[
\bar{\mb{H}_1} 
= -\mt{x}\, \bar{\Sop{\X}}
+ \mt{jxx}\, \bar{\Sop{\XX}} 
- w_c\, \bar{\Sop{\sZ}}.
\]

Substituting the operator expressions from Eq.~\ref{eq:transformed-operators} in Theorem~\ref{thm:Sop}, we obtain
\begin{align}
\bar{\mb{H}_1} &= 
\left(
- \tfrac{\sqrt{n_c}}{2}\, \mt{x}\, \sigma^x
+ \left( - w_c + \tfrac{n_c - 1}{4} \mt{jxx} \right)\shz{}
\right)
\oplus
\underbrace{
\left[
- \left( w_c + \tfrac{1}{4} \mt{jxx} \right)
\right]
\oplus \cdots \oplus
\left[
- \left( w_c + \tfrac{1}{4} \mt{jxx} \right)
\right]
}_{n_c - 1}\\
&= \mb{B}(\mt{\weff_c}, \sqrt{n_c}\mt{x})
\oplus 
\underbrace{[-(w_c + \tfrac{1}{4} \mt{jxx})] 
\oplus \cdots 
\oplus [- (w_c + \tfrac{1}{4} \mt{jxx})]}_{n_c - 1}
\label{eq:Hsc}
\end{align}
where the effective weight is defined as
\(
\mt{\weff_c} = w_c - \tfrac{n_c - 1}{4} \mt{jxx}.
\)

An illustration of this basis transformation is provided in Figure~\ref{fig:M4},
which shows how the original product basis
is transformed into a direct-sum decomposition via total angular momentum.

\begin{figure}[h!]
  \centering
  \includegraphics[width=0.8\textwidth]{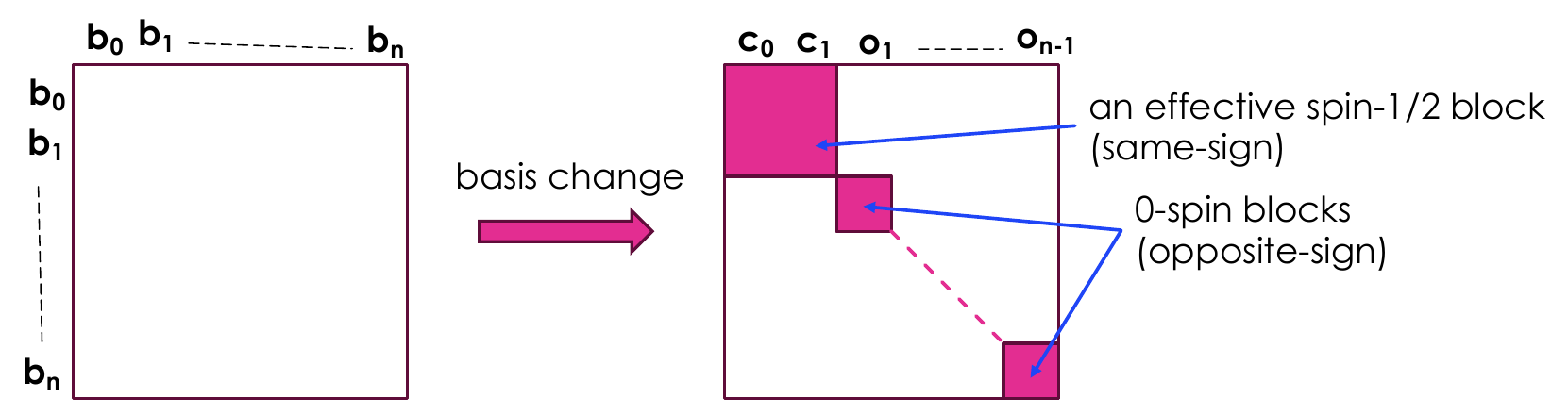}
  \caption{
    Basis transformation of \( \SLE \) from the product space  \( \{ \bst{0}, \bst{1}, \dots, \bst{n} \} \) to the direct-sum space (angular momentum basis).
    The same-sign states \( \{{\bf c_0, c_1} \} \) are the basis of the effective spin-\(\tfrac{1}{2}\) subspace, while the opposite-sign states \( \{ {\bf o_1, \dots, o_{n - 1} }\} \)
    are the bases of the spin-$0$ subspaces. The Hamiltonian matrix decomposes accordingly into a same-sign block and \( n - 1 \) opposite-sign blocks.
  }
  \label{fig:M4}
\end{figure}

\subsubsection*{Full Spectrum}
\label{sec:see-saw-spectrum}
Since the eigensystem of $\mb{B}(w,x)$ is known analytically
(Eqs.~\eqref{eq:B-evals}, \eqref{eq:B-evecs}),
the full spectrum and eigenstates of the Hamiltonian $\bar{\mb{H}_1}$
for the single clique $G_c$ are also known analytically.
In particular, the eigenvalues of the same-sign block \( \mb{B}(\mt{\weff_c}, \sqrt{n_c} \mt{x}) \) are:
\begin{equation}
\beta_k = -\tfrac{1}{2}\!\left(\mt{\weff_c} + (-1)^k \sqrt{[\mt{\weff_c}]^2 + n_c[\mt{x}]^2}\right), \quad k=0,1,
\label{eq:beta-value}
\end{equation}
with corresponding eigenvectors:
\begin{align}
\begin{cases}
\ket{\beta_0} = \tfrac{1}{\sqrt{1 + \gamma^2}} \left( \gamma \ket{0} + \ket{1} \right), \\
\ket{\beta_1} = \tfrac{1}{\sqrt{1 + \gamma^2}} \left( \ket{0} - \gamma \ket{1} \right),
\end{cases}
\label{eq:beta-vector}
\end{align}
with the mixing coefficient
\begin{align}
  \gamma = \tfrac{ \sqrt{n_c}\, \mt{x} }{ |\mt{\weff_c}| + \sqrt{ \left[\mt{\weff_c}\right]^2 + n_c \left[\mt{x}\right]^2 } }.
  \label{eq:gamma}
\end{align}
Note $\gamma$, $\beta_0$ and $\beta_1$ are all time-dependent, through their dependence on $\mt{x}$.

To summarize, the low-energy subspace of a single clique decomposes into 
a two-dimensional same-sign (spin-$\tfrac{1}{2}$) block and $n_c - 1$ 
one-dimensional opposite-sign (spin-0) blocks, with explicit operator 
representations in the $\Bc'$ basis. In the next subsection, we extend 
this construction to all cliques in $L$ to obtain the full block 
decomposition of $\mb{H}^{\ms{low}}$.

\subsubsection{Block Decomposition of $\mb{H}^{\ms{low}}$}
\label{sec:block-low}

Using the single-clique angular momentum basis from 
Section~\ref{sec:single-clique}, we tensor over all cliques in $L$ to 
form a global basis for $\mc{L}_{-}$ in which $\mb{H}^{\ms{low}}$ is 
block-diagonal, details are described in
Section 9.1 of~\cite{Choi-Beyond}.

The angular momentum basis is constructed locally within each clique
using projection and transformation operators (Section~\ref{sec:single-clique}),
and then combined into a global basis $\Ba$ by tensoring over all cliques in $L$
with the identity on $\mc{V}_R$.
This preserves the tensor-product structure and yields a block-diagonal form
separating the same-sign and opposite-sign sectors.

The decomposition emerges hierarchically:
each clique yields one same-sign sector (effective spin-$\tfrac{1}{2}$)
and several opposite-sign sectors (spin-0).
At the next level, the cliques in $L$---forming the \MIC{}---define a
bare subsystem with a same-sign sector $\mc{C}^{\ms{bare}}$ and
opposite-sign sectors $\mc{W}^{\ms{bare}}, \mc{Q}^{\ms{bare}}$.
Finally, the full same-sign and opposite-sign sectors are
\[
\mc{C} = \mc{C}^{\ms{bare}} \otimes \left( \mathbb{C}^2 \right)^{\otimes m_r},\quad
\mc{W} = \mc{W}^{\ms{bare}} \otimes \left( \mathbb{C}^2 \right)^{\otimes m_r},\quad
\mc{Q} = \mc{Q}^{\ms{bare}} \otimes \left( \mathbb{C}^2 \right)^{\otimes m_r},
\]
where $m_r = |R|$.
The full Hamiltonian then decomposes into $\mb{H}_{\mc{C}}$,
$\mb{H}_{\mc{Q}}$, and $\mb{H}_{\mc{W}}$, which are decoupled in the disjoint
case and coupled in the shared case.

\subsubsection*{Confinement in the $\Jxx = 0$ Case}
In the $\Jxx = 0$ case, the opposite-sign blocks remain at higher energy
and are dynamically irrelevant before the anti-crossing.
The evolving ground state is confined to the same-sign block
$\mc{C}$, so the anti-crossing occurs entirely within it.

As shown in Corollary 9.2 of~\cite{Choi-Beyond},
the same-sign block $\mb{H}_{\mc{C}}$ has the unified form
\[
\mb{H}_{\mc{C}} =
\sum_{i \in L} \mb{B}_i \left( \mt{\weff_i},\, \sqrt{n_i} \mt{x} \right)
+ \sum_{j \in R} \mb{B}_j \left( w_j,\, \mt{x} \right)
+ \Jzz \sum_{(i,j) \in L \times R} f_i^{\ms{C}} \shz{i} \shz{j},
\]
with effective coefficients
\[
\mt{\weff_i} = w_i - \tfrac{n_i - 1}{4} \mt{jxx}, \quad
f_i^{\ms{C}} = 
\begin{cases}
1 & \text{for } \Gdis, \\
\tfrac{n_i - 1}{n_i} & \text{for } \Gshare,
\end{cases} 
\]
As in the single-clique case, we keep the $\mt{jxx}$ term in order to 
preserve compatibility with the more general formulation used in the
main paper, while our later results here will specialize to 
$\Jxx = 0$.

This structure is identical in the disjoint and shared cases, so the
mechanism of tunneling-induced anti-crossing---and the resulting
exponentially small gap---is governed by the same effective Hamiltonian.
Hence, it is sufficient to analyze the disjoint case (\Gdis{}) when
establishing the limitation of TFQA.

\section{Inner Decomposition of the Same-sign Block}
\label{sec:inner}
We recall the two inner decompositions of the same-sign block \( \mb{H}_{\mc{C}} \) ($L$-inner and $R$-inner), as described in Section~9.2 of~\cite{Choi-Beyond}.
In this paper, we consider the case \( \Jxx = 0 \) and focus on the \( L \)-inner decomposition.
Under this decomposition, the relevant low-energy behavior is captured by a coupled pair of zero-indexed blocks.
In particular, we describe how the ground state localizes into the coupled block \( \Hcore \), consisting of \( \mb{H}_L^{(0)} \) and \( \mb{H}_R^{(0)} \), linked through the empty-set basis state.
This localized evolution sets the stage for the tunneling-induced anti-crossing analyzed in the next section.

\subsubsection{Two Block Decompositions of \( \mb{H}_{\mc{C}} \): \( L \)-Inner vs.~\( R \)-Inner}
Physically, the same-sign block Hamiltonian \( \mb{H}_{\mc{C}} \) is symmetric under interchange of the \( L \) and \( R \) subsystems: permuting the tensor factors leaves the spectrum unchanged.  
Mathematically, this corresponds to a permutation similarity transformation: reordering the basis to place \( L \) before \( R \), or vice versa, yields a matrix with identical eigenvalues.
Combinatorially, this symmetry allows the matrix to be organized into a two-layer block structure with either subsystem---\( L \) or \( R \)---serving as the ``inner'' block.  
That is, \( \mb{H}_{\mc{C}} \) can be expressed either as a collection of \( L \)-blocks indexed by the states from \( R \), or as \( R \)-blocks indexed by the states from \( L \).  

For illustration, we assume uniform clique size \( n_i = n_c \) for all \( i \).  
As in the main paper, we restrict the same-sign block Hamiltonian to the symmetric subspace:
\begin{align*}
\mb{H}_{\mc{C}}^{\ms{sym}}
&= \mb{H}_L^{\ms{bare}} \otimes \mb{I}_R + \mb{I}_L \otimes \mb{H}_R^{\ms{bare}} + \mb{H}_{LR},
\end{align*}
where
\begin{align*}
\mb{H}_L^{\ms{bare}} &= - \sqrt{n_c}\, \mt{x}\, \mb{C}\Sop{\X}(m) - \weff\, \mb{C}\Sop{\sZ}(m), \\
\mb{H}_R^{\ms{bare}} &= - \mt{x}(t)\, \mb{C}\Sop{\X}(m_r) - w\, \mb{C}\Sop{\sZ}(m_r), \\
\mb{H}_{LR} &= \Jzz\, \mb{C}\Sop{\sZ}(m) \cdot \mb{C}\Sop{\sZ}(m_r).
\end{align*}
Here
\begin{align*}
\mb{C}\Sop{\sZ}(m) = 
\begin{bmatrix}
m & 0 & \cdots & 0 \\
0 & m - 1 & \cdots & 0 \\
\vdots & \vdots & \ddots & \vdots \\
0 & 0 & \cdots & 0\\
\end{bmatrix},
\quad
\label{eq:CSX}
\mb{C}\Sop{\X}(m) =
\begin{bmatrix}
0 & \tfrac{\sqrt{m}}{2} & 0 & \cdots & 0 \\
\tfrac{\sqrt{m}}{2} & 0 & \tfrac{\sqrt{2(m - 1)}}{2} & \cdots & 0 \\
0 & \tfrac{\sqrt{2(m - 1)}}{2} & 0 & \cdots & 0 \\
\vdots & \vdots & \vdots & \ddots & \tfrac{\sqrt{m}}{2} \\
0 & 0 & 0 & \tfrac{\sqrt{m}}{2} & 0
\end{bmatrix}.
\end{align*}
are the collective Pauli operators restricted to the symmetric
subspace of \( m \) spins, and all \(^{\ms{bare}}\) operators are understood in this context to be restricted to the symmetric subspace.


To illustrate the resulting block structure, we consider a concrete example with \( m = 2 \) and \( m_r = 3 \).
\begin{figure}[!htbp]
  \centering
  \includegraphics[width=0.6\textwidth]{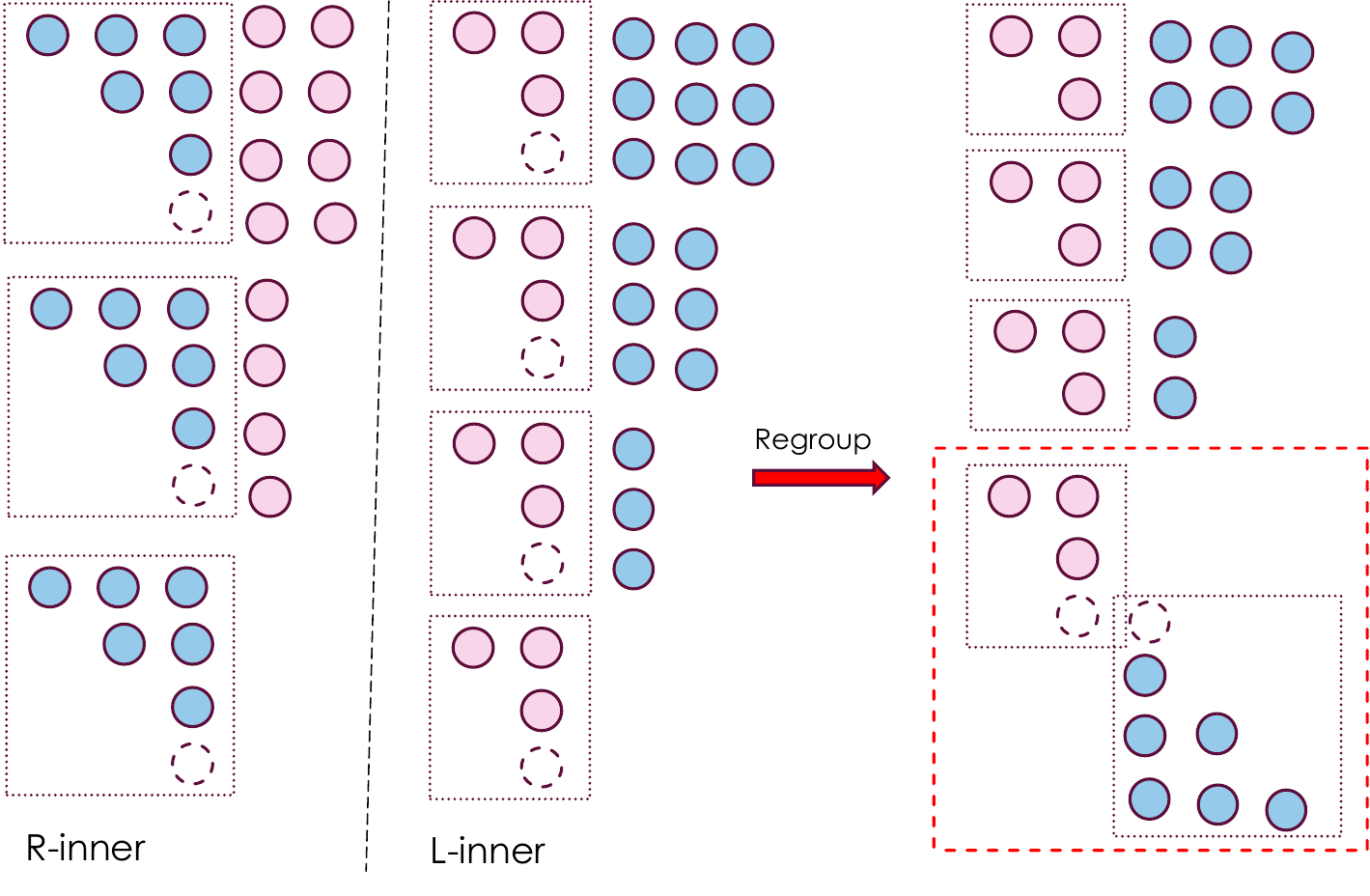}
\caption{
  Two possible orderings of the basis states in the symmetric subspace of the same-sign block \( \mb{H}_{\mc{C}}^{\ms{sym}} \), illustrated for \( m = 2 \), \( m_r = 3 \), corresponding to the \( R \)-inner and \( L \)-inner decompositions.  
  The \( R \) subsystem is shown in blue, and \( L \) in pink. The dashed circle marks the empty-set basis state (no spin-ups).  
  In the \( L \)-inner representation (right), the basis states are reordered so that the zero-indexed \( R \)-block appears in the bottom-right corner.  
  The red dashed box highlights the core block \( \Hcore \), consisting of the blocks \( \mb{H}_L^{(0)} \) and \( \mb{H}_R^{(0)} \), coupled through the empty-set state.
}
\label{fig:L-R-inner}
\end{figure}

\paragraph{Explicit Matrix Representation of \( L \)-inner Block Decomposition.}
We present the explicit matrix form of \( \mb{H}_{\mc{C}}^{\ms{sym}} \),  
corresponding to the decomposition shown in Figure~\ref{fig:L-R-inner}.  
Each diagonal entry reflects the total energy of a basis state,  
while off-diagonal entries arise from the transverse-field term, which connects basis states differing by a single spin flip.

There are \( m_r + 1 = 4 \) outer-layer blocks, each a \( (m+1) \times (m+1) \) matrix acting on the \( L \) subsystem.  
The full matrix representation is denoted \( \mb{H}_{\mc{C}}^{L\text{-}\ms{inner}} \):
\begin{center}
\scalebox{0.5}{$
\left(
\begin{array}{ccc|ccc|ccc|ccc}
 6 \Jzz - 3w - 2\weff & -\tfrac{\sqrt{n_c}\, x}{\sqrt{2}} & 0 & -\tfrac{1}{2} \sqrt{3}x & 0 & 0 & 0 & 0 & 0 & 0 & 0 & 0 \\
 -\tfrac{\sqrt{n_c}\, x}{\sqrt{2}} & 3 \Jzz - 3w - \weff & -\tfrac{\sqrt{n_c}\, x}{\sqrt{2}} & 0 & -\tfrac{1}{2} \sqrt{3}x & 0 & 0 & 0 & 0 & 0 & 0 & 0 \\
 0 & -\tfrac{\sqrt{n_c}\, x}{\sqrt{2}} & -3w & 0 & 0 & -\tfrac{1}{2} \sqrt{3}x & 0 & 0 & 0 & 0 & 0 & 0 \\
 \hline
 -\tfrac{1}{2} \sqrt{3}x & 0 & 0 & 4 \Jzz - 2w - 2\weff & -\tfrac{\sqrt{n_c}\, x}{\sqrt{2}} & 0 & -x & 0 & 0 & 0 & 0 & 0 \\
 0 & -\tfrac{1}{2} \sqrt{3}x & 0 & -\tfrac{\sqrt{n_c}\, x}{\sqrt{2}} & 2 \Jzz - 2w - \weff & -\tfrac{\sqrt{n_c}\, x}{\sqrt{2}} & 0 & -x & 0 & 0 & 0 & 0 \\
 0 & 0 & -\tfrac{1}{2} \sqrt{3}x & 0 & -\tfrac{\sqrt{n_c}\, x}{\sqrt{2}} & -2w & 0 & 0 & -x & 0 & 0 & 0 \\
 \hline
 0 & 0 & 0 & -x & 0 & 0 & 2 \Jzz - w - 2\weff & -\tfrac{\sqrt{n_c}\, x}{\sqrt{2}} & 0 & -\tfrac{1}{2} \sqrt{3}x & 0 & 0 \\
 0 & 0 & 0 & 0 & -x & 0 & -\tfrac{\sqrt{n_c}\, x}{\sqrt{2}} & \Jzz - w - \weff & -\tfrac{\sqrt{n_c}\, x}{\sqrt{2}} & 0 & -\tfrac{1}{2} \sqrt{3}x & 0 \\
 0 & 0 & 0 & 0 & 0 & -x & 0 & -\tfrac{\sqrt{n_c}\, x}{\sqrt{2}} & -w & 0 & 0 & -\tfrac{1}{2} \sqrt{3}x \\
 \hline
 0 & 0 & 0 & 0 & 0 & 0 & -\tfrac{1}{2} \sqrt{3}x & 0 & 0 & -2\weff & -\tfrac{\sqrt{n_c}\, x}{\sqrt{2}} & 0 \\
 0 & 0 & 0 & 0 & 0 & 0 & 0 & -\tfrac{1}{2} \sqrt{3}x & 0 & -\tfrac{\sqrt{n_c}\, x}{\sqrt{2}} & -\weff & -\tfrac{\sqrt{n_c}\, x}{\sqrt{2}} \\
 0 & 0 & 0 & 0 & 0 & 0 & 0 & 0 & -\tfrac{1}{2} \sqrt{3}x & 0 & -\tfrac{\sqrt{n_c}\, x}{\sqrt{2}} & 0 \\
\end{array}
\right)
$}
\end{center}

\subsubsection{Localization and Effective Hamiltonian \( \Hcore \)}
Assume \( \Gamma_1 \) is sufficiently large so that the initial ground state is the uniform superposition.  
As the parameter \( \mt{x} \) decreases, the \( L \)-spins experience a stronger effective transverse field due to the extra \( \sqrt{n_c} \) factor.  
Since this transverse field remains much larger than the \( z \)-field, the latter can be neglected to leading order.  
Under this approximation, the amplitudes of basis states within the block \( \mb{H}_R^{(0)} \) become suppressed---they are negligibly small compared to those in \( \mb{H}_L^{(0)} \), except for their coupling through the shared basis (empty-set) state.

This motivates us to reorder the matrix representation so that \( \mb{H}_R^{(0)} \) appears explicitly in the bottom-right corner. To extract the block \( \mb{H}_R^{(0)} \) from the \( L \)-inner representation, we permute the rows and columns of \( H_{\mc{C}}^{\ms{L-inner}} \) so that the indices corresponding to the \( i_R = 0 \) sector appear contiguously at the end.  
This reordering moves \( \mb{H}_R^{(0)} \) into the bottom-right corner of the matrix.
Note that the original blocks \( \mb{H}_L^{(0)} \) and \( \mb{H}_R^{(0)} \) are coupled through the basis state corresponding to the empty set,  
which is shared between them. See Figure~\ref{fig:L-R-inner} (right panel) for illustration.
The explicit form of the low-energy submatrix,
with the shared row and column shaded, is:
\[
\begin{bmatrix}
 -2\weff & -\tfrac{\sqrt{n_c}\, x}{\sqrt{2}} & \cellcolor{gray!20} 0 & 0 & 0 & 0 \\
 -\tfrac{\sqrt{n_c}\, x}{\sqrt{2}} & -\weff & \cellcolor{gray!20} -\tfrac{\sqrt{n_c}\, x}{\sqrt{2}} & 0 & 0 & 0 \\
 \cellcolor{gray!20} 0 & \cellcolor{gray!20} -\tfrac{\sqrt{n_c}\, x}{\sqrt{2}} & \cellcolor{gray!20} 0 & \cellcolor{gray!20} -\tfrac{1}{2} \sqrt{3}\, x & \cellcolor{gray!20} 0 & \cellcolor{gray!20} 0 \\
 0 & 0 & \cellcolor{gray!20} -\tfrac{1}{2} \sqrt{3}\, x & -w & -x & 0 \\
 0 & 0 & \cellcolor{gray!20} 0 & -x & -2w & -\tfrac{1}{2} \sqrt{3}\, x \\
 0 & 0 & \cellcolor{gray!20} 0 & 0 & -\tfrac{1}{2} \sqrt{3}\, x & -3w
\end{bmatrix}
\]

The evolving ground state localizes into the lowest-energy block \( \mb{H}_L^{(0)} \) before the anti-crossing,
as shown in Figure~\ref{fig:block-proj-Jxx0}.
\begin{figure}[!htbp]
  \centering
  \includegraphics[width=0.65\textwidth]{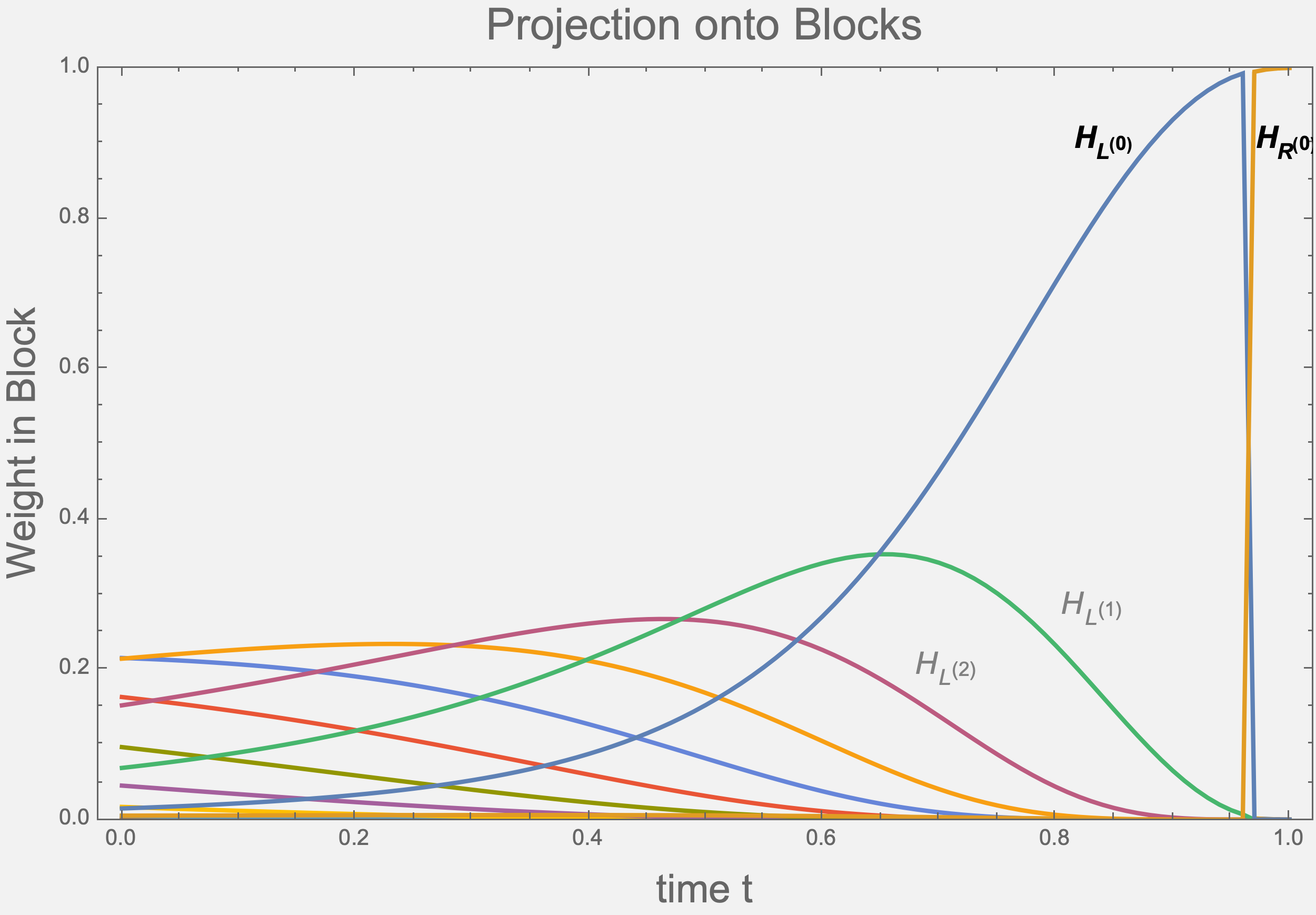}
  \caption{
    Projection of the evolving ground state onto each block of the same-sign Hamiltonian \(\mb{H}_{\mc{C}}\), in the disjoint case with \(\Jxx = 0\).  
    The ground state begins near-uniform, then steers smoothly into the \( \mb{H}_L^{(0)} \) block, which dominates before the anti-crossing.  
    An anti-crossing occurs near \( 0.95 \), where amplitude tunnels from \( \mb{H}_L^{(0)} \) to \( \mb{H}_R^{(0)} \).  
    Intermediate blocks \( \mb{H}_L^{(1)} \) and \( \mb{H}_L^{(2)} \) participate during structural compression but remain subdominant.
  }
  \label{fig:block-proj-Jxx0}
\end{figure}
This localization justifies reducing the full same-sign block Hamiltonian \( \mb{H}_{\mc{C}} \) to an effective Hamiltonian  
\[
\Hcore = \mb{H}_L^{(0)} + \mb{H}_R^{(0)}.
\]
where
\(\mb{H}_L^{(0)} = \mb{H}_L^{\ms{bare}} \otimes \ket{0}_R\bra{0},
\mb{H}_R^{(0)} = \ket{0}_L\bra{0} \otimes \mb{H}_R^{\ms{bare}}.
\)
 Figure~\ref{fig:energy-HC-vs-Hcore} shows that the lowest energy levels of \( \Hcore \) closely match those of the full same-sign Hamiltonian \( \mb{H}_{\mc{C}} \).

\begin{figure}[!htbp]
  \centering
  $$
  \begin{array}{cc}
    \includegraphics[width=0.45\textwidth]{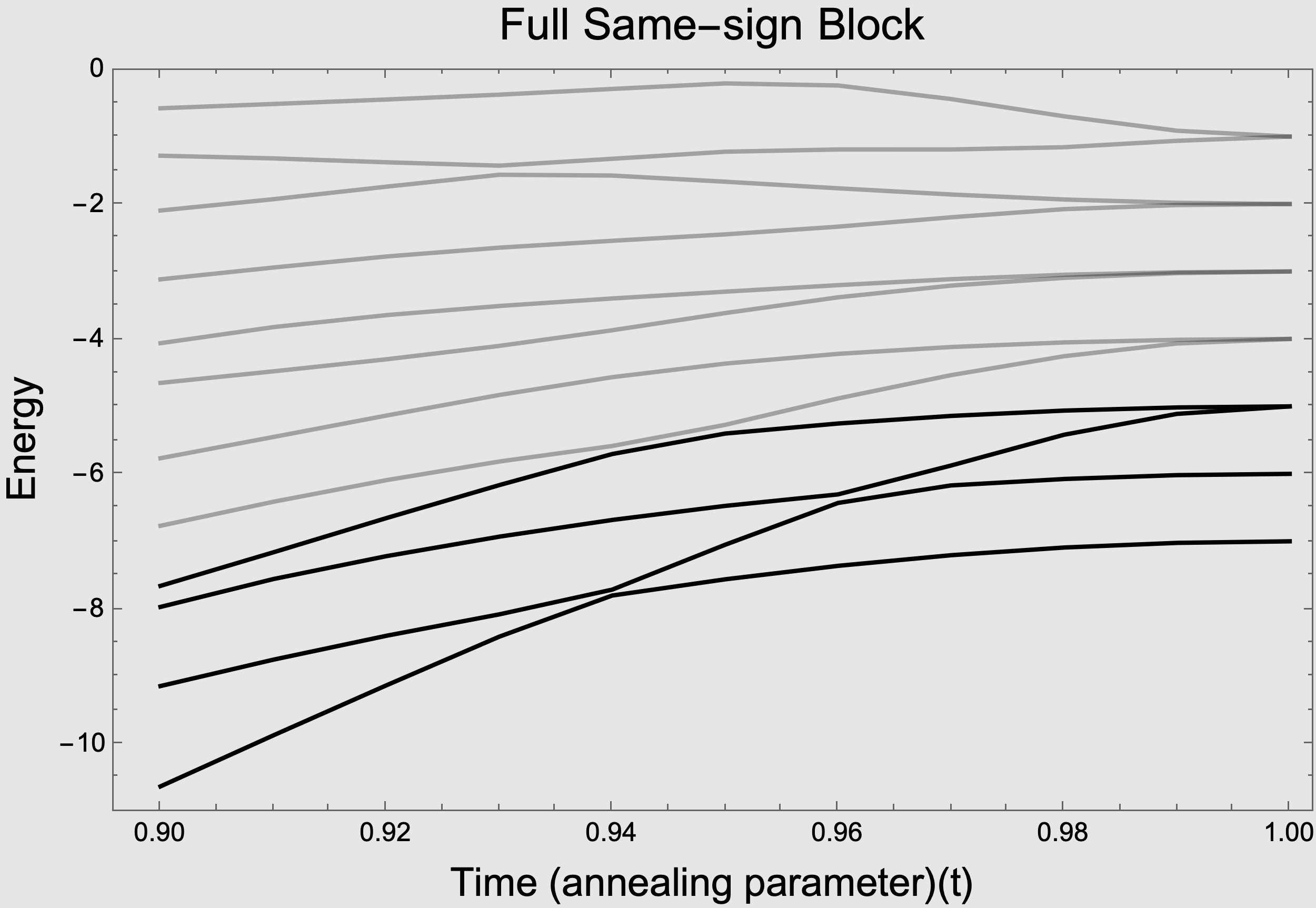} &
    \includegraphics[width=0.45\textwidth]{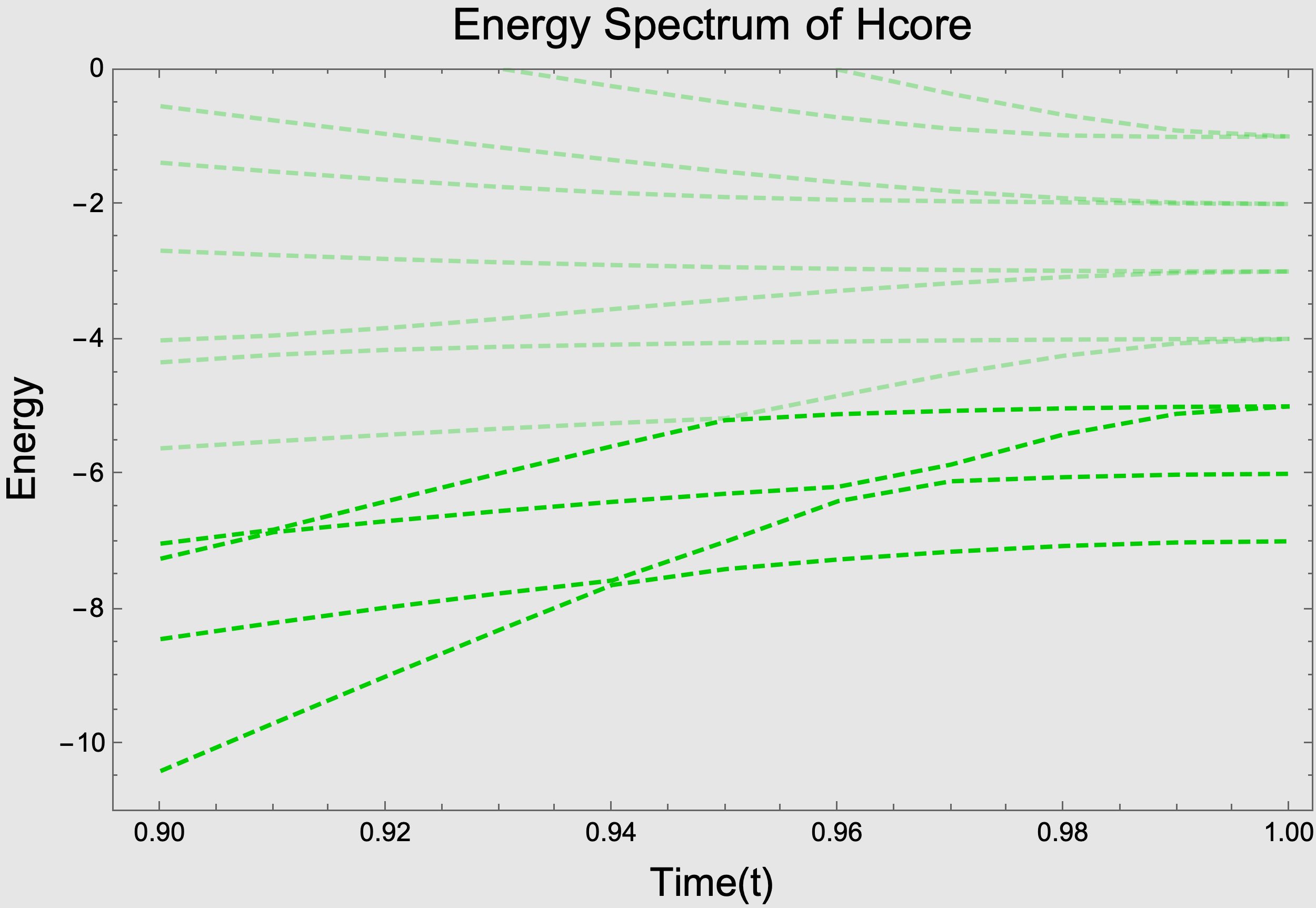}
  \end{array}
  $$
  \includegraphics[width=0.65\textwidth]{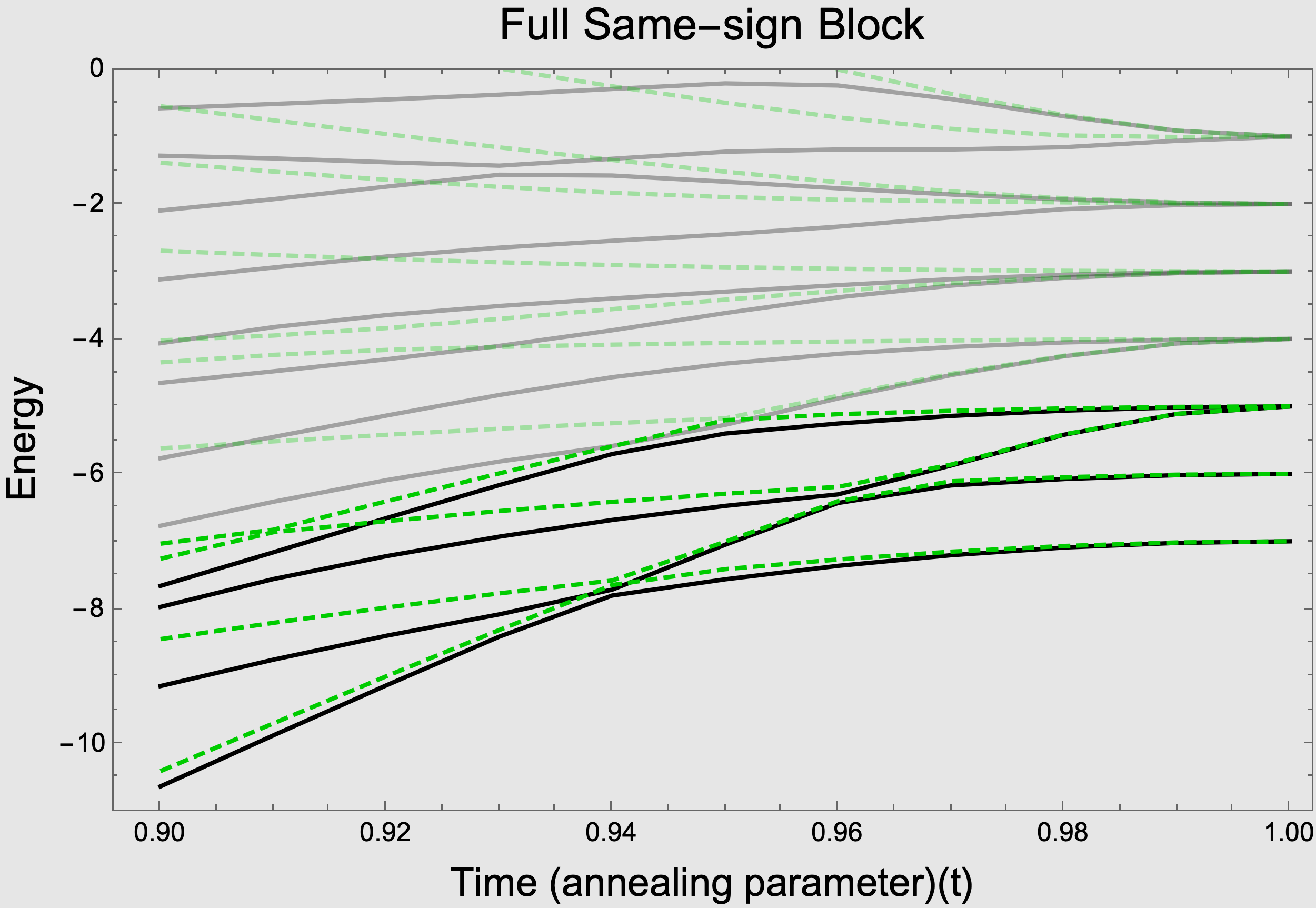}
  \caption{
    Comparison of the energy spectrum of the full same-sign block Hamiltonian \( \mb{H}_{\mc{C}} \)  
    and the reduced two-block Hamiltonian \( \Hcore \), in the disjoint case with \( \Jxx = 0 \).  
    The top-left panel shows the full spectrum of \( \mb{H}_{\mc{C}} \) (solid black and gray),  
    and the top-right panel shows the spectrum of \( \Hcore \) (green dashed).  
    The bottom panel overlays the two, demonstrating that the lowest energy levels of \( \Hcore \) closely track those of \( \mb{H}_{\mc{C}} \), with the anti-crossing position shifted slightly to the right.
  }
  \label{fig:energy-HC-vs-Hcore}
\end{figure}

\section{Detailed Analysis of \( \Hcore \)}
\label{sec:Hcore}

In this section, we analyze the spectrum of \( \Hcore \) using the exact eigensystems of the embedded bare subsystems \( \mb{H}_L^{(0)} \) and \( \mb{H}_R^{(0)} \), each of which is analytically solvable.  
The analysis proceeds by reformulating the eigenvalue problem as a generalized eigenvalue problem, using a non-orthogonal basis constructed by extending the bare eigenstates of the subsystems to the full Hilbert space.

This transformation enables a clean perturbative treatment and forms the foundation for a structured spectral analysis. The approach proceeds in three main steps:
\begin{itemize}
  \item Describe the bare eigenstates of \( \Hcore \) (Section~\ref{sec:core-bare}).

  \item Reformulate the eigenvalue problem for \( \Hcore \) as a generalized eigenvalue problem in the non-orthogonal basis (Section~\ref{sec:reformulation}).

  \item Develop the perturbative structure of the generalized eigenvalue system (Section~\ref{sec:pert-generalized}), including:
  \begin{enumerate}
    \item Showing that the true energy values (away from the anti-crossing) are well-approximated by the bare energies (Section~\ref{sec:approx-pert}).
    
    \item Analyzing the anti-crossing structure by approximating the lowest two energy levels, both away from and near the crossing,
      and deriving a perturbative bound for the anti-crossing gap using an effective \( 2 \times 2 \) Hamiltonian (Section~\ref{sec:ac-gap}).
  \end{enumerate}
\end{itemize}

\begin{remark}
  The final step, constructing a reduced effective Hamiltonian for the generalized system, is structurally similar to the method of effective Hamiltonians for the non-orthogonal basis developed in~\cite{AndradeFreire2003}. However, in their setting, the non-orthogonal basis is derived from the physical structure of the original problem (e.g., localized atomic orbitals).
  In contrast, our non-orthogonal basis is synthetically constructed from bare eigenstates and does not correspond to the physical structure of the full system.
\end{remark}

\subsection{Bare Eigenstates of \( \Hcore \)}
\label{sec:core-bare}
In this section, we describe the bare eigenstates of \( \Hcore \) by relating the embedded bare operators \( \mb{H}_L^{(0)} \), \( \mb{H}_R^{(0)} \) to the bare subsystem Hamiltonians \( \mb{H}_L^{\ms{bare}} \), \( \mb{H}_R^{\ms{bare}} \).
Recall that the effective Hamiltonian \( \Hcore \) is
\[
\Hcore = \mb{H}_L^{(0)} + \mb{H}_R^{(0)},
\]
where
\[\mb{H}_L^{(0)} = \mb{H}_L^{\ms{bare}} \otimes \ket{0}_R\bra{0},\quad
\mb{H}_R^{(0)} = \ket{0}_L\bra{0} \otimes \mb{H}_R^{\ms{bare}}.
\]
The eigenvalues and eigenstates of the the bare Hamiltonians \( \mb{H}_L^{\ms{bare}} \), \( \mb{H}_R^{\ms{bare}} \) are
\[
\HL \ket{L_i} = e_{L_i} \ket{L_i}, \quad 
\HR \ket{R_j} = e_{R_j} \ket{R_j},
\]
where \( \{ \ket{L_i} \} \) and \( \{ \ket{R_j} \} \) are analytically known (Section~7.2 in \cite{Choi-Beyond})
and form orthonormal eigenbases of the \( L \)- and \( R \)-subspaces.


The corresponding padded eigenstates in the full space are
\[
\ket{\bar{L}_i} = \ket{L_i} \otimes \ket{\emptyset}_R, \quad 
\ket{\bar{R}_j} = \ket{\emptyset}_L \otimes \ket{R_j},
\]
and they satisfy
\[
\mb{H}_L^{(0)} \ket{\bar{L}_i} = e_{L_i} \ket{\bar{L}_i}, \quad 
\mb{H}_R^{(0)} \ket{\bar{R}_j} = e_{R_j} \ket{\bar{R}_j}.
\]

\begin{remark}
We use a bar to distinguish padded eigenstates of the full Hilbert space from bare eigenstates of the subsystem Hamiltonians.
For instance, \( \ket{L_i} \) is an eigenstate of the bare Hamiltonian \( \HL \) acting on the \( L \)-subspace, while \( \ket{\bar{L}_i} = \ket{L_i} \otimes \ket{\emptyset}_R \) is the corresponding eigenstate of the embedded operator \( \mb{H}_L^{(0)} \) acting on the full Hilbert space. The same applies to \( \ket{R_j} \) and \( \ket{\bar{R}_j} \).
\end{remark}

Recall that the full eigensystem for both \( \mb{H}_L \) and \( \mb{H}_R \) (and thus for \( \mb{H}_L^{(0)} \) and \( \mb{H}_R^{(0)} \)) is analytically known.
In the following, we first derive the results in terms of the bare eigenstates \( \ket{\bar{L}_i} \) and \( \ket{\bar{R}_j} \), and then substitute their explicit forms in the anti-crossing structure section later.

\subsection{Generalized Eigenvalue Reformulation}
\label{sec:reformulation}

We express the Hamiltonian \( \Hcore \) in the basis of padded bare eigenstates
\(
\left\{ \ket{\bar{L}_i} \right\} \cup \left\{ \ket{\bar{R}_j} \right\}
\), which leads to a generalized eigenvalue problem.
We define
\[
\Hgen \mdef
\begin{bmatrix}
\mb{H}_{RR} & \mb{H}_{RL} \\
\mb{H}_{LR} & \mb{H}_{LL}
\end{bmatrix}, \quad
\Sgen \mdef
\begin{bmatrix}
\mb{S}_{RR} & \mb{S}_{RL} \\
\mb{S}_{LR} & \mb{S}_{LL}
\end{bmatrix},
\]
where \( \mb{H}_{RR} \) contains the entries \( \bra{\bar{R}_i} \Hcore \ket{\bar{R}_j} \), and similarly for \( \mb{H}_{LL} \), \( \mb{H}_{LR} \), and \( \mb{H}_{RL} \).  
The matrix \( \mb{S}_{RR} \) contains the overlaps \( \langle \bar{R}_i \vert \bar{R}_j \rangle \), and similarly for the remaining blocks of \( \Sgen \).

We now consider the generalized eigenvalue equation:
\begin{align}
\Hgen \ket{\Egen_n} = \Egen_n \, \Sgen \ket{\Egen_n},
\label{eq:g1}
\end{align}
where the pair \( (\Egen_n, \ket{\Egen_n}) \) is referred to as a \emph{generalized eigenpair} of the generalized eigenvalue system \( (\Hgen, \Sgen) \).  
We refer to \( \Hgen \) as the \emph{generalized Hamiltonian}, and \( \Sgen \) as the corresponding \emph{overlap matrix}.
In the numerical linear algebra literature, such pairs are also called \emph{matrix pencils}.\footnote{The term “matrix pencil” refers to the parametrized family \( \Hgen - \lambda \Sgen \), whose roots \( \lambda = \Egen_n \) yield the eigenvalues of the system.}

\begin{theorem}
\label{thm:generalized-eigenpair-equivalence}
Let \( (E_n, \ket{E_n}) \) be an eigenpair of the original Hamiltonian \( \Hcore \), and let \( (\Egen_n, \ket{\Egen_n}) \) be a generalized eigenpair of \( (\Hgen, \Sgen) \) as in~\eqref{eq:g1}.
Then:
\begin{itemize}
  \item Every eigenvalue \( E_n \) of \( \Hcore \) appears as a generalized eigenvalue \( \Egen_n \).
  \item Conversely, every generalized eigenvalue \( \Egen_n \) corresponds to an eigenvalue \( E_n \) of \( \Hcore \).
\end{itemize}
The eigenvalues coincide: \( \Egen_n = E_n \) for all \( n \), while the eigenvectors \( \ket{\Egen_n} \) and \( \ket{E_n} \) differ.
\end{theorem}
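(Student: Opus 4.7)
The plan is to recognize the generalized eigenvalue problem as the natural matrix representation of $\Hcore$ in the non-orthogonal spanning set $\{\ket{\bar{L}_i}\} \cup \{\ket{\bar{R}_j}\}$, and then to establish a correspondence (modulo a one-dimensional overcompleteness) between its eigenpairs and those of $\Hcore$. The key algebraic object is the \emph{synthesis map} $T$ that sends a coefficient vector $(c,d)$ to the state $\sum_i c_i \ket{\bar{L}_i} + \sum_j d_j \ket{\bar{R}_j}$. From the definitions of the blocks in $\Hgen$ and $\Sgen$ one reads off the identities $\Hgen = T^\dagger \Hcore T$ and $\Sgen = T^\dagger T$, so the generalized eigenvalue equation $\Hgen \ket{\Egen_n} = \Egen_n \Sgen \ket{\Egen_n}$ is exactly $T^\dagger(\Hcore - \Egen_n)T\ket{\Egen_n}=0$. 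Writing $V \mdef \mathrm{range}(T) = \mathrm{span}\{\ket{\bar{L}_i}\} + \mathrm{span}\{\ket{\bar{R}_j}\}$, the identity $\ker T^\dagger = V^\perp$ is what drives both directions of the argument.

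First I would verify that $V$ is $\Hcore$-invariant. From Section~\ref{sec:core-bare} we have $\mb{H}_L^{(0)} \ket{\bar{L}_i} = e_{L_i}\ket{\bar{L}_i}$, while $\mb{H}_R^{(0)} \ket{\bar{L}_i} = \braket{0|L_i}_L \, \ket{0}_L \otimes \mb{H}_R^{\ms{bare}}\ket{0}_R$, and the latter lies in $\mathrm{span}\{\ket{\bar{R}_j}\}$ after expanding $\mb{H}_R^{\ms{bare}}\ket{0}_R$ in the eigenbasis $\{\ket{R_j}\}$; the symmetric statement holds for each $\ket{\bar{R}_j}$. Moreover, $\Hcore$ annihilates $V^\perp$: any $\ket{\psi}$ orthogonal to $V$ has vanishing $\ket{0}$-projection in the appropriate tensor factor, so both $\mb{H}_L^{(0)}$ and $\mb{H}_R^{(0)}$ send it to zero. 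Consequently every nonzero eigenvalue of $\Hcore$ admits an eigenvector in $V$, so the claimed correspondence is really between $\Hcore|_V$ and the generalized problem.

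For the forward direction, given $\Hcore \ket{E_n} = E_n \ket{E_n}$ with $\ket{E_n} \in V$, surjectivity of $T$ onto $V$ furnishes a preimage $\ket{\Egen_n}$ with $T\ket{\Egen_n} = \ket{E_n}$; applying $T^\dagger$ to the eigenvalue equation yields $\Hgen \ket{\Egen_n} = E_n \Sgen \ket{\Egen_n}$ immediately. For the backward direction, given $\Hgen \ket{\Egen_n} = \Egen_n \Sgen \ket{\Egen_n}$, set $\ket{\phi} \mdef T\ket{\Egen_n}\in V$. The equation rewrites as $T^\dagger(\Hcore - \Egen_n)\ket{\phi}=0$, which says $(\Hcore - \Egen_n)\ket{\phi} \in \ker T^\dagger = V^\perp$; combined with $\Hcore$-invariance of $V$, we have $(\Hcore - \Egen_n)\ket{\phi} \in V \cap V^\perp = \{0\}$, yielding $\Hcore\ket{\phi} = \Egen_n\ket{\phi}$ whenever $\ket{\phi}\neq 0$. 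This also explains the asymmetry in the statement: $\ket{E_n} = T\ket{\Egen_n}$ is the physical state in the full Hilbert space, while $\ket{\Egen_n}$ is its coefficient vector in the non-orthogonal expansion.

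The one subtlety, and the only point I expect to require care, is that $T$ is not injective. The two orthonormal sets $\{\ket{\bar{L}_i}\}$ and $\{\ket{\bar{R}_j}\}$ share the single direction $\ket{0}_L\otimes\ket{0}_R$ (since $\ket{0}_L$ expands uniquely in $\{\ket{L_i}\}$ and $\ket{0}_R$ in $\{\ket{R_j}\}$), which produces a one-dimensional $\ker T$ and renders $\Sgen$ singular with nullity one. Vectors in $\ker T$ satisfy both $\Hgen \ket{\Egen_n}=0$ and $\Sgen\ket{\Egen_n}=0$, and hence the generalized equation is trivially fulfilled for every $\Egen_n$; such solutions are spurious in the sense that $T\ket{\Egen_n}=0$ so no physical state is produced. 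They do not disturb the bijection between genuine finite generalized eigenvalues and the eigenvalues of $\Hcore|_V$, completing the argument.
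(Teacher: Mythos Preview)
Your argument is correct and, at its core, matches the paper's: both expand an eigenstate of $\Hcore$ in the spanning family $\{\ket{\bar{L}_i}\}\cup\{\ket{\bar{R}_j}\}$ and take inner products with the same family to obtain the generalized system. Your packaging via the synthesis map $T$ with $\Hgen=T^\dagger\Hcore T$ and $\Sgen=T^\dagger T$ is just a cleaner way to say this.

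Where you go beyond the paper is the converse. The paper asserts it in a single sentence, whereas you supply the missing ingredient: $\Hcore$-invariance of $V=\mathrm{range}(T)$, which turns $T^\dagger(\Hcore-\Egen_n)\ket{\phi}=0$ into $(\Hcore-\Egen_n)\ket{\phi}\in V\cap V^\perp=\{0\}$. That step is genuinely needed and is absent from the paper's proof. You also handle the overcompleteness more accurately: the paper attributes the linear dependence to ``the identification $\ket{\bar R_0}=\ket{\bar L_0}$,'' which is not literally true (those are distinct vectors); your identification of the shared direction as $\ket{0}_L\otimes\ket{0}_R$, giving a one-dimensional $\ker T$ and hence the singular $\Sgen$ with its spurious (indeterminate) generalized eigenpairs, is the correct statement. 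So: same route, but your version fills a gap in the converse and sharpens the overcompleteness discussion.
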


\begin{proof}
We can write the eigenstate \( \ket{E_n} \) of \( \Hcore \) in the padded bare basis as
\[
\ket{E_n} = \sum_{k=0}^{2^{m_R}-1} c^{\ms{R}}_k \ket{\bar{R}_k} + \sum_{k=0}^{2^{m_L}-1} c^{\ms{L}}_k \ket{\bar{L}_k}.
\]
Note that the coefficients \( c^{\ms{R}} \) and \( c^{\ms{L}} \) may not be unique, since the set \( \{ \ket{\bar{R}_k}, \ket{\bar{L}_k} \} \) spans a space of dimension \( 2^{m_L} + 2^{m_R} \), whereas the true Hilbert space of \( \Hcore \) has dimension \( 2^{m_L} + 2^{m_R} - 1 \) due to the identification \( \ket{\bar{R}_0} = \ket{\bar{L}_0} \). Thus, the padded set is linearly dependent.

Now rewrite the eigenvalue equation:
\[
\Hcore \left( \sum_k c^{\ms{R}}_k \ket{\bar{R}_k} + \sum_k c^{\ms{L}}_k \ket{\bar{L}_k} \right)
= E_n \left( \sum_k c^{\ms{R}}_k \ket{\bar{R}_k} + \sum_k c^{\ms{L}}_k \ket{\bar{L}_k} \right).
\]

Taking the inner product with \( \bra{\bar{R}_i} \) and \( \bra{\bar{L}_i} \), we obtain the generalized eigenvalue equation in matrix form:
\[
\begin{bmatrix}
\mb{H}_{RR} & \mb{H}_{RL} \\
\mb{H}_{LR} & \mb{H}_{LL}
\end{bmatrix}
\begin{bmatrix}
\vec{c}_{\ms{R}} \\
\vec{c}_{\ms{L}}
\end{bmatrix}
= E_n
\begin{bmatrix}
\mb{S}_{RR} & \mb{S}_{RL} \\
\mb{S}_{LR} & \mb{S}_{LL}
\end{bmatrix}
\begin{bmatrix}
\vec{c}_{\ms{R}} \\
\vec{c}_{\ms{L}}
\end{bmatrix}.
\]

Then we have
\(
\Hgen \ket{\Egen_n} = \Egen_n \Sgen \ket{\Egen_n},
\)
with \( \Egen_n = E_n \) and \( \ket{\Egen_n} = \begin{bmatrix} \vec{c}_{\ms{R}} \\ \vec{c}_{\ms{L}} \end{bmatrix} \).

This shows that every eigenpair \( (E_n, \ket{E_n}) \) of \( \Hcore \) corresponds to a generalized eigenpair \( (\Egen_n, \ket{\Egen_n}) \) of \( (\Hgen, \Sgen) \), though \( \ket{\Egen_n} \) may not be unique.
Conversely, each generalized eigenpair \( (\Egen_n, \ket{\Egen_n}) \) of \( (\Hgen, \Sgen) \) determines an eigenpair \( (E_n, \ket{E_n}) \) of \( \Hcore \) with \( \Egen_n = E_n \).
\end{proof}



\begin{center}
\fbox{
\begin{minipage}{0.92\textwidth}
\textbf{Summary of the Reformulation.}
We express the eigenvalue problem for \( \Hcore \) in a non-orthogonal (normalized) basis:
\[
\big\{ \ket{\bar{L}_i} \big\} \cup \big\{ \ket{\bar{R}_j} \big\}.
\]
Each subset is orthonormal, but the two sets are not mutually orthogonal.

This yields a generalized eigenvalue problem of the form
\[
\Hgen \ket{\Egen_n} = \Egen_n \, \Sgen \ket{\Egen_n}.
\]

This reformulation preserves the spectrum:
\[
\boxed{ \Egen_n = E_n \quad \text{for all } n, }
\]
allowing us to analyze the dynamics using the generalized eigenvalue system \( (\Hgen, \Sgen) \).
This forms the foundation for our perturbative analysis in the next subsection.
\end{minipage}
}
\end{center}

\subsection{Perturbative Structure of the Generalized Eigenvalue System}
\label{sec:pert-generalized}

In this section, we develop the perturbative structure of the generalized eigenvalue system \( (\Hgen, \Sgen) \), and use it to analyze the spectrum of \( \Hcore \). Specifically:
\begin{enumerate}
  \item In Section~\ref{sec:approx-pert}, we show that the true energy values of \( \Hcore \) (away from the anti-crossing) are well-approximated by the bare energies.
  \item In Section~\ref{sec:ac-gap}, we derive a \( 2 \times 2 \) effective Hamiltonian that captures the tunneling-induced anti-crossing and provides a perturbative bound on the gap size.
\end{enumerate}

We decompose the generalized Hamiltonian and overlap matrix as
\[
\Hgen = \mb{H} + \YH, \quad \Sgen = \mb{I} + \DI,
\]
where
\begin{align*}
\mb{H} &=
\begin{bmatrix}
\mb{H}_{RR} & 0 \\
0 & \mb{H}_{LL}
\end{bmatrix}
\quad \text{(block-diagonal)}, &
\YH &=
\begin{bmatrix}
0 & \mb{H}_{RL} \\
\mb{H}_{LR} & 0
\end{bmatrix}
\quad \text{(off-diagonal)}, \\[1ex]
\mb{I} &=
\begin{bmatrix}
\mb{I}_{RR} & 0 \\
0 & \mb{I}_{LL}
\end{bmatrix}, &
\DI &=
\begin{bmatrix}
0 & \mb{S}_{RL} \\
\mb{S}_{LR} & 0
\end{bmatrix}.
\end{align*}

We now introduce a parameterized family of matrices:
\[
\mb{H}(\lambda) \mdef \mb{H} + \lambda \YH, \quad
\mb{I}(\lambda) \mdef \mb{I} + \lambda \DI, \quad \lambda \in [0,1),
\]
and consider the associated generalized eigenvalue problem:
\[
\mb{H}(\lambda) \ket{e_n(\lambda)} = e_n(\lambda) \mb{I}(\lambda) \ket{e_n(\lambda)}.
\]

At \( \lambda = 0 \), the eigenvalues \( e_n(0) \) correspond to the bare subsystem energies.  
At \( \lambda = 1 \), the eigenvalues \( e_n(1) = E_n \) recover the true eigenvalues of \( \Hcore \).

We analyze how these eigenvalues deform as a function of \( \lambda \).

\begin{remark}
There is one more eigenvalue in the unperturbed generalized eigenvalue system \( (\Hgen(0), \Sgen(0)) \) than in the fully coupled system \( (\Hgen(1), \Sgen(1)) \).  
While perturbations typically split degenerate energy levels, here the effect is reversed: the perturbation merges two distinct eigenvalues into a single degenerate level.  
In particular, the zero-energy levels from both subsystems appear to collapse into one.
\end{remark}

Although \( \Sgen = \mb{I} + \DI \) is not invertible, the matrix \( \mb{I} + \lambda \DI \) is invertible for all \( \lambda \in [0,1) \), as shown in the lemma below.

\begin{lemma}
\label{lemma:invertible}
The matrix \( \mb{I} + \DI \) is not invertible. However, for all \( \lambda \in [0,1) \), the matrix \( \mb{I} + \lambda \DI \) is invertible.
\end{lemma}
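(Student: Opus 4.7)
The plan is to identify $\Sgen = \mb{I} + \DI$ as the Gram matrix of the padded family $\{\ket{\bar{L}_i}\}_i \cup \{\ket{\bar{R}_j}\}_j$, so that kernels of $\Sgen$ correspond exactly to linear dependencies among these vectors. Geometrically, $\operatorname{span}\{\ket{\bar{L}_i}\}$ and $\operatorname{span}\{\ket{\bar{R}_j}\}$ sit as two subspaces of the full Hilbert space that intersect only along the line through $\ket{\emptyset}_L \otimes \ket{\emptyset}_R$ (since the former equals $(\text{$L$-subspace})\otimes\ket{\emptyset}_R$ and the latter $\ket{\emptyset}_L \otimes (\text{$R$-subspace})$). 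This yields exactly one linear relation, which will turn out to be the sole null direction of $\Sgen$, and which disappears the moment $\lambda$ is decreased below $1$.

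For non-invertibility at $\lambda = 1$, I would produce the null vector explicitly: expanding $\ket{\emptyset}_L \otimes \ket{\emptyset}_R$ in the two orthonormal bases gives coefficients $\alpha_i = \langle L_i | \emptyset_L\rangle$ and $\beta_j = \langle R_j | \emptyset_R\rangle$ satisfying $\sum_i \alpha_i \ket{\bar{L}_i} = \sum_j \beta_j \ket{\bar{R}_j}$, so $v = (\alpha, -\beta)^\top$ lies in $\ker \Sgen$. For invertibility on $\lambda \in [0,1)$, the approach is to show strict positive definiteness of $\mb{I} + \lambda \DI$ directly. The quadratic form is
\[
c^*(\mb{I} + \lambda \DI)\,c = \|c^L\|^2 + \|c^R\|^2 + 2\lambda\,\Re\langle c^L, \mb{S}_{LR}\,c^R\rangle,
\]
and the key input is the operator-norm bound $\|\mb{S}_{LR}\|_{\mathrm{op}} \le 1$, which follows by interpreting $c^R \mapsto \mb{S}_{LR} c^R$ as the coefficient representation of the orthogonal projection of $\xi_R := \sum_j c^R_j \ket{\bar{R}_j}$ onto $\operatorname{span}\{\ket{\bar{L}_i}\}$; hence $\|\mb{S}_{LR} c^R\| \le \|\xi_R\| = \|c^R\|$. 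Combining with Cauchy--Schwarz and the algebraic identity $a^2 + b^2 - 2\lambda ab = (1-\lambda)(a^2+b^2) + \lambda(a-b)^2$ with $a = \|c^L\|$, $b = \|c^R\|$ yields $c^*(\mb{I} + \lambda \DI)\,c \ge (1-\lambda)\|c\|^2 > 0$ for all $\lambda < 1$ and nonzero $c$.

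The only non-routine step is the operator-norm bound $\|\mb{S}_{LR}\|_{\mathrm{op}} \le 1$, and the projection interpretation makes it essentially immediate. As a sanity check, a Schur-complement computation collapses $\det(\mb{I} + \lambda \DI)$ to $\det(I - \lambda^2\,\mb{S}_{LR}\mb{S}_{LR}^\dagger)$; since the top singular value of $\mb{S}_{LR}$ is exactly $1$, realized precisely by the shared direction $\ket{\emptyset}_L \otimes \ket{\emptyset}_R$ identified in the first step, this determinant vanishes iff $\lambda = 1$, confirming both halves of the lemma in one stroke.
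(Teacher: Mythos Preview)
Your proposal is correct and follows essentially the same approach as the paper: both argue that $\Sgen = \mb{I}+\DI$ is the Gram matrix of a linearly dependent family (hence singular at $\lambda=1$), and that $\mb{I}+\lambda\DI$ is positive definite for $\lambda<1$ because $\|\DI\|_{\mathrm{op}}\le 1$. Your version is more thorough---you construct the null vector explicitly and justify the operator-norm bound $\|\mb{S}_{LR}\|_{\mathrm{op}}\le 1$ via the projection interpretation, whereas the paper simply asserts that $\lambda\DI$ has subunit norm without supplying that step.
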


\begin{proof}
The overlap matrix \( \Sgen = \mb{I} + \DI \) is defined using the padded bare basis \( \{ \ket{\bar{L}_k}, \ket{\bar{R}_k} \} \), where \( \ket{\bar{L}_0} = \ket{\bar{R}_0} \).  
This identification introduces a linear dependence in the set, so \( \Sgen \) is singular and not invertible.

In contrast, for any \( \lambda \in [0,1) \), the matrix \( \mb{I} + \lambda \DI \) is a Hermitian perturbation of the identity by an operator of strictly subunit norm.  
All eigenvalues of \( \lambda \DI \) lie strictly within \( (-1, 1) \), so the eigenvalues of \( \mb{I} + \lambda \DI \) are strictly positive.  
Hence, \( \mb{I} + \lambda \DI \) is invertible.
\end{proof}


Using Lemma~\ref{lemma:invertible}, we can now convert the generalized eigenvalue
problem into a standard eigenvalue problem.
For \( \lambda \in [0,1) \), we rewrite the generalized eigenvalue problem
\[
  \left( \mb{H} + \lambda \YH \right) \ket{e_n(\lambda)} 
  = e_n(\lambda) \left( \mb{I} + \lambda \DI \right) \ket{e_n(\lambda)},
\]
in standard form using the transformation
\(
 \mb{T}(\lambda) \mdef \left( \mb{I} + \lambda \DI \right)^{-1/2}.
\)
This yields the equivalent standard eigenvalue equation:
\[
\mb{H}_{\ms{QI}}(\lambda)\ket{e_n(\lambda)} 
  = e_n(\lambda) \ket{e_n(\lambda)}.
\]
where $\mb{H}_{\ms{QI}}(\lambda) := \mb{T}(\lambda) \left( \mb{H} + \lambda \YH \right)\mb{T}(\lambda)$.
We refer to \( \mb{H}_{\ms{QI}}(\lambda) \) as the \emph{quasi-interpolated Hamiltonian}.
At \( \lambda = 0 \), the unperturbed equation becomes
\(
  \mb{H} \ket{e_n} = e_n \ket{e_n},
\)
where \( e_n = e_n(0) \) denotes the bare eigenvalue.

\subsubsection{Perturbative Approximation Scheme for the True Energies}
\label{sec:approx-pert}

In this section, we develop a perturbative scheme for approximating the difference 
\( e_n(1) - e_n \), where \( e_n(1) = E_n \) denotes the true eigenvalue of \( \Hcore \), 
and \( e_n = e_n(0) \) is the corresponding bare eigenvalue.
We begin by showing in Theorem~\ref{thm:2.1} that for any \( \lambda \in (0,1) \), 
the eigenvalue deformation \( e_n(\lambda) - e_n \) is exactly captured by a quadratic form involving the transformation matrix \( \mb{T}(\lambda) \) and the coupling term \( \YH - e_n \DI \).
Then, in Proposition~\ref{cor:lambda-small}, we show that when the level \( e_n \) is non-degenerate 
(i.e., well-separated from other bare levels), the eigenvalue deformation admits a second-order perturbative expansion in \( \lambda \).
Finally, we approximate the true energy at \( \lambda = 1 \) using the same expansion, as given in Corollary~\ref{cor:lambda-one}.

\begin{mdframed}
\begin{theorem}
  \label{thm:2.1}
  For \( 0 < \lambda < 1 \), let \( e_n = e_n(0) \) denote the unperturbed eigenvalue, and \( \ket{e_n^{(0)}} \) its corresponding eigenvector.  
  Then the perturbed eigenvalue \( e_n(\lambda) \) satisfies
  \[
    e_n(\lambda) - e_n 
    = \lambda \bra{e_n^{(0)}} \mb{T}(\lambda) \left( \YH - e_n \DI \right) \mb{T}(\lambda) \ket{e_n(\lambda)}.
  \]
\end{theorem}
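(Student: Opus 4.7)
I would start directly from the generalized eigenvalue equation
\[
(\mb{H}+\lambda\YH)\ket{e_n(\lambda)} \;=\; e_n(\lambda)\,(\mb{I}+\lambda\DI)\,\ket{e_n(\lambda)},
\]
and subtract $e_n(\mb{I}+\lambda\DI)\ket{e_n(\lambda)}$ from both sides to group the equation by coupling type:
\[
(\mb{H}-e_n\mb{I})\ket{e_n(\lambda)} \;+\; \lambda(\YH-e_n\DI)\ket{e_n(\lambda)} \;=\; \bigl(e_n(\lambda)-e_n\bigr)\,(\mb{I}+\lambda\DI)\,\ket{e_n(\lambda)}.
\]
The bare kernel $(\mb{H}-e_n\mb{I})$ and the $\lambda$-dependent coupling $(\YH-e_n\DI)$ are now visible as independent contributions, and the shift $e_n(\lambda)-e_n$ sits cleanly on the RHS multiplied by the overlap matrix $\mb{I}+\lambda\DI$.

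\textbf{Symmetrizing with $\mb{T}(\lambda)$.} Next I would premultiply by $\mb{T}(\lambda)$ and exploit the defining identity $\mb{T}(\lambda)^{2}=(\mb{I}+\lambda\DI)^{-1}$, which gives $\mb{T}(\lambda)(\mb{I}+\lambda\DI)=\mb{T}(\lambda)^{-1}$ and collapses the RHS. To place the desired symmetric $\mb{T}(\cdot)\mb{T}$ sandwich around each coupling operator, I would insert $\mb{T}(\lambda)\mb{T}(\lambda)^{-1}=\mb{I}$ adjacent to every $\ket{e_n(\lambda)}$, rewriting everything in terms of the standard-form eigenvector $\mb{T}(\lambda)^{-1}\ket{e_n(\lambda)}$ of $\mb{H}_{\ms{QI}}(\lambda)$. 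This produces the quasi-interpolated identity
\[
\mb{T}(\mb{H}-e_n\mb{I})\mb{T}\,\mb{T}^{-1}\ket{e_n(\lambda)} + \lambda\,\mb{T}(\YH-e_n\DI)\mb{T}\,\mb{T}^{-1}\ket{e_n(\lambda)} = \bigl(e_n(\lambda)-e_n\bigr)\,\mb{T}^{-1}\ket{e_n(\lambda)}.
\]
Projecting with $\bra{e_n^{(0)}}$ and using that $\bra{e_n^{(0)}}\mb{H}=e_n\bra{e_n^{(0)}}$ (which, combined with the generalized eigenvalue equation itself, controls the $(\mb{H}-e_n\mb{I})$ matrix element), followed by the normalization $\bra{e_n^{(0)}}\mb{T}(\lambda)^{-1}\ket{e_n(\lambda)}=1$ (equivalently $\bra{e_n^{(0)}}(\mb{I}+\lambda\DI)\ket{e_n(\lambda)}=1$), yields the stated identity.

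\textbf{Main obstacle.} The delicate step is the $\mb{T}(\mb{H}-e_n\mb{I})\mb{T}$ contribution under the projection. The naive cancellation $\bra{e_n^{(0)}}(\mb{H}-e_n\mb{I})=0$ does \emph{not} apply directly, because $\DI$ is off-diagonal in the $L/R$ partition while $\mb{H}$ is block-diagonal, so $\mb{T}(\lambda)$ does not commute with $\mb{H}$ and the flanking $\mb{T}$ factor spoils the direct annihilation. The way around this is to return to the generalized eigenvalue equation itself to re-express $(\mb{H}-e_n\mb{I})\ket{e_n(\lambda)}$ in terms of $(\YH-e_n\DI)\ket{e_n(\lambda)}$ and the shift, which turns the residual into a self-consistency condition that is satisfied precisely by the chosen normalization of $\ket{e_n(\lambda)}$. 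Once this bookkeeping is tracked, what remains are routine algebraic manipulations built on $\mb{T}^{2}=(\mb{I}+\lambda\DI)^{-1}$, and Lemma~\ref{lemma:invertible} ensures that $\mb{T}(\lambda)$ is well-defined throughout $\lambda\in[0,1)$.
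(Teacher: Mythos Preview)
Your plan correctly identifies the main obstacle but does not actually resolve it. The proposed fix---``return to the generalized eigenvalue equation itself to re-express $(\mb{H}-e_n\mb{I})\ket{e_n(\lambda)}$ in terms of $(\YH-e_n\DI)\ket{e_n(\lambda)}$ and the shift''---is circular: the only available identity is the one you already wrote down,
\[
(\mb{H}-e_n\mb{I})\ket{e_n(\lambda)} \;=\; (e_n(\lambda)-e_n)(\mb{I}+\lambda\DI)\ket{e_n(\lambda)} - \lambda(\YH-e_n\DI)\ket{e_n(\lambda)},
\]
and substituting this into the first term of your projected equation, using $\mb{T}(\mb{I}+\lambda\DI)=\mb{T}^{-1}$, cancels the $\lambda\,\mb{T}(\YH-e_n\DI)$ contribution as well and collapses everything to the tautology $(e_n(\lambda)-e_n)\bra{e_n^{(0)}}\mb{T}^{-1}\ket{e_n(\lambda)}=(e_n(\lambda)-e_n)\bra{e_n^{(0)}}\mb{T}^{-1}\ket{e_n(\lambda)}$. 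No normalization choice can recover the desired term once it has been eliminated, so the ``self-consistency'' resolution does not go through.

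The paper's proof avoids this by a different regrouping. Instead of isolating $\mb{H}-e_n\mb{I}$, it writes
\[
\mb{H}+\lambda\YH \;=\; (\mb{H}+\lambda e_n\DI)\;+\;\lambda(\YH-e_n\DI),
\]
and uses the one-line observation that adding $\lambda e_n\DI\ket{e_n}$ to both sides of $\mb{H}\ket{e_n}=e_n\ket{e_n}$ gives
\[
(\mb{H}+\lambda e_n\DI)\ket{e_n}\;=\;e_n(\mb{I}+\lambda\DI)\ket{e_n},
\]
so that the bare state $\ket{e_n}$ is a \emph{generalized} eigenvector of the pair $(\mb{H}+\lambda e_n\DI,\ \mb{I}+\lambda\DI)$ for every $\lambda$. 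It is this relation---not the bare annihilation $\bra{e_n}(\mb{H}-e_n\mb{I})=0$---that is used to handle the ``unperturbed'' piece after conjugation by $\mb{T}(\lambda)$, and it is the idea your outline is missing. One smaller point: your two normalizations are not equivalent, since $\mb{T}(\lambda)^{-1}=(\mb{I}+\lambda\DI)^{1/2}$, not $\mb{I}+\lambda\DI$; the paper instead imposes $\braket{e_n|e_n(\lambda)}=1$ on the standard-form eigenvector.
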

\end{mdframed}

\begin{proof}
First, add and subtract an auxiliary term \( \lambda e_n \DI \) to \( \mb{H}(\lambda) \):
\begin{align*}
  \mb{H}(\lambda) = \mb{H} + \lambda \YH 
  = \left( \mb{H} + \lambda e_n \DI \right) + \lambda \left( \YH - e_n \DI \right).
\end{align*}

We now express $\mb{H}_{\ms{QI}}(\lambda)$ in terms of the unperturbed Hamiltonian:
\begin{align}
 \mb{H}_{\ms{QI}}(\lambda)
  =\mb{T}(\lambda) \left( \mb{H} + \lambda e_n \DI \right)\mb{T}(\lambda) 
  + \lambda\mb{T}(\lambda) \left( \YH - e_n \DI \right)\mb{T}(\lambda).
  \label{eqp4}
\end{align}

This holds because
\begin{align}
 \mb{T}(\lambda) \left( \mb{H} + \lambda e_n \DI \right)\mb{T}(\lambda) \ket{e_n} = e_n \ket{e_n},
  \label{eqp5}
\end{align}
which follows from adding the term \( \lambda e_n \DI \) to both sides of the unperturbed eigenvalue equation:
\[
  \left( \mb{H} + \lambda e_n \DI \right) \ket{e_n} = e_n (\mb{I} + \lambda \DI) \ket{e_n}.
\]

Next, consider the quantity \( \bra{e_n}\mb{T}(\lambda) \mb{H}(\lambda)\mb{T}(\lambda) \ket{e_n(\lambda)} \).  
We compute it two ways---acting from the right and from the left.

From the right: by the generalized eigenvalue equation,
\[
  \bra{e_n} \mb{H}_{\ms{QI}}(\lambda) \ket{e_n(\lambda)} 
  = e_n(\lambda) \braket{e_n | e_n(\lambda)}.
\]

From the left: using Eqs. \eqref{eqp4} and \eqref{eqp5},
\begin{align*}
  \bra{e_n}\mb{H}_{\ms{QI}}(\lambda) \ket{e_n(\lambda)} 
  &= \bra{e_n}\mb{T}(\lambda) \left( \mb{H} + \lambda e_n \DI \right)\mb{T}(\lambda) \ket{e_n(\lambda)} \\
  &\quad + \lambda \bra{e_n}\mb{T}(\lambda) \left( \YH - e_n \DI \right)\mb{T}(\lambda) \ket{e_n(\lambda)} \\
  &= e_n \braket{e_n | e_n(\lambda)} 
  + \lambda \bra{e_n}\mb{T}(\lambda) \left( \YH - e_n \DI \right)\mb{T}(\lambda) \ket{e_n(\lambda)}.
\end{align*}

Now expand \( \ket{e_n(\lambda)} = \ket{e_n} + \lambda \ket{e_n^{(1)}} + \lambda^2 \ket{e_n^{(2)}} + \cdots \) as a power series in \( \lambda \).  
Assuming orthogonality \( \braket{e_n | e_n^{(k)}} = 0 \) for \( k \geq 1 \), we have \( \braket{e_n | e_n(\lambda)} = 1 \).  
Therefore,
\[
  e_n(\lambda) - e_n 
  = \lambda \bra{e_n}\mb{T}(\lambda) \left( \YH - e_n \DI \right)\mb{T}(\lambda) \ket{e_n(\lambda)}.
\]\end{proof}

\begin{mdframed}
\begin{proposition}
  \label{cor:lambda-small}
  For \( 0 < \lambda < 1 \), suppose the eigenvalue \( e_n \) is non-degenerate, i.e., \( |e_k^{(0)} - e_n| \) is bounded away from zero for all \( k \neq n \). Then
  \begin{align*}
    e_n(\lambda) - e_n 
    &\approx \lambda^2 \left( 
      \bra{e_n}(\YH - e_n \DI)\DI \ket{e_n} 
      - \sum_{k \neq n} \tfrac{ 
        \bra{e_n} (\YH - e_n \DI) \ket{e_k^{(0)}} 
        \bra{e_k^{(0)}} \YH \ket{e_n} 
      }{e_k^{(0)} - e_n} 
    \right).
  \end{align*}
\end{proposition}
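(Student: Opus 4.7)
The plan is to derive the $\lambda^2$-order approximation by expanding the exact identity from Theorem~\ref{thm:2.1} to second order in $\lambda$. Concretely, I would Taylor-expand $\mb{T}(\lambda) = (\mb{I} + \lambda \DI)^{-1/2} = \mb{I} - \tfrac{\lambda}{2}\DI + O(\lambda^2)$ via the binomial series, and simultaneously expand the generalized eigenvector as $\ket{e_n(\lambda)} = \ket{e_n^{(0)}} + \lambda \ket{e_n^{(1)}} + O(\lambda^2)$, where $\ket{e_n^{(1)}}$ is determined by first-order perturbation applied to the generalized system $(\Hgen, \Sgen)$. Non-degeneracy of $e_n$ ensures the first-order correction is well-defined: differentiating $\mb{H}(\lambda)\ket{e_n(\lambda)} = e_n(\lambda)\mb{I}(\lambda)\ket{e_n(\lambda)}$ at $\lambda = 0$, projecting onto $\bra{e_k^{(0)}}$ for $k \neq n$, and imposing intermediate normalization $\braket{e_n^{(0)} | e_n^{(1)}} = 0$ yields
\[
\ket{e_n^{(1)}} = \sum_{k \neq n} \frac{\bra{e_k^{(0)}}(\YH - e_n \DI)\ket{e_n^{(0)}}}{e_n - e_k^{(0)}}\ket{e_k^{(0)}}.
\]

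The crucial structural observation is that the leading $O(\lambda)$ term in Theorem~\ref{thm:2.1} vanishes. Since $\YH$ and $\DI$ are block-off-diagonal in the $L$/$R$ decomposition of $(\Hgen, \Sgen)$, while each bare eigenvector $\ket{e_n^{(0)}}$ is either a padded $\ket{\bar{L}_i}$ or $\ket{\bar{R}_j}$ and therefore lives entirely in a single block, the expectation value $\bra{e_n^{(0)}}(\YH - e_n\DI)\ket{e_n^{(0)}}$ equals zero. Hence $e_n(\lambda) - e_n$ is genuinely $O(\lambda^2)$. Collecting the $\lambda^2$ contributions from the expansion produces three pieces: (i) $-\tfrac{1}{2}\bra{e_n^{(0)}}\DI(\YH - e_n\DI)\ket{e_n^{(0)}}$ from expanding the left $\mb{T}(\lambda)$; (ii) $-\tfrac{1}{2}\bra{e_n^{(0)}}(\YH - e_n\DI)\DI\ket{e_n^{(0)}}$ from expanding the right $\mb{T}(\lambda)$; and (iii) $\bra{e_n^{(0)}}(\YH - e_n\DI)\ket{e_n^{(1)}}$ from the first-order eigenvector correction. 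Because $\YH$ and $\DI$ are real symmetric and $\ket{e_n^{(0)}}$ is real, contributions (i) and (ii) coincide by a transpose argument ($v^\top AB\,v = v^\top BA\,v$ for real symmetric $A,B$), so their sum collapses to $-\bra{e_n^{(0)}}(\YH - e_n\DI)\DI\ket{e_n^{(0)}}$.

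The final step is to substitute $\ket{e_n^{(1)}}$ into contribution (iii) and rearrange into the asymmetric form stated in the claim, where the right factor in the sum appears as $\bra{e_k^{(0)}}\YH\ket{e_n^{(0)}}$ rather than $\bra{e_k^{(0)}}(\YH - e_n\DI)\ket{e_n^{(0)}}$. I would decompose $\bra{e_k^{(0)}}(\YH - e_n\DI)\ket{e_n^{(0)}} = \bra{e_k^{(0)}}\YH\ket{e_n^{(0)}} - e_n\bra{e_k^{(0)}}\DI\ket{e_n^{(0)}}$, and dispose of the residual $\DI$-only piece using the reduced-resolvent identity $(\mb{H} - e_n)\,R\,\DI\ket{e_n^{(0)}} = \DI\ket{e_n^{(0)}}$, which is valid because $\bra{e_n^{(0)}}\DI\ket{e_n^{(0)}} = 0$ by the block structure. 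Combining this with the previously obtained $-\bra{e_n^{(0)}}(\YH - e_n\DI)\DI\ket{e_n^{(0)}}$ from (i)--(ii) should yield the single $+\bra{e_n^{(0)}}(\YH - e_n\DI)\DI\ket{e_n^{(0)}}$ term appearing in the claim.

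The main obstacle will be this last algebraic bookkeeping step: tracking signs across the resolvent manipulation, verifying that each block-structure cancellation (notably $\bra{e_n^{(0)}}\DI\ket{e_n^{(0)}} = 0$ and the vanishing of same-block intermediate matrix elements of $\DI$ and $\YH$) is correctly applied, and confirming that the asymmetric form of the sum in the proposition assembles exactly from the symmetric expression produced by second-order perturbation theory. The rest of the argument \emdash{} Taylor expansion and identification of the first-order eigenvector correction \emdash{} is essentially mechanical once the vanishing of the first-order energy term has been recognized as a consequence of the block structure.
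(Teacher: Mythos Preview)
Your overall strategy coincides with the paper's: expand the exact identity of Theorem~\ref{thm:2.1} to second order by Taylor-expanding $\mb{T}(\lambda)$ and the eigenvector, observe that the $O(\lambda)$ term vanishes by block structure, and collect the three $O(\lambda^2)$ contributions. Where you combine (i) and (ii) via the transpose identity $v^\top AB\,v = v^\top BA\,v$, the paper instead invokes the operator identity $\DI\YH = \YH\DI$; your justification is in fact cleaner, since only the expectation-value equality is needed.

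The substantive divergence is in your choice of $\ket{e_n^{(1)}}$. You differentiate the \emph{generalized} eigenvalue equation and obtain $(\YH - e_n\DI)$ in the numerator of the first-order correction. The paper instead observes that the $\ket{e_n(\lambda)}$ appearing in Theorem~\ref{thm:2.1} is the eigenvector of the \emph{standard} problem $\mb{H}_{\ms{QI}}(\lambda)$, then approximates $\mb{H}_{\ms{QI}}(\lambda) \approx \mb{H} + \lambda\YH$ and applies ordinary non-degenerate perturbation theory, yielding $\ket{e_n^{(1)}} = -\sum_{k\neq n}\tfrac{\YH_{kn}}{e_k^{(0)}-e_n}\ket{e_k^{(0)}}$ with $\YH$ alone. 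This shortcut produces the asymmetric form in the proposition immediately, bypassing your ``main obstacle'' rearrangement entirely. Your proposed resolvent manipulation, as stated, will not close cleanly: the eigenvectors of the generalized problem and of $\mb{H}_{\ms{QI}}$ differ at first order by $\tfrac{1}{2}\DI\ket{e_n^{(0)}}$ (from the factor $(\mb{I}+\lambda\DI)^{1/2}$), and without accounting for this extra piece the bookkeeping does not reduce to the claimed expression. The fix is simply to use the paper's $\ket{e_n^{(1)}}$ from the start.
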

\end{mdframed}

\begin{proof}
From Theorem~\ref{thm:2.1}, we have
\[
  e_n(\lambda) - e_n 
  = \lambda \bra{e_n}\mb{T}(\lambda) \left( \YH - e_n \DI \right)\mb{T}(\lambda) \ket{e_n(\lambda)}.
\]

Up to first-order approximation, write \( \ket{e_n(\lambda)} \approx \ket{e_n} + \lambda \ket{e_n^{(1)}} \). Then
\begin{align}
  e_n(\lambda) - e_n 
  &\approx \lambda \bra{e_n}\mb{T}(\lambda) (\YH - e_n \DI)\mb{T}(\lambda) \ket{e_n} 
  + \lambda^2 \bra{e_n}\mb{T}(\lambda) (\YH - e_n \DI)\mb{T}(\lambda) \ket{e_n^{(1)}}.
  \label{eqp6}
\end{align}

We now approximate
\[
\mb{T}(\lambda) = (\mb{I} + \lambda \DI)^{-1/2} \approx \mb{I} - \tfrac{1}{2} \lambda \DI.
\]
Hence,
\begin{align*}
 \mb{T}(\lambda) (\YH - e_n \DI)\mb{T}(\lambda) 
  &\approx (\mb{I} - \tfrac{1}{2} \lambda \DI) (\YH - e_n \DI)(\mb{I} - \tfrac{1}{2} \lambda \DI) \\
  &= (\YH - e_n \DI) - \lambda (\YH - e_n \DI) \DI + \tfrac{1}{4} \lambda^2 \DI^2 \\
  &\approx (\YH - e_n \DI) - \lambda (\YH - e_n \DI) \DI,
\end{align*}
where we used the identity \( \DI \YH = \YH \DI \).

Now note that \( \bra{e_n} (\YH - e_n \DI) \ket{e_n} = 0 \), since both \( \YH \) and \( \DI \) are zero on the diagonal blocks.

Therefore, the first term of Eq.~\eqref{eqp6} becomes
\[
\lambda \bra{e_n}\mb{T}(\lambda) (\YH - e_n \DI)\mb{T}(\lambda) \ket{e_n}
\approx \lambda^2 \bra{e_n} (\YH - e_n \DI) \DI \ket{e_n},
\]
and the second term becomes
\[
\lambda^2 \bra{e_n}\mb{T}(\lambda) (\YH - e_n \DI)\mb{T}(\lambda) \ket{e_n^{(1)}}
\approx \lambda^2 \bra{e_n} (\YH - e_n \DI) \ket{e_n^{(1)}}.
\]

To compute \( \ket{e_n^{(1)}} \), note that
\[
\mb{H}_{\ms{QI}}(\lambda)
\approx \mb{H} + \lambda \YH + \lambda^2 \YH \DI + \cdots,
\]
so the first-order correction \( \ket{e_n^{(1)}} \) agrees with standard non-degenerate perturbation theory for \( \mb{H} + \lambda \YH \).
That is,
\[
\ket{e_n^{(1)}} = -\sum_{k \neq n} \tfrac{ \ket{e_k^{(0)}} \YH_{kn} }{e_k^{(0)} - e_n^{(0)}},
\qquad \text{where } \YH_{kn} = \bra{e_k^{(0)}} \YH \ket{e_n}, \text{ and } e_n = e_n^{(0)}.
\]

Then,
\begin{align*}
  \bra{e_n} (\YH - e_n \DI) \ket{e_n^{(1)}} 
  &= -\sum_{k \neq n} \tfrac{ \bra{e_n} (\YH - e_n \DI) \ket{e_k^{(0)}} \YH_{kn} }{e_k^{(0)} - e_n^{(0)}} \\
  &= -\sum_{k \neq n} \tfrac{ \bra{e_n} (\YH - e_n \DI) \ket{e_k^{(0)}} \bra{e_k^{(0)}} \YH \ket{e_n} }{e_k^{(0)} - e_n^{(0)}}.
\end{align*}

Thus, the full second-order correction is
\[
e_n(\lambda) - e_n \approx \lambda^2 \left( 
  \bra{e_n} (\YH - e_n \DI) \DI \ket{e_n} 
  - \sum_{k \neq n} \tfrac{ \bra{e_n} (\YH - e_n \DI) \ket{e_k^{(0)}} \bra{e_k^{(0)}} \YH \ket{e_n} }{e_k^{(0)} - e_n}
\right).
\]
\end{proof}

\begin{mdframed}
\begin{corollary}
  \label{cor:lambda-one}
  If \( |e_k^{(0)} - e_n| \) is bounded away from zero for all \( k \neq n \) (i.e., \( e_n \) is non-degenerate), then
  \begin{align}
    e_n(1) - e_n 
    \approx \bra{e_n}(\YH - e_n \DI) \DI \ket{e_n} 
    - \sum_{k \neq n} \tfrac{ 
        \bra{e_n} (\YH - e_n \DI) \ket{e_k^{(0)}} 
        \bra{e_k^{(0)}} \YH \ket{e_n} 
      }{e_k^{(0)} - e_n}.  
   \label {eq:corr}
  \end{align}
\end{corollary}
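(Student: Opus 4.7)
The plan is to obtain the corollary by formal substitution of $\lambda = 1$ into the second-order expression established in Proposition~\ref{cor:lambda-small}. The key observation is that the right-hand side of that proposition is a $\lambda^2$-prefactored combination of matrix elements $\bra{e_n}(\YH - e_n \DI)\ket{e_k^{(0)}}$, $\bra{e_k^{(0)}}\YH\ket{e_n}$, and denominators $e_k^{(0)} - e_n$, all of which depend only on the bare data $(\mb{H}, \mb{I}, \YH, \DI)$ and are independent of $\lambda$. Setting $\lambda = 1$ therefore immediately produces the stated formula, provided the perturbative expansion remains valid at the endpoint.

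Justifying validity at $\lambda = 1$ is the conceptual core of the argument. By Lemma~\ref{lemma:invertible}, $\Sgen(\lambda) = \mb{I} + \lambda \DI$ is invertible on $[0,1)$, so the standard-form operator $\mb{H}_{\ms{QI}}(\lambda)$ and the analytic branch $e_n(\lambda)$ are well-defined and smooth there. By Theorem~\ref{thm:generalized-eigenpair-equivalence}, the generalized eigenvalues at $\lambda = 1$ coincide with the true eigenvalues $E_n$ of $\Hcore$, which are finite. The failure of invertibility of $\Sgen(1)$ is a representational artifact arising from the identification $\ket{\bar{L}_0} = \ket{\bar{R}_0}$ in the overcomplete padded basis; it produces one additional spurious zero eigenvalue of the pencil (as noted in the Remark preceding Lemma~\ref{lemma:invertible}) but introduces no singularity in the genuine part of the spectrum. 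Consequently, $e_n(\lambda)$ extends continuously to $\lambda = 1$, and the bound in Proposition~\ref{cor:lambda-small} may be evaluated at the endpoint term by term.

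Under the non-degeneracy assumption that $|e_k^{(0)} - e_n|$ is bounded away from zero for all $k \neq n$, each denominator in the perturbative sum stays controlled, so the right-hand side of the corollary is finite and well-defined. The cubic and higher corrections that were dropped in Proposition~\ref{cor:lambda-small} involve additional factors of $\mb{T}(\lambda) - \mb{I}$ and higher-order resolvents; under non-degeneracy these contribute $O(\lambda^3)$ terms which remain small at $\lambda = 1$ whenever the coupling blocks $\mb{H}_{LR}$ and $\mb{S}_{LR}$ are small in operator norm relative to the minimum bare-level spacing.

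The main obstacle is that the formal substitution $\lambda = 1$ sits exactly at the boundary where $\Sgen$ is singular, so one cannot simply invoke analytic continuation as a black box; one must argue that the truncated second-order expansion still approximates the true deformation there. I would handle this by reading the corollary as an asymptotic statement: for each bare level $e_n$ that is well-isolated from the rest of the spectrum, the estimate of Proposition~\ref{cor:lambda-small} holds uniformly on $[0,1)$ with remainder $O(\lambda^3 \, \|\YH\|^3/\delta^2)$, where $\delta$ denotes the minimum bare-level spacing, and this remainder stays small at $\lambda = 1$ precisely in the regime away from the anti-crossing where the non-degeneracy hypothesis applies. The complementary regime, in which two bare levels approach degeneracy, is exactly the one handled separately by the reduced $2 \times 2$ effective-Hamiltonian analysis in Section~\ref{sec:ac-gap}.
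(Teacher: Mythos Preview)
Your proposal is correct and aligns with the paper's approach: the paper simply records Corollary~\ref{cor:lambda-one} as the $\lambda = 1$ specialization of Proposition~\ref{cor:lambda-small}, with no further argument. Your discussion of why the endpoint substitution is admissible (continuity of the genuine branches, the singularity of $\Sgen(1)$ being a representational artifact, and the non-degeneracy hypothesis controlling the denominators) goes beyond what the paper supplies and is a welcome elaboration rather than a deviation.
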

\end{mdframed}

 \subsubsection{Exact Bare Eigenvalues and Eigenstates and Some Combinatorial Facts}

In this section, we recall the exact eigenvalues and eigenstates of the bare subsystems \( \mb{H}_L \) and \( \mb{H}_R \)
(from Section 7 of \cite{Choi-Beyond}), and state several combinatorial identities that will be used in deriving the bounds in the next section.

For simplicity, we assume the unweighted case with \( w_r = w_l = w (=1) \), and take \( R \) to be the unique global minimum with \( n_R = 1 \). We also assume that all cliques in \( L \) have uniform size \( n_c = n_L \), and that \( m_r =: m_R > m_L:= m_l \) but \( m_R < m_L \sqrt{n_c} \), so that the bare (i.e., unperturbed) energies cross at some value \( \mt{x} = x_c \).

\begin{remark}[Notation]
In describing the bipartite graphs (\Gdis{} and \Gshare{}), 
we use \( m_l, m_r \) and \( n_l, n_r \) for the sizes of the left and right
vertex sets and cliques, respectively.
In the following analytical sections, we switch to \( m_L, m_R \) and \( n_L, n_R \),
matching the convention \( \chi_A \) for \( A \in \{L,R\} \) used for other indexed quantities
(e.g., \( E_{A_k} \), \(\ket{A_k} \)).
These pairs of symbols refer to the same quantities.
\end{remark}

\paragraph{Application of Exact Spectrum.}
We apply the exact spectrum and eigenstates of the bare subsystem for \( \mb{H}_L \), \( \mb{H}_R \) from Theorem 7.1 in \cite{Choi-Beyond}
to the case where \( n_i = n_A \) for all \( i \in A \in \{L, R\} \), and \( w = 1 \). This yields the following exact form for the bare eigenvalues and eigenstates:

\begin{mdframed}
\begin{theorem}[Bare Eigenvalues and Ground States for \( \mb{H}_L \), \( \mb{H}_R \)]
For \( A \in \{L, R\} \), the eigenvalues of \( \mb{H}_A \) are given by:
\begin{align}
  e_{A_k} = -\left( \tfrac{m_A}{2}w + (\tfrac{m_A}{2} - k) \sqrt{w^2 + n_A \mt{x}^2} \right),
  \quad k = 0, \ldots, m_A.
\end{align}
($k$ is the number of one's).

Each eigenstate is indexed by a bit string \( z = z_1 z_2 \ldots z_{m_A} \in \{0,1\}^{m_A} \), with:
\begin{align}
  \ket{E_z^{(A)}} = \bigotimes_{i=1}^{m_A} \ket{\beta_{z_i}^{(A)}},
\end{align}
where \( \ket{\beta_0^{(A)}} \) and \( \ket{\beta_1^{(A)}} \) are the eigenvectors of the local two-level system:
\begin{align*}
\begin{cases}
\ket{\beta_0^{(A)}} &= \tfrac{1}{\sqrt{1 + \gamma_A^2}} \left( \gamma_A \ket{0} +  \ket{1} \right), \\[4pt]
\ket{\beta_1^{(A)}} &= \tfrac{1}{\sqrt{1 + \gamma_A^2}} \left( \ket{0} - \gamma_A \ket{1} \right),
\end{cases}
\end{align*}
with mixing ratio
\(
\gamma_A = \tfrac{ \sqrt{n_A} \mt{x} }{ w + \sqrt{ w^2 + n_A \mt{x}^2 } }.
\)

\paragraph{Ground State Energies and Vectors.}
In particular, the ground states of \( \mb{H}_L \) and \( \mb{H}_R \) correspond to the all-zero string \( z = 0\ldots 0 \), and take the form:
\begin{align*}
  \begin{cases}
  \ket{R_0} &= \bigotimes_{i=1}^{m_R} \ket{\beta_0^{(R)}}, \\[4pt]
  \ket{L_0} &= \bigotimes_{i=1}^{m_L} \ket{\beta_0^{(L)}}.
  \end{cases}
\end{align*}

The corresponding ground state energies are:
\begin{align*}
\begin{cases}
e_{L_0} &= -\tfrac{m_L}{2} \left( w + \sqrt{w^2 + n_L \mt{x}^2} \right), \\
e_{R_0} &= -\tfrac{m_R}{2} \left( w + \sqrt{w^2 + n_R \mt{x}^2} \right).
\end{cases}
\end{align*}
\end{theorem}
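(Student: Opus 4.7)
The plan is to observe that under the stated assumptions (uniform clique size $n_i = n_A$, unit weight $w = 1$, and $\Jxx = 0$), the general bare-subsystem spectrum of Theorem~7.1 of~\cite{Choi-Beyond} specializes immediately, so the proof reduces to assembling already-derived pieces. First I would recall from Eq.~\eqref{eq:Hsc} in Section~\ref{sec:single-clique} that in the $\Jxx = 0$ same-sign sector each clique $i \in L$ contributes an independent two-level block $\mb{B}_i(\weff_i, \sqrt{n_i}\mt{x})$ acting on its effective spin-$\tfrac{1}{2}$ subspace, and in the $R$-case each vertex contributes $\mb{B}(w, \mt{x})$, consistent with $n_R = 1$. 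With uniform parameters these are all identical copies of $\mb{B}(w, \sqrt{n_A}\mt{x})$. Because they act on distinct tensor factors they pairwise commute, so the bare Hamiltonian on the same-sign sector is unitarily equivalent to
\[
\mb{H}_A^{\ms{bare}} \;=\; \sum_{i=1}^{m_A} \mb{I} \otimes \cdots \otimes \mb{B}_{(i)}(w, \sqrt{n_A}\mt{x}) \otimes \cdots \otimes \mb{I}.
\]

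Second, because all summands commute, the joint eigenstates are tensor products of single-site eigenstates and the eigenvalues are sums. Labelling each factor by $z_i \in \{0,1\}$ and using the two local eigenvectors $\ket{\beta_0^{(A)}}, \ket{\beta_1^{(A)}}$ from Eq.~\eqref{eq:B-evecs} (with $x \leftarrow \sqrt{n_A}\mt{x}$, giving the mixing ratio $\gamma_A = \sqrt{n_A}\mt{x}/(w+\sqrt{w^2+n_A\mt{x}^2})$), I would write $\ket{E_z^{(A)}} = \bigotimes_{i=1}^{m_A} \ket{\beta_{z_i}^{(A)}}$. Summing local contributions, if $k = |z|$ denotes the number of ones, the total energy is $(m_A - k)\beta_0 + k \beta_1$ with $\beta_k = -\tfrac{1}{2}\bigl(w + (-1)^k\sqrt{w^2 + n_A\mt{x}^2}\bigr)$ from Eq.~\eqref{eq:B-evals}. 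Routine algebra collapses this into
\[
e_{A_k} \;=\; -\left(\tfrac{m_A}{2}\, w + \left(\tfrac{m_A}{2} - k\right)\sqrt{w^2 + n_A\mt{x}^2}\right),
\]
which is exactly the stated formula. The ground state in each subsystem corresponds to $k = 0$ (all factors in $\ket{\beta_0^{(A)}}$), yielding the claimed $\ket{L_0}, \ket{R_0}$ and ground energies $e_{L_0}, e_{R_0}$.

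This is essentially a bookkeeping exercise rather than a substantive argument: the spectrum of $\mb{B}(w,x)$ was already given analytically in Section~\ref{subsec:basic-matrix}, the single-clique reduction to $\mb{B}(\weff_c, \sqrt{n_c}\mt{x})$ was established in Theorem~\ref{thm:Sop}, and the present theorem simply assembles $m_A$ independent copies of that single-site structure. The only points requiring care are (i) distinguishing $n_L = n_c$ from $n_R = 1$ when substituting into the generic formulas, and (ii) verifying that the uniform-weight assumption together with $\Jxx=0$ collapses the per-site effective weights $\weff_i$ to $w$, so that every local block is genuinely identical and the tensor-product factorization is clean. I would not expect any hidden obstacle beyond these notational checks.
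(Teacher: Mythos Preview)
Your proposal is correct and matches the paper's own treatment: the theorem is stated there without a standalone proof, introduced merely as the specialization of Theorem~7.1 in~\cite{Choi-Beyond} to the uniform case $n_i = n_A$, $w = 1$. Your explicit tensor-product argument (independent identical copies of $\mb{B}(w,\sqrt{n_A}\mt{x})$, eigenvalues summed via $(m_A-k)\beta_0 + k\beta_1$) is exactly the content behind that reference, and your notational caveats (i) and (ii) are the only points that need checking.
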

\end{mdframed}

We define the overlap of an eigenstate \( \ket{A_k} \), where \( k \) denotes the number of 1's in the bit string indexing the state, with the all-zero computational basis state as
\[
\co(A_k) \mdef \braket{A_k \mid 0_A},
\]
where \( \ket{0_A} \equiv \ket{0}^{\otimes m_A} \) is the zero state on subsystem \( A \in \{L, R\} \).

The following combinatorial identities will be used in the gap analysis:

\begin{proposition}
Let \( A \in \{L, R\} \) and let \( \ket{A_k} \) denote any eigenstate with Hamming weight \( k \). Then the overlap of \( \ket{A_k} \) with the all-zero computational state is:
\begin{align}
    \co(A_k) = \left( \tfrac{1}{\sqrt{1 + \gamma_A^2}} \right)^{m_A}  \gamma_A^{m_A-k}
\end{align}
In particular, for the ground state \( \ket{A_0} \), corresponding to \( k = 0 \), we have:
\begin{align}
    \co(A_0) = \left( \tfrac{\gamma_A}{\sqrt{1 + \gamma_A^2}} \right)^{m_A}
    \label{eq:co}
\end{align}
\end{proposition}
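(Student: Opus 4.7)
The plan is to exploit the tensor-product structure of the eigenstates $\ket{A_k}$, which reduces the overlap $\co(A_k) = \braket{A_k | 0_A}$ to a product of single-site overlaps. From the theorem above, any eigenstate indexed by the bit string $z$ of Hamming weight $k$ has the form $\ket{A_k} = \bigotimes_{i=1}^{m_A} \ket{\beta_{z_i}^{(A)}}$, while the all-zero computational basis state factorizes as $\ket{0_A} = \ket{0}^{\otimes m_A}$. Hence the overlap factorizes as $\braket{A_k | 0_A} = \prod_{i=1}^{m_A} \braket{\beta_{z_i}^{(A)} | 0}$, and the computation localizes to two single-site overlaps.

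First I would read off the two single-site overlaps directly from the closed forms of $\ket{\beta_0^{(A)}}$ and $\ket{\beta_1^{(A)}}$ given in the theorem, obtaining $\braket{\beta_0^{(A)} | 0} = \gamma_A/\sqrt{1+\gamma_A^2}$ and $\braket{\beta_1^{(A)} | 0} = 1/\sqrt{1+\gamma_A^2}$. Next, I would count the number of each factor appearing in the product: since $z$ has exactly $k$ ones and $m_A - k$ zeros, the product contains $m_A - k$ copies of the $\beta_0$-overlap and $k$ copies of the $\beta_1$-overlap. Multiplying these yields
\[
  \co(A_k) = \left(\tfrac{\gamma_A}{\sqrt{1+\gamma_A^2}}\right)^{m_A - k}\left(\tfrac{1}{\sqrt{1+\gamma_A^2}}\right)^{k} = \left(\tfrac{1}{\sqrt{1+\gamma_A^2}}\right)^{m_A} \gamma_A^{m_A - k},
\]
which is the claimed identity; specializing to $k = 0$ recovers the ground-state formula in Eq.~\eqref{eq:co}.

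Since the result follows directly from the factorized structure, there is no substantive obstacle. The only point requiring care is the indexing convention: the Hamming weight $k$ counts positions with the \emph{excited} local factor $\ket{\beta_1^{(A)}}$ rather than the ground factor, so the exponent of $\gamma_A$ in the final expression is $m_A - k$ rather than $k$. A quick sanity check at $k = 0$ (no local excitations, all factors $\ket{\beta_0^{(A)}}$) and at $k = m_A$ (all local excitations, yielding $\co(A_{m_A}) = (1+\gamma_A^2)^{-m_A/2}$) confirms that the bookkeeping is consistent with the ground-state expression.
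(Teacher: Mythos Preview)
Your proposal is correct and is exactly the computation the paper has in mind; the paper in fact states this proposition without proof, presumably because it follows immediately from the tensor-product form of $\ket{E_z^{(A)}}$ and the explicit single-site eigenvectors, precisely as you have written out.
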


One can verify the following non-obvious identity:
\begin{lemma}
\label{lemma:surprising-id}
Let \( \gamma_A = \tfrac{\sqrt{n_A} \mt{x}}{1 + \sqrt{1 + n_A \mt{x}^2}} \), we have
\(
\tfrac{1 - \gamma_A^2}{1 + \gamma_A^2} = \tfrac{1}{\sqrt{1 + n_A \mt{x}^2}}.
\)
\end{lemma}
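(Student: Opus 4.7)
The plan is to verify the identity by direct algebraic manipulation, using a substitution that makes a Pythagorean-type identity apparent. Set $u = \sqrt{n_A}\,\mt{x}$ and $v = \sqrt{1 + n_A \mt{x}^2}$, so that $v^2 = 1 + u^2$ and $\gamma_A = u/(1+v)$. This reduces the goal to showing $\tfrac{1-\gamma_A^2}{1+\gamma_A^2} = \tfrac{1}{v}$.

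First I would compute the common denominator form
$$1 + \gamma_A^2 = \frac{(1+v)^2 + u^2}{(1+v)^2}, \qquad 1 - \gamma_A^2 = \frac{(1+v)^2 - u^2}{(1+v)^2}.$$
Expanding $(1+v)^2 = 1 + 2v + v^2$ and using $v^2 = 1 + u^2$ gives $(1+v)^2 = 2 + 2v + u^2$. Hence the numerator of $1 + \gamma_A^2$ is $2 + 2v + 2u^2 = 2(1 + v + u^2)$, while the numerator of $1 - \gamma_A^2$ telescopes to $2 + 2v = 2(1+v)$.

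Taking the ratio, the common factor $(1+v)^2/2$ cancels, leaving
$$\frac{1 - \gamma_A^2}{1 + \gamma_A^2} = \frac{1+v}{1 + v + u^2}.$$
Substituting $1 + u^2 = v^2$ into the denominator, it becomes $v + v^2 = v(1+v)$, and the $(1+v)$ factors cancel to yield $1/v = 1/\sqrt{1 + n_A\mt{x}^2}$, as required.

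There is no genuine obstacle here; the identity is purely computational. The only conceptual step is recognizing the substitution $v = \sqrt{1 + n_A\mt{x}^2}$, which exposes the relation $v^2 - u^2 = 1$ and allows both the numerator and denominator simplifications to telescope cleanly. Without this substitution, one is left manipulating nested radicals, which obscures the cancellation.
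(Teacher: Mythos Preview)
Your proof is correct. The paper does not supply its own proof of this lemma---it simply states that ``one can verify'' the identity---so your direct algebraic verification via the substitution $u=\sqrt{n_A}\,\mt{x}$, $v=\sqrt{1+n_A\mt{x}^2}$ (exposing the relation $v^2-u^2=1$) is exactly the kind of check the paper leaves to the reader.
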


Together with the two well-known binomial summation identities:
\begin{align}
  \sum_{k=0}^{m} \binom{m}{k} a^k &= (1+a)^m, \quad
  \sum_{k=0}^{m} \binom{m}{k} k a^k = \tfrac{a}{1+a}\, m (1+a)^m,
  \label{eq:binom-sum}
\end{align}
we have the folowing perhaps surprising identity.
\begin{lemma}
\label{prop:energy-weighted-sum}
Let \( A \in \{L, R\} \), and let \( e_{A_k} \) denote the energy of the eigenstate \( \ket{A_k} \) with Hamming weight \( k \). Then
\begin{align}
  \sum_{k=0}^{m_A} e_{A_k} \left( \braket{A_k \mid 0_A} \right)^2 = 0.
\end{align}
\end{lemma}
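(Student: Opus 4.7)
The plan is to recognize the left-hand side as the spectral resolution of $\bra{0_A}\mb{H}_A\ket{0_A}$ in the bare eigenbasis of $\mb{H}_A$, and then to evaluate this expectation directly. The key ingredient is the single per-site identity $\beta_0\gamma_A^2 + \beta_1 = 0$, which follows from Lemma~\ref{lemma:surprising-id} combined with the explicit forms $\beta_0 = -(w+E_A)/2$, $\beta_1 = (E_A - w)/2$, and $\gamma_A^2 = (E_A - w)/(E_A + w)$ with $E_A = \sqrt{w^2 + n_A \mt{x}^2}$: direct substitution gives $\beta_0\gamma_A^2 = -(E_A - w)/2 = -\beta_1$.

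First, I would use the product-state structure from the Theorem on Bare Eigenvalues to record $e_{A_k} = (m_A - k)\beta_0 + k\beta_1$, reflecting that the eigenstates $\ket{A_z} = \bigotimes_i \ket{\beta_{z_i}^{(A)}}$ have energies depending only on the Hamming weight $k$. Grouping the spectral decomposition $\bra{0_A}\mb{H}_A\ket{0_A} = \sum_z e_{A_z}|\braket{A_z|0_A}|^2$ by Hamming weight---and noting that all bare eigenstates of weight $k$ share the same overlap magnitude with the product state $\ket{0_A} = \ket{0}^{\otimes m_A}$---recovers exactly the sum in the lemma. Second, I would evaluate $\bra{0_A}\mb{H}_A\ket{0_A}$ directly via the tensor-product decomposition $\mb{H}_A = \sum_i \mb{B}_i(w, \sqrt{n_A}\mt{x})$: the expectation factorizes to $m_A \cdot \bra{0}\mb{B}(w, \sqrt{n_A}\mt{x})\ket{0}$, which vanishes on the appropriate zero-diagonal basis element of $\mb{B}$ in Eq.~\eqref{eq:B-matrix}.

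As an algebraic cross-check, I would substitute the explicit forms of $e_{A_k}$ and $(\braket{A_k|0_A})^2 = \gamma_A^{2(m_A - k)}/(1+\gamma_A^2)^{m_A}$ from Eq.~\eqref{eq:co}, set $u = \gamma_A^2$, and apply the binomial identities in Eq.~\eqref{eq:binom-sum} to reorganize the sum. After reindexing $j = m_A - k$, the $\beta_0$- and $\beta_1$-contributions assemble into a common factor $(\beta_0 u + \beta_1)$ multiplied by a positive polynomial in $u$, and the key identity $\beta_0 u + \beta_1 = 0$ then makes the whole sum vanish. \textbf{The hard part} is the spectral bookkeeping between the Hamming-weight-indexed sum and individual eigenstate contributions: $(\braket{A_k|0_A})^2$ must absorb the degeneracy at each energy level in a way consistent with the finite-range binomial identities invoked in the algebraic simplification. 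Once this accounting is made precise, the direct computation $\bra{0_A}\mb{H}_A\ket{0_A} = 0$ becomes essentially transparent, and the algebraic route confirms it through the single per-site identity $\beta_0\gamma_A^2 + \beta_1 = 0$ derived at the outset.
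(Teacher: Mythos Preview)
Your spectral-decomposition route is correct and is genuinely different from the paper's purely algebraic computation. The paper substitutes the explicit forms of $e_{A_k}$ and $(\braket{A_k|0_A})^2$, inserts the degeneracy factor $\binom{m_A}{k}$, applies the binomial identities in Eq.~\eqref{eq:binom-sum}, and closes with Lemma~\ref{lemma:surprising-id}. Your approach---recognize the sum as $\bra{0_A}\mb{H}_A\ket{0_A}$ and observe this is the diagonal entry of the empty-set state, hence zero---is shorter and explains \emph{why} the identity holds rather than just verifying it. Your algebraic cross-check (factoring out $\beta_0 u + \beta_1$ after reindexing) is essentially the paper's computation, reorganized around the per-site identity, and in fact lands more cleanly: the paper arrives at $e_{A_0} + \tfrac{m_A E_A}{1+\gamma_A^2}$ and then invokes Lemma~\ref{lemma:surprising-id}, whereas your factorization makes the vanishing manifest in one step.

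One caution on notation: you write $\ket{0_A} = \ket{0}^{\otimes m_A}$ and then say the expectation ``vanishes on the appropriate zero-diagonal basis element of $\mb{B}$.'' In the $\mb{B}(w,x)$ convention of Eq.~\eqref{eq:B-matrix}, the zero-diagonal entry is the $(1,1)$ entry, i.e.\ the \emph{second} basis vector (the empty-set state $\ket{s,-s}$ in the reordered clique basis $\Bc'$), while the first basis vector has diagonal $-w$. The paper's ``all-zero computational basis state'' $\ket{0_A}$ is the empty-set state, which under the effective spin-$\tfrac12$ convention is $\ket{1}^{\otimes m_A}$, not $\ket{0}^{\otimes m_A}$. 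Your hedge shows you are aware of this, but make it explicit: if you accidentally take $\ket{0}^{\otimes m_A}$ in the $\mb{B}$ basis, the expectation is $-m_A w$, not zero. The degeneracy bookkeeping you flag as ``the hard part'' is exactly what the paper handles by inserting $\binom{m_A}{k}$; once that is in place, your spectral identity is immediate.
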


\begin{proof}
We begin by expressing the energy as
\(
e_{A_k}=  e_{A_0} + k \cdot \sqrt{1 + n_A \mt{x}^2}.
\)
The squared overlap of the eigenstate \( \ket{A_k} \) with the all-zero computational basis state is
\[
\left( \braket{A_k \mid 0_A} \right)^2 = \left( \tfrac{1}{1 + \gamma_A^2} \right)^{m_A} (\gamma_A^2)^{m_A-k}.
\]

There are \( \binom{m_A}{k} \) such states at each Hamming weight \( k \), so the full sum becomes:
\begin{align*}
S &:= \sum_{k=0}^{m_A} \binom{m_A}{k} e_{A_k} \left( \braket{A_k \mid 0_A} \right)^2 \\
&= \left( \tfrac{1}{1 + \gamma_A^2} \right)^{m_A}
\sum_{k=0}^{m_A} \binom{m_A}{k} \left( e_{A_0} + k \cdot \sqrt{1 + n_A \mt{x}^2} \right) (\gamma_A^2)^{m_A-k}.
\end{align*}

By the binomial summation identities in Eq.~\eqref{eq:binom-sum}, we have
\begin{align*}
S &= \left( \tfrac{1}{1 + \gamma_A^2} \right)^{m_A} \cdot
\left[
(1 + \gamma_A^2)^{m_A} \left( e_{A_0} + \tfrac{1}{1 + \gamma_A^2} m_A \cdot \sqrt{1 + n_A \mt{x}^2} \right)
\right] \\
&= e_{A_0} + \tfrac{1}{1 + \gamma_A^2} \cdot m_A \cdot \sqrt{1 + n_A \mt{x}^2}\\
&= -\left( \tfrac{m_A}{2} + \tfrac{m_A}{2} \sqrt{1+ n_A \mt{x}^2} \right) +  \tfrac{1}{1 + \gamma_A^2} \cdot m_A \cdot \sqrt{1 + n_A \mt{x}^2}\\
&= -\tfrac{m_A}{2} - \tfrac{m_A}{2} \sqrt{1+ n_A \mt{x}^2} \tfrac{\gamma_A^2-1}{1 + \gamma_A^2} \\
&= 0 \qquad \text{(by Lemma~\ref{lemma:surprising-id})}.
\end{align*}
\end{proof}


\subsubsection{Anti-Crossing Structure: Energy Approximation and Gap Bound}
\label{sec:ac-gap}

\subsubsection*{Ground State Energy Approximation}
Let \( e_{A_0} \) denote the bare ground state energy, and let \( E_0 = e_{A_0}(1) \) be the true ground state energy of \( \Hcore \).
We apply the correction formula in Eq.~\eqref{eq:corr} from
Corollary~\ref{cor:lambda-one} to the ground state energy before and after the anti-crossing to obtain an approximation
of the difference \( E_0 - e_{A_0} \).
Assume the bare energy ordering satisfies \( e_{L_0} < e_{R_0} \) before the anti-crossing, and reverses to \( e_{R_0} < e_{L_0} \) afterward.  
Suppose further that the energy separation satisfies \( |e_{L_0} - e_{R_0}| > 1 \) (i.e. away from the anti-crossing), so that a first-order correction suffices.  
Then the corrected ground state energies can be written explicitly as weighted sums over overlaps and energy gaps between the \( L \) and \( R \) blocks, as given in Proposition~\ref{prop:explicit-correction}.

\begin{proposition}
  \label{prop:explicit-correction}
  Suppose that before the anti-crossing, \( e_{L_0} < e_{R_0} \), and that this ordering reverses afterward.  
  Assume further that \( |e_{L_0} - e_{R_0}| > 1 \), so that a first-order correction suffices.
  Let \( E_0 \) denote the true ground state energy of \( \Hcore \).

  \begin{itemize}
    \item Before the anti-crossing:
    \[
      E_0 - e_{L_0} \approx 
      -2 e_{L_0} \sum_j \tfrac{e_{R_j}}{e_{R_j} - e_{L_0}} 
      \left( \braket{\bar{L}_0 | \bar{R}_j} \right)^2.
    \]
    
    \item After the anti-crossing:
    \[
      E_0 - e_{R_0} \approx 
      -2 e_{R_0} \sum_j \tfrac{e_{L_j}}{e_{L_j} - e_{R_0}} 
      \left( \braket{\bar{R}_0 | \bar{L}_j} \right)^2.
    \]
  \end{itemize}
\end{proposition}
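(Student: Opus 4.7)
The plan is to apply the perturbative correction formula of Corollary~\ref{cor:lambda-one} to the bare ground level $e_n = e_{L_0}$, evaluate both terms using the explicit matrix elements of $\YH$ and $\DI$ in the padded bare basis, and collapse the result via the vanishing identity of Lemma~\ref{prop:energy-weighted-sum}. The assumption $|e_{L_0} - e_{R_0}| > 1$ is precisely the non-degeneracy hypothesis required by Corollary~\ref{cor:lambda-one}, and the post-anti-crossing case is symmetric under $L\leftrightarrow R$, so I focus on the pre-anti-crossing regime.

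First I would record the relevant matrix elements. Since $\ket{\bar{L}_i}=\ket{L_i}\otimes\ket{0}_R$ and $\ket{\bar{R}_j}=\ket{0}_L\otimes\ket{R_j}$, a direct calculation using $\Hcore = \HL\otimes\ket{0}_R\bra{0} + \ket{0}_L\bra{0}\otimes\HR$ yields $\braket{\bar{L}_i | \bar{R}_j} = \co(L_i)\co(R_j)$ and $\bra{\bar{L}_i}\Hcore\ket{\bar{R}_j} = (e_{L_i}+e_{R_j})\co(L_i)\co(R_j)$. Consequently the off-diagonal LR entries of $\YH - e_{L_0}\DI$ equal $(e_{L_i}+e_{R_j}-e_{L_0})\co(L_i)\co(R_j)$, which for $i=0$ collapse to the particularly clean $e_{R_j}\co(L_0)\co(R_j)$. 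This reduces the full correction to tracking polynomial combinations of $e_{R_j}$ weighted by $\co(R_j)^2$.

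Next I would evaluate the two contributions in the correction formula. Because $\YH$ and $\DI$ are both block-off-diagonal, the diagonal term $\bra{\bar{L}_0}(\YH - e_{L_0}\DI)\DI\ket{\bar{L}_0}$ collapses to $\co(L_0)^2 \sum_j e_{R_j}\co(R_j)^2$, which vanishes immediately by Lemma~\ref{prop:energy-weighted-sum}. The sum-over-states term likewise reduces to a sum only over $\ket{\bar{R}_j}$, producing
\[
  E_0 - e_{L_0} \;\approx\; -\co(L_0)^2 \sum_j \frac{e_{R_j}(e_{L_0}+e_{R_j})\co(R_j)^2}{e_{R_j}-e_{L_0}}.
\]

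The main obstacle---and the only non-mechanical step---is showing that this last sum simplifies to $2 e_{L_0}\sum_j e_{R_j}\co(R_j)^2/(e_{R_j}-e_{L_0})$. My approach is to repeatedly split the numerator so that each split produces one piece proportional to $(e_{R_j}-e_{L_0})$ (which cancels the denominator and vanishes by Lemma~\ref{prop:energy-weighted-sum}) and one residual piece of lower effective degree in $e_{R_j}$. Two successive splits---first $e_{R_j}(e_{L_0}+e_{R_j}) = -e_{R_j}(e_{R_j}-e_{L_0}) + 2e_{R_j}^2$, then $e_{R_j}^2 = e_{R_j}(e_{R_j}-e_{L_0}) + e_{L_0}e_{R_j}$---extract the prefactor $e_{L_0}$ and yield the claimed identity. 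The $L\leftrightarrow R$ symmetry of all the underlying matrix elements then produces the post-anti-crossing formula by the same argument applied to $e_n = e_{R_0}$.
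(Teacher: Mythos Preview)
Your argument is correct, but it takes a noticeably longer route than the paper's. Both proofs start identically: apply Corollary~\ref{cor:lambda-one} with $e_n=e_{L_0}$, restrict the $k$-sum to the $R$-block (since $\YH,\DI$ are block off-diagonal), and use the identities $\bra{e_{L_0}}\DI\ket{e_{R_j}}=\braket{\bar{L}_0|\bar{R}_j}$ and $\bra{e_{L_0}}\YH\ket{e_{R_j}}=(e_{L_0}+e_{R_j})\braket{\bar{L}_0|\bar{R}_j}$. From there the paper simply keeps the two contributions of Eq.~\eqref{eq:corr} together under one $j$-sum, obtaining
\[
\sum_j e_{R_j}\,(\braket{\bar{L}_0|\bar{R}_j})^2\Bigl(1-\tfrac{e_{R_j}+e_{L_0}}{e_{R_j}-e_{L_0}}\Bigr),
\]
and the elementary identity $1-\tfrac{a+b}{a-b}=\tfrac{-2b}{a-b}$ gives the result in one line, with no use of Lemma~\ref{prop:energy-weighted-sum} at all. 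You instead discard the first term via Lemma~\ref{prop:energy-weighted-sum} and then need two further applications of the same lemma (your two numerator splits) to pull the factor $2e_{L_0}$ out of the remaining piece. Your approach is sound and self-contained, but the paper's grouping avoids the lemma entirely and makes the derivation essentially mechanical; the lemma is reserved for the subsequent step of bounding the size of the correction.
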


\begin{proof}
We apply the correction formula in Eq.~\eqref{eq:corr} from
Corollary~\ref{cor:lambda-one} to the ground state energy before and after the anti-crossing:
\begin{itemize}
    \item Before the anti-crossing:
    \begin{align}
      E_0 - e_{L_0} \approx 
      \sum_j \bra{e_{L_0}} ( \YH - e_{L_0} \DI ) \ket{e_{R_j}} 
      \left( \bra{e_{R_j}} \DI \ket{e_{L_0}} - \tfrac{ \bra{e_{R_j}} \YH \ket{e_{L_0}} }{ e_{R_j} - e_{L_0} } \right)
    \end{align}

    \item After the anti-crossing:
    \begin{align}
      E_0 - e_{R_0} \approx 
      \sum_j \bra{e_{R_0}} ( \YH - e_{R_0} \DI ) \ket{e_{L_j}} 
      \left( \bra{e_{L_j}} \DI \ket{e_{R_0}} - \tfrac{ \bra{e_{L_j}} \YH \ket{e_{R_0}} }{ e_{L_j} - e_{R_0} } \right)
    \end{align}
\end{itemize}
We now use the identities
\[
  \bra{e_{L_i}} \DI \ket{e_{R_j}} = \braket{\bar{L}_i | \bar{R}_j}, \quad 
  \bra{e_{L_i}} \YH \ket{e_{R_j}} = (e_{L_i} + e_{R_j}) \braket{\bar{L}_i | \bar{R}_j},
\]
where \( \ket{\bar{L}_i} \) and \( \ket{\bar{R}_j} \) denote the normalized bare eigenstates in the \( L \) and \( R \) subsystems, respectively. Substituting into the above yields the result.
\end{proof}

We justify that the first-order correction to \( e_{L_0} \) is small by analyzing the weighted sum:
\[
\sum_j \tfrac{e_{R_j}}{e_{R_j} - e_{L_0}} 
\left( \braket{ \bar{R}_j \mid 0_R } \right)^2.
\]
By Lemma~\ref{prop:energy-weighted-sum}, the unweighted sum is exactly zero:
\(
\sum_j e_{R_j} \left( \braket{ \bar{R}_j \mid 0_R } \right)^2 = 0
\).
The weighting factor \( \tfrac{1}{e_{R_j} - e_{L_0}} \) is smooth and strictly positive across the spectrum, since \( e_{R_j} > e_{L_0} \) for all \( j \) before the anti-crossing.
Thus, the weighted sum remains bounded in magnitude and does not grow with system size.
Moreover, the prefactor \( \left( \braket{L_0 \mid 0_L} \right)^2 = \co(L_0)^2 \) is exponentially small in \( m_L \), so the total correction is exponentially suppressed:
\(
E_0 - e_{L_0} \approx -2 e_{L_0} \cdot \co(L_0)^2 \cdot \left[ \text{bounded sum} \right].
\)
Hence, the correction is negligible, and we conclude that
\(
E_0 \approx e_{L_0}.
\)
This justifies the approximation \( E_0 \approx e_{L_0} \) before the anti-crossing,  
and similarly \( E_0 \approx e_{R_0} \) after the anti-crossing,  
completing the perturbative approximation scheme.
See Figure~\ref{fig:energy-Hcore} for numerical confirmation.
\begin{figure}[!htbp]
  \centering
  $$
  \begin{array}{cc}
  \includegraphics[width=0.45\textwidth]{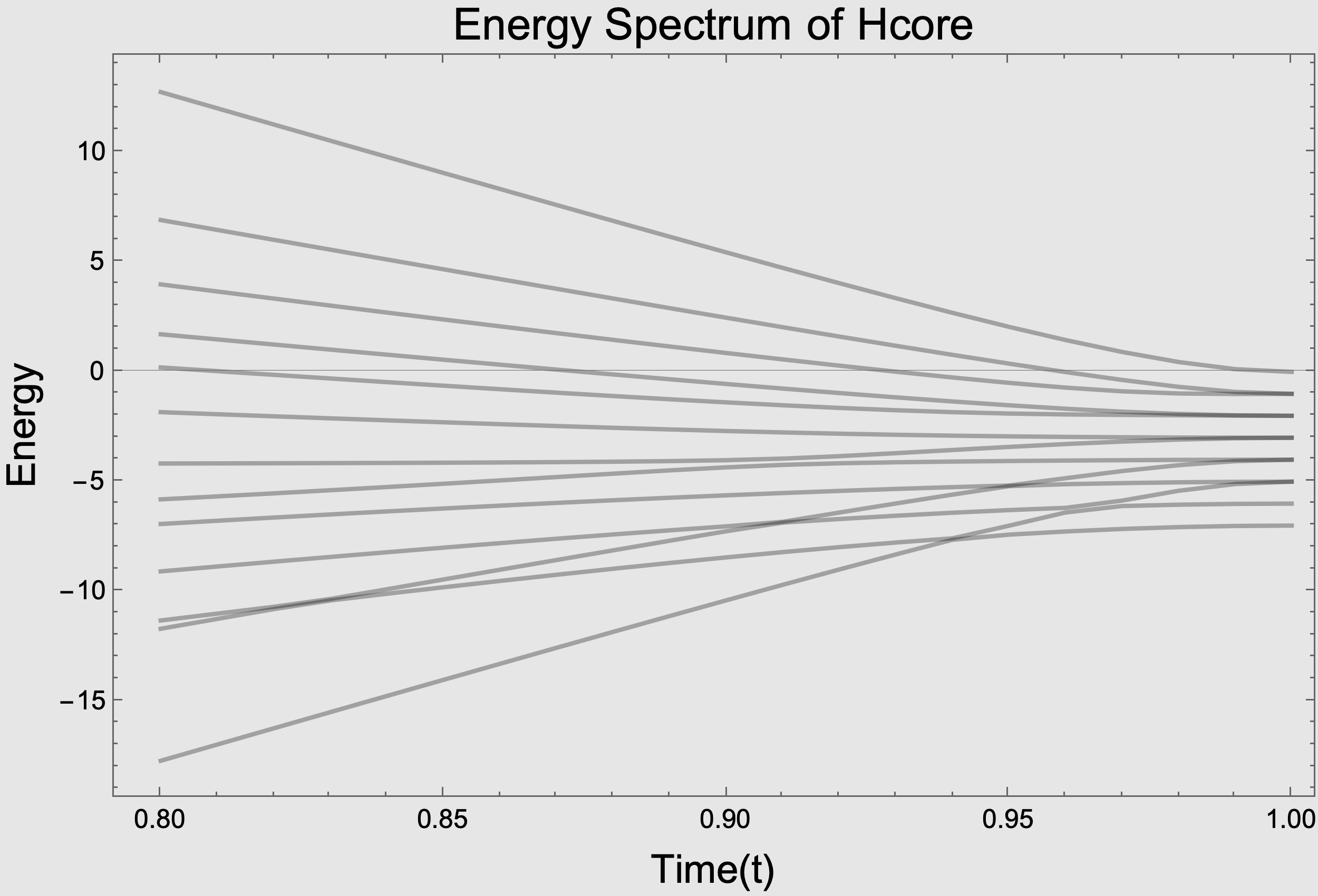} &
  \includegraphics[width=0.45\textwidth]{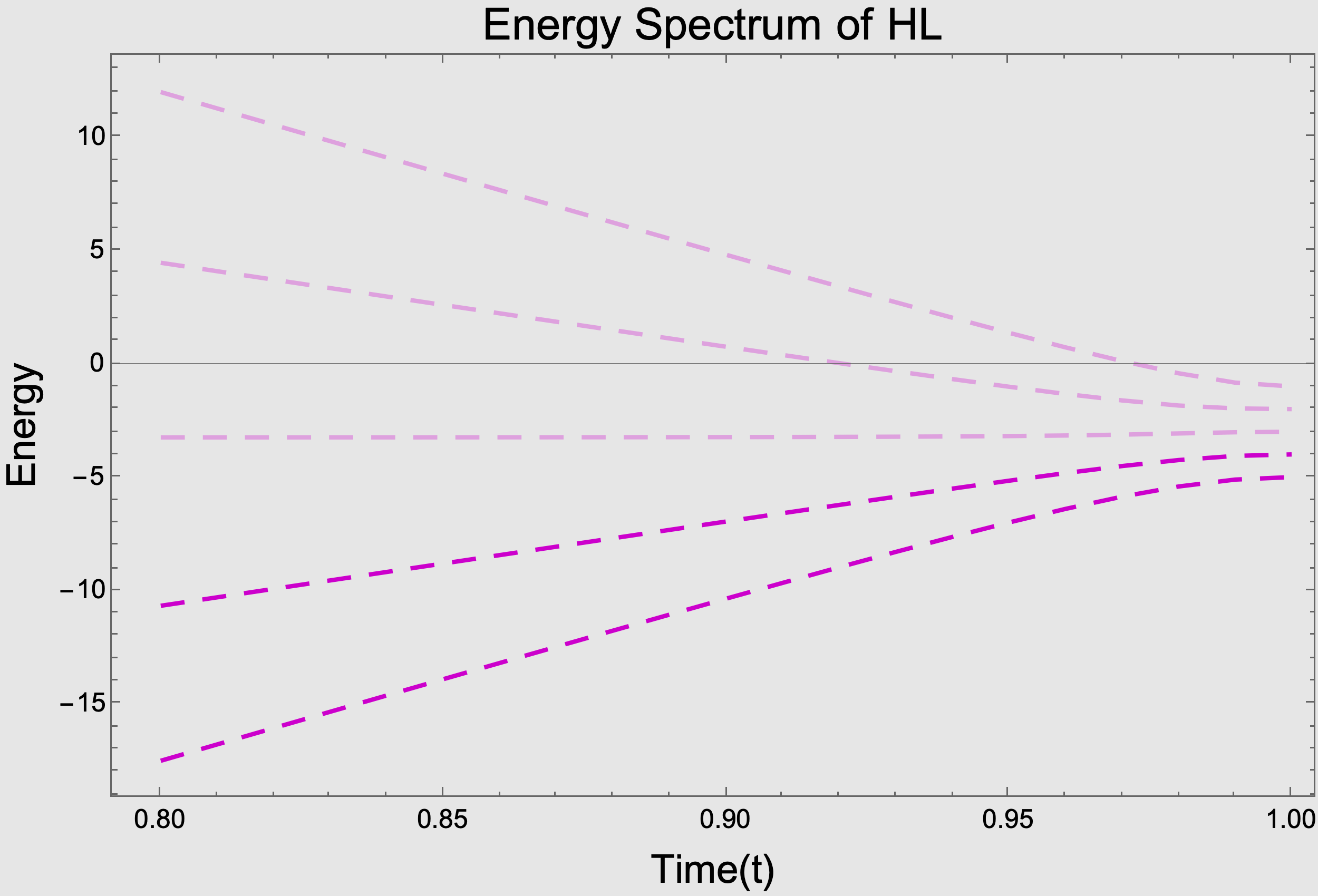}\\
  \includegraphics[width=0.45\textwidth]{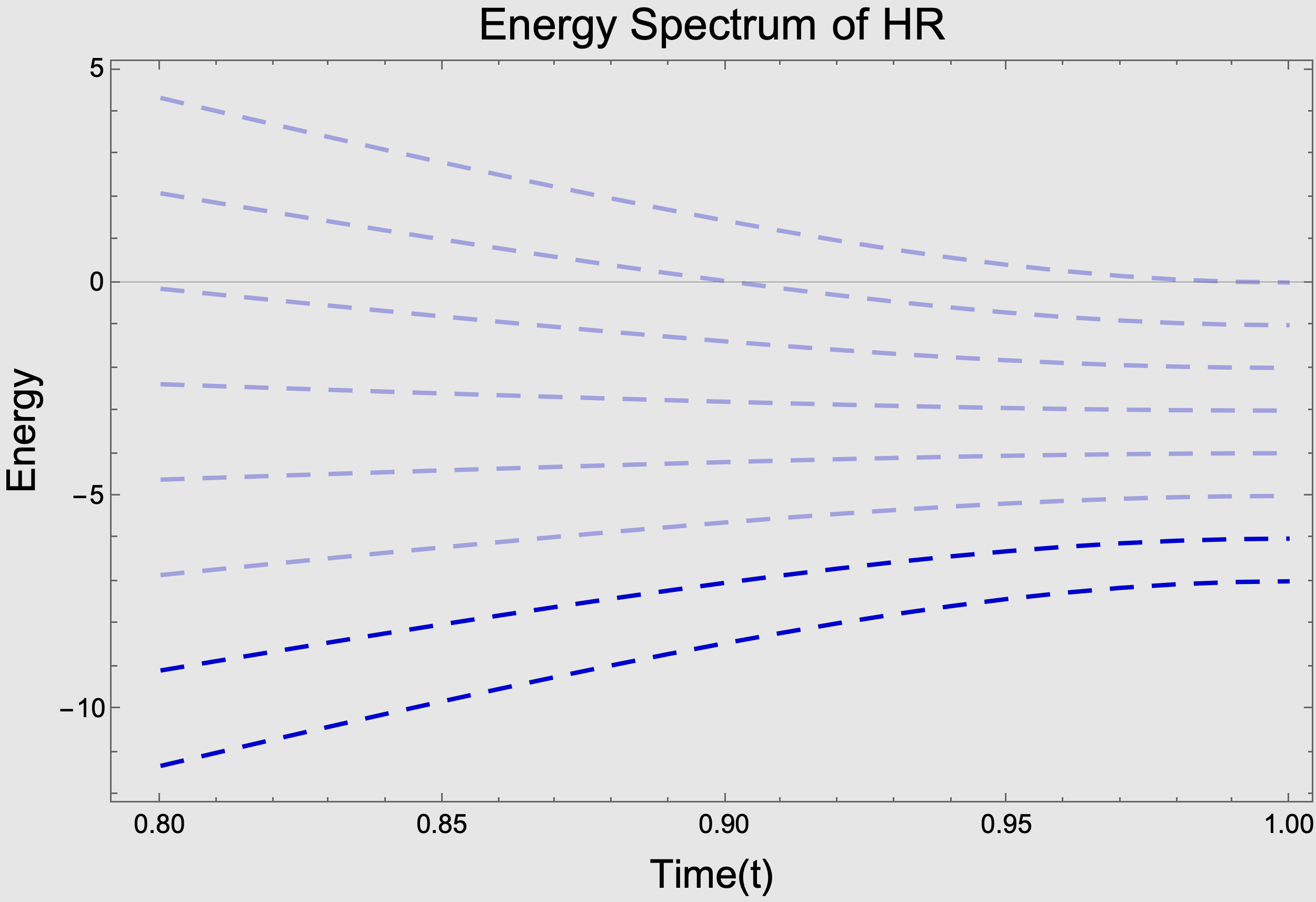} &
  \includegraphics[width=0.45\textwidth]{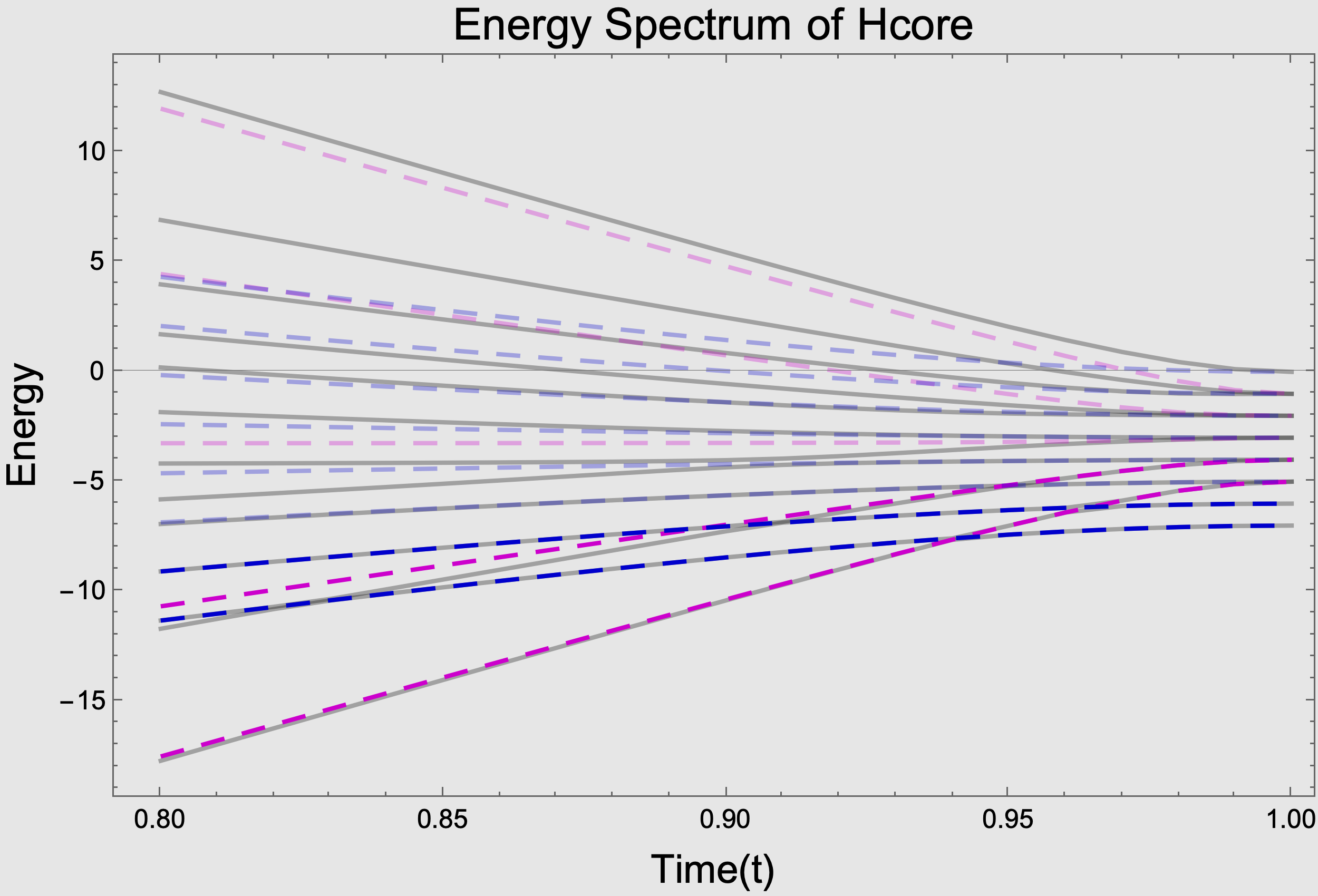}
  \end{array}
  $$

  \caption{
    Comparison of the exact energy spectrum of the effective Hamiltonian \( \Hcore \)  
    with the bare energy spectra of \( \mb{H}_L \) (magenta dashed) and \( \mb{H}_R \) (blue dashed).  
    The top-left panel shows the true spectrum of \( \Hcore \); the top-right and bottom-left show the bare spectra of  
    \( \mb{H}_L \) and \( \mb{H}_R \), respectively.  
    The bottom-right panel overlays the bare and true spectra.
    As seen, the true energy (solid gray) of \( \Hcore \) closely follows the bare energy (dashed),  
    and the bare-level crossing is replaced by an anti-crossing.
  }

  \label{fig:energy-Hcore}
\end{figure}
Furthermore, the actual anti-crossing point is well-approximated by the crossing point of the bare energies,  
as illustrated in Figure~\ref{fig:anticrossing-combined}.
\begin{figure}[!htbp]
  \centering
  $$
  \begin{array}{cc}
    \includegraphics[width=0.48\textwidth]{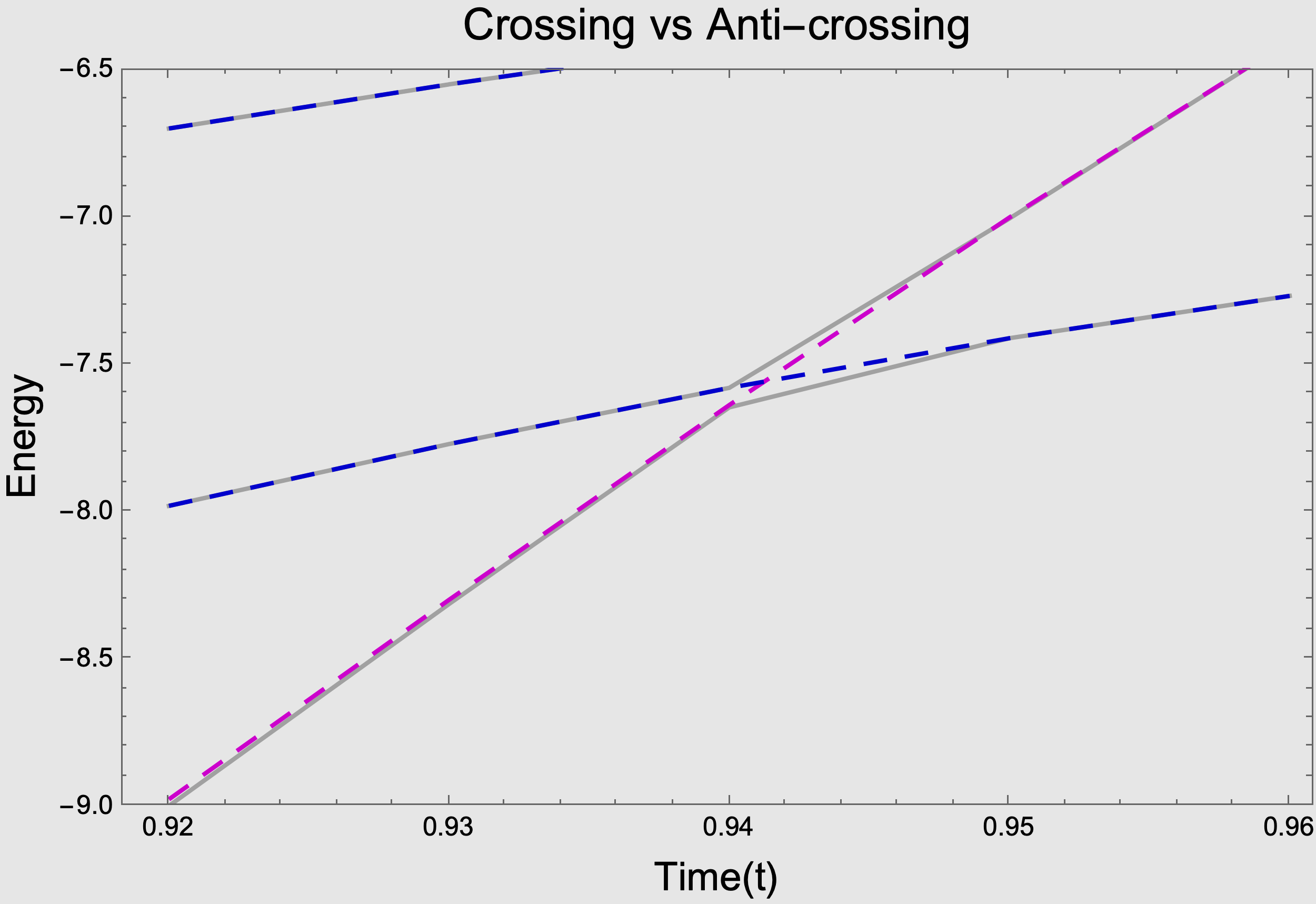} &
    \includegraphics[width=0.48\textwidth]{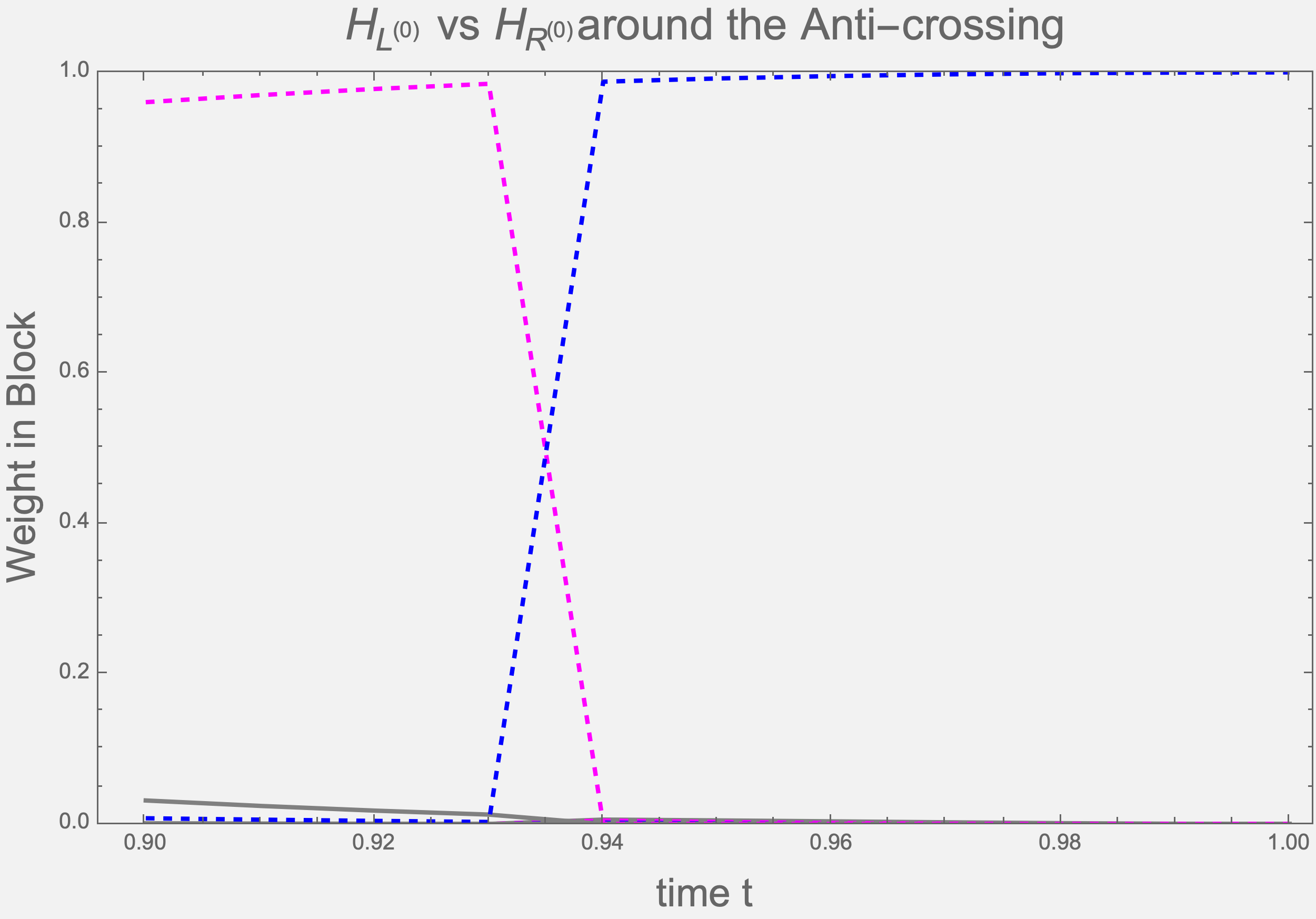}
  \end{array}
  $$
  \caption{
  Anti-crossing illustration.  
  Left: Energy spectrum comparison between the bare Hamiltonians  
  \( \mb{H}_L^{(0)} \) and \( \mb{H}_R^{(0)} \) (dashed magenta and blue),  
  and the coupled Hamiltonian \( \Hcore \) (solid gray).  
  The anti-crossing occurs near the point where the bare energies cross.  
  Right: Ground state projection onto the same two blocks.  
  The ground state is concentrated on \( \mb{H}_L^{(0)} \)  (magenta) before the anti-crossing,  
  and \( \mb{H}_R^{(0)} \) (blue) after the anti-crossing.
}

  \label{fig:anticrossing-combined}
\end{figure}

\subsubsection*{Effective \( 2 \times 2 \) Hamiltonian and Gap Bound}

We now approximate the size of the anti-crossing gap between the degenerate bare states \( \ket{e_{L_0}} \) and \( \ket{e_{R_0}} \), by reducing the full generalized eigenvalue problem to an effective \( 2 \times 2 \) Hamiltonian on the subspace they span.
Combining the structural reduction steps established above, we have the chain of approximations:
\setlength{\fboxsep}{8pt} 
\begin{equation}
  \label{eq:gap-chain}
  \fbox{$
    \Delta(\mb{H}) 
    \ \le\ \Delta(\mb{H}^{\ms{low}}) 
    \ \approx\ \Delta(\mb{H}_{\mc{C}}) 
    \ \approx\ \Delta(\Hcore) 
    \ \approx\ \Delta(\Hefftwo)
  $}
\end{equation}
which shows that it suffices to approximate the gap of the final \( 2\times 2 \) Hamiltonian \(\Hefftwo\).
We now derive \(\Hefftwo\) explicitly, starting from the projected generalized eigenvalue problem.
This approach is similar to~\cite{AndradeFreire2003}. 
See also \cite{Kerman-Tunnel}.

We construct a two-level approximation by projecting the generalized eigenvalue problem onto the subspace
\[
P = \text{span} \{ \ket{\bar{L}_0}, \ket{\bar{R}_0} \},
\]
where \( \ket{\bar{L}_0} \) and \( \ket{\bar{R}_0} \) are the padded bare eigenstates of \( \mb{H}_L^{(0)} \) and \( \mb{H}_R^{(0)} \), respectively.  
Projecting the generalized eigenvalue problem onto \(P\) yields the intermediate effective Hamiltonian \(\Heffzero\) in generalized form; converting it to standard form by left-multiplying with \(\mb{S}_{PP}^{-1}\) produces the final \(2\times 2\) Hamiltonian \(\Hefftwo\).

\begin{lemma}
\label{claim:effective-hamiltonian}
The effective Hamiltonian for the subspace \( P \) is given by
\[
\Heffzero 
= \mb{H}_{PP} 
+ (e_0 \mb{S}_{PQ} - \mb{H}_{PQ})(e_0 \mb{S}_{QQ} - \mb{H}_{QQ})^{-1}(e_0 \mb{S}_{QP} - \mb{H}_{QP}),
\]
where \( e_0 = e_{L_0}(0) = e_{R_0}(0) \), $Q=P^{\perp}$.
The eigenvalues of the pair \( (\Heffzero, \mb{S}_{PP}) \) approximate the ground and first excited energies near the anti-crossing.
\end{lemma}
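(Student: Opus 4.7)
The plan is a standard Feshbach/Löwdin partitioning argument adapted to the generalized eigenvalue problem. Let $Q = I - P$ be the complementary projector onto $\mathrm{span}\{\ket{\bar{L}_i}\}_{i\geq 1}\cup\{\ket{\bar{R}_j}\}_{j\geq 1}$, and decompose any candidate eigenstate as $\ket{\Egen_n} = \ket{\psi_P}+\ket{\psi_Q}$. Applying $P$ and $Q$ respectively to $\Hgen\ket{\Egen_n}= E\Sgen\ket{\Egen_n}$ yields the coupled pair
\begin{align*}
\mb{H}_{PP}\ket{\psi_P} + \mb{H}_{PQ}\ket{\psi_Q}
&= E\bigl(\mb{S}_{PP}\ket{\psi_P} + \mb{S}_{PQ}\ket{\psi_Q}\bigr), \\
\mb{H}_{QP}\ket{\psi_P} + \mb{H}_{QQ}\ket{\psi_Q}
&= E\bigl(\mb{S}_{QP}\ket{\psi_P} + \mb{S}_{QQ}\ket{\psi_Q}\bigr).
\end{align*}

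The second line determines the high-energy component:
\[
\ket{\psi_Q} = -(E\mb{S}_{QQ}-\mb{H}_{QQ})^{-1}(E\mb{S}_{QP}-\mb{H}_{QP})\ket{\psi_P},
\]
provided $E\mb{S}_{QQ}-\mb{H}_{QQ}$ is invertible; I would verify this using the same argument as Lemma~\ref{lemma:invertible}, together with the fact that the bare energies $e_{L_k},e_{R_k}$ for $k\geq 1$ are separated from $e_0$ by a gap of order $\sqrt{1+n_A\mt{x}^2}$. Substituting into the first line and collecting terms produces the energy-dependent effective generalized eigenvalue problem
\[
\bigl[\mb{H}_{PP} + (E\mb{S}_{PQ}-\mb{H}_{PQ})(E\mb{S}_{QQ}-\mb{H}_{QQ})^{-1}(E\mb{S}_{QP}-\mb{H}_{QP})\bigr]\ket{\psi_P}
= E\,\mb{S}_{PP}\ket{\psi_P}.
\]

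The final step is to freeze the energy variable at $E=e_0$, i.e.\ at the common bare value $e_{L_0}(t_*)=e_{R_0}(t_*)=e_0$ where the anti-crossing occurs. This replacement is justified because, by Proposition~\ref{prop:explicit-correction} and the subsequent discussion, the true ground and first-excited energies differ from $e_0$ only by exponentially suppressed corrections, so the energy-dependence of the right-hand operator contributes only a higher-order error in the gap estimate. With this substitution the operator becomes exactly $\Heffzero$ as stated, and the pair $(\Heffzero,\mb{S}_{PP})$ supplies the two lowest approximate eigenvalues of the full system near the anti-crossing.

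The main obstacle I anticipate is not the algebraic manipulation, which is routine, but the control required to justify the freezing $E\mapsto e_0$ quantitatively: one must argue that replacing $E$ by $e_0$ in the resolvent $(E\mb{S}_{QQ}-\mb{H}_{QQ})^{-1}$ yields an error whose contribution to the anti-crossing gap is dominated by the already exponentially small overlap prefactors $\co(L_0)\co(R_0)$ appearing in $\mb{S}_{PQ}$ and $\mb{H}_{PQ}$. This is a size comparison between a bounded denominator correction (the $Q$-spectrum is separated from $e_0$ by $O(1)$) and the exponentially small numerators; once this is established the lemma follows, and the eigenvalues of $(\Heffzero,\mb{S}_{PP})$ can be converted to those of a standard $2\times 2$ matrix $\Hefftwo$ via left multiplication by $\mb{S}_{PP}^{-1}$ in the next step.
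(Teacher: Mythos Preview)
Your proposal is correct and follows essentially the same approach as the paper's proof: a Schur-complement/Feshbach partitioning of the generalized eigenvalue problem into $P$ and $Q$ blocks, solving for the $Q$-component, substituting back, and then freezing $E\to e_0$ on the grounds that $|E-e_0|$ is small and the resulting error is suppressed by the exponentially small overlaps $\braket{\bar{L}_0|\bar{R}_j}$. Your discussion of the invertibility of $E\mb{S}_{QQ}-\mb{H}_{QQ}$ and the quantitative control of the freezing step is in fact slightly more explicit than the paper's own justification.
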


\begin{proof}
The perturbed generalized eigenvalue equation is:
\begin{align}
  (\mb{H} + \YH) \ket{\Phi} = \lambda (\mb{I} + \DI) \ket{\Phi}.
\end{align}

Write the eigenvector as 
\[
\ket{\Phi} = \sum_{p \in P} c_p \ket{p} + \sum_{q \in Q} c_q \ket{q},
\]
and take inner products to obtain the generalized eigenvalue problem in matrix form:
\begin{align}
  \begin{bmatrix}
    \mb{H}_{PP} & \mb{H}_{PQ} \\
    \mb{H}_{QP} & \mb{H}_{QQ}
  \end{bmatrix}
  \begin{bmatrix}
    c_P \\
    c_Q
  \end{bmatrix}
  = \lambda
  \begin{bmatrix}
    \mb{S}_{PP} & \mb{S}_{PQ} \\
    \mb{S}_{QP} & \mb{S}_{QQ}
  \end{bmatrix}
  \begin{bmatrix}
    c_P \\
    c_Q
  \end{bmatrix},
  \label{eq:eff1}
\end{align}
where \( \lambda = e_0 + \Delta_\lambda \).

From Eq.~\eqref{eq:eff1}, we obtain:
\[
\left\{
\begin{array}{l}
(\mb{H}_{PP} - \lambda \mb{S}_{PP}) c_P = (\lambda \mb{S}_{PQ} - \mb{H}_{PQ}) c_Q, \\
(\mb{H}_{QP} - \lambda \mb{S}_{QP}) c_P = (\lambda \mb{S}_{QQ} - \mb{H}_{QQ}) c_Q.
\end{array}
\right.
\]

We solve the second equation for \( c_Q \):
\(
  c_Q = (\lambda \mb{S}_{QQ} - \mb{H}_{QQ})^{-1} (\mb{H}_{QP} - \lambda \mb{S}_{QP}) c_P.
\)

Substitute into the first equation:
\begin{align*}
(\mb{H}_{PP} - \lambda \mb{S}_{PP}) c_P 
= (\lambda \mb{S}_{PQ} - \mb{H}_{PQ})(\lambda \mb{S}_{QQ} - \mb{H}_{QQ})^{-1} (\mb{H}_{QP} - \lambda \mb{S}_{QP}) c_P.
\end{align*}
\[
\Longrightarrow 
\left[
\mb{H}_{PP} + (\lambda \mb{S}_{PQ} - \mb{H}_{PQ})(\lambda \mb{S}_{QQ} - \mb{H}_{QQ})^{-1} (\lambda \mb{S}_{QP} - \mb{H}_{QP})
\right] c_P = \lambda \mb{S}_{PP} c_P.
\]

Since \( \lambda = e_0 + \Delta_\lambda \) and \( |\Delta_\lambda| \ll |e_0| \),  
we may approximate the terms \( \lambda \mb{S}_{PQ} - \mb{H}_{PQ} \),  
\( \lambda \mb{S}_{QQ} - \mb{H}_{QQ} \), and \( \lambda \mb{S}_{QP} - \mb{H}_{QP} \)  
by their counterparts evaluated at \( \lambda = e_0 \), provided that the overlaps 
\( \braket{\bar{L}_0 | \bar{R}_j} \) are small.

Concretely, the matrix element
\(
\bra{\bar{L}_0} (\lambda \mb{S}_{PQ} - \mb{H}_{PQ}) \ket{\bar{R}_j} 
= \Delta_\lambda \braket{\bar{L}_0 | \bar{R}_j} - e_{R_j} \braket{\bar{L}_0 | \bar{R}_j},
\)
differs from the \( e_0 \)-version only by a term of order \( \Delta_\lambda \cdot \braket{\bar{L}_0 | \bar{R}_j} \),  
which is negligible under the assumption that both \( \Delta_\lambda \ll |e_0| \) and \( \braket{\bar{L}_0 | \bar{R}_j} \ll 1 \).  
Thus, we may replace \( \lambda \) by \( e_0 \) in the effective Hamiltonian to leading order.

Substituting \( \lambda \approx e_0 \), we obtain the effective Hamiltonian:
\[
\Heffzero c_P \approx \lambda \mb{S}_{PP} c_P,
\]
where
\[
\Heffzero 
= \mb{H}_{PP} 
+ (e_0 \mb{S}_{PQ} - \mb{H}_{PQ})(e_0 \mb{S}_{QQ} - \mb{H}_{QQ})^{-1}(e_0 \mb{S}_{QP} - \mb{H}_{QP}).
\]
\end{proof}

To understand the leading-order contribution to the anti-crossing gap, we examine the structure of the effective Hamiltonian in the \( P \)-subspace.
We expand the second-order approximation using
\[
\mb{D} \mdef e_0 \mb{I}_{QQ} - \mb{E}_{QQ}, \quad 
\mb{V} \mdef \YH_{QQ} - e_0 \DI_{QQ},
\]
and approximate the inverse:
\(
(e_0 \mb{S}_{QQ} - \mb{H}_{QQ})^{-1} 
\approx \mb{D}^{-1} + \mb{D}^{-1} \mb{V} \mb{D}^{-1}.
\)
Then we have
\(
\Heffzero 
\approx \mb{H}_{PP} 
+ \mb{W}_1 + \mb{W}_2,
\)
where
\[
\mb{W}_1 \mdef (e_0 \mb{S}_{PQ} - \mb{H}_{PQ}) \mb{D}^{-1} (e_0 \mb{S}_{QP} - \mb{H}_{QP}), \quad
\mb{W}_2 \mdef (e_0 \mb{S}_{PQ} - \mb{H}_{PQ}) \mb{D}^{-1} \mb{V} \mb{D}^{-1} (e_0 \mb{S}_{QP} - \mb{H}_{QP}).
\]
The leading term
\[
\mb{H}_{PP} = 
\begin{bmatrix}
e_{L_0} & (e_{L_0} + e_{R_0}) g_0 \\
(e_{L_0} + e_{R_0}) g_0 & e_{R_0}
\end{bmatrix}
= e_0
\begin{bmatrix}
1 & 2 g_0 \\
2 g_0 & 1
\end{bmatrix},
\]
where \( e_0 = \tfrac{1}{2}(e_{L_0} + e_{R_0}) \) and \( g_0 = \braket{\bar{L}_0 | \bar{R}_0} \).

The first-order correction \( \mb{W}_1 = \begin{bmatrix} a_1 & 0 \\ 0 & a_1 \end{bmatrix} \) is diagonal, 
while the second-order correction \( \mb{W}_2 = \begin{bmatrix} 0 & c_2 \\ c_2 & 0 \end{bmatrix} \) is off-diagonal, with \( c_2 \ll e_0 g_0 \).
Combining all terms yields:
\[
\Heffzero \approx
\begin{bmatrix}
e_0 + a_1 & 2 e_0 g_0 + c_2 \\
2 e_0 g_0 + c_2 & e_0 + a_1
\end{bmatrix},
\]
which shows that the off-diagonal coupling is governed primarily by the overlap \( g_0 \).

Finally, we convert the generalized eigenvalue problem to standard form using
\[
\Hefftwo
= \mb{S}_{PP}^{-1} \Heffzero, \quad \text{where} \quad
\mb{S}_{PP} = 
\begin{bmatrix}
1 & g_0 \\
g_0 & 1
\end{bmatrix},
\]
which gives 
\[
\Hefftwo = \tfrac{1}{1 - g_0^2}
\begin{bmatrix}
e_0 + a_1 - g_0 (2 e_0 g_0 + c_2) & 2 e_0 g_0 + c_2  - g_0(e_0 + a_1) \\
2 e_0 g_0 + c_2  - g_0(e_0 + a_1) & e_0 + a_1 - g_0 (2 e_0 g_0 + c_2)
\end{bmatrix}.
\]
We thus have the gap approximation based on the assumptions that \( |a_1| \ll |e_0| \) and \( |c_2| \ll |e_0 g_0| \):
\begin{corollary}[Gap Approximation]
\label{cor:gap-bound}
Under the effective two-level approximation,
\[
\Delta(\Hefftwo) \approx 2 |e_0| g_0,
\]
where \( e_0 = e_{L_0} = e_{R_0} \) is the degenerate bare energy, and \( g_0 = \braket{\bar{L}_0 | \bar{R}_0} \) is the overlap between two bare ground states.
\end{corollary}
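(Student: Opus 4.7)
}
The plan is to exploit the fact that at the anti-crossing point the bare energies coincide, $e_{L_0} = e_{R_0} = e_0$, so the $2\times 2$ effective Hamiltonian $\Hefftwo$ displayed just above the corollary is symmetric with equal diagonal entries. For any real symmetric matrix of the form $\begin{bmatrix} d & v \\ v & d \end{bmatrix}$, the eigenvalues are $d \pm v$ and the gap is simply $2|v|$. Hence I only need to extract the off-diagonal entry of $\Hefftwo$ and estimate its magnitude; I do not need to diagonalize any new matrix.

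From the explicit form of $\Hefftwo$, the off-diagonal entry equals
\[
v \;=\; \tfrac{1}{1-g_0^2}\bigl(2 e_0 g_0 + c_2 - g_0(e_0 + a_1)\bigr)
\;=\; \tfrac{1}{1-g_0^2}\bigl(e_0 g_0 + c_2 - g_0 a_1\bigr),
\]
after collecting the $2 e_0 g_0$ and $-g_0 e_0$ terms. The dominant contribution is the $e_0 g_0$ term, arising directly from $\mb{H}_{PP}$; the other two pieces are subleading corrections coming from the second-order block decimation ($\mb{W}_1,\mb{W}_2$) and from the overlap $g_0$ in $\mb{S}_{PP}$.

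The remaining step is to justify that these subleading corrections can be dropped. Under the hypotheses $|a_1| \ll |e_0|$ and $|c_2| \ll |e_0 g_0|$ stated just before the corollary, we have $|g_0 a_1| \ll |g_0 e_0|$ and $|c_2| \ll |e_0 g_0|$, so both error terms are negligible compared to $e_0 g_0$. Since $g_0 = \braket{\bar{L}_0 | \bar{R}_0}$ is the product $\co(L_0)\co(R_0)$ of two overlaps that are individually exponentially small in $m_L$ and $m_R$ respectively (by Eq.~\eqref{eq:co}), the prefactor $1/(1-g_0^2)$ is $1 + O(g_0^2)$ and may be replaced by $1$ at leading order. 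Combining these, $2|v| \approx 2|e_0| g_0$, yielding the claim.

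The only nontrivial step here is the size bound on $c_2$ (and on $a_1$ relative to $e_0$): this is where the hypotheses enter, and verifying them rigorously would require unpacking the expansion of $(e_0 \mb{S}_{QQ} - \mb{H}_{QQ})^{-1}$ through $\mb{D}^{-1}$ and $\mb{V}$, and checking that the sums over intermediate padded bare states $\ket{\bar{L}_j},\ket{\bar{R}_j}$ with $j \geq 1$ contribute overlaps $\braket{\bar{L}_0|\bar{R}_j}$ that are strictly smaller than $g_0$ (because such intermediate states carry additional spin flips and hence additional $\gamma$-factors in their amplitudes). This bookkeeping is the main obstacle; given it, the gap formula $2|e_0| g_0$ follows immediately from the equal-diagonal $2\times 2$ structure.
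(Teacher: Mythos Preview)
Your proposal is correct and follows exactly the approach the paper takes: the corollary in the paper is stated without a separate proof, as it follows directly from reading off the off-diagonal entry of the explicit $\Hefftwo$ matrix and invoking the stated assumptions $|a_1|\ll|e_0|$ and $|c_2|\ll|e_0 g_0|$. Your computation of the off-diagonal entry and your identification of which terms to drop (including the $1/(1-g_0^2)\approx 1$ replacement) match the paper's implicit reasoning, and your caveat about the nontrivial step being the verification of the size bounds on $a_1,c_2$ is also in line with the paper, which simply assumes them.
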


\begin{proposition}
The overlap between the two bare ground states is given by
\begin{align*}
  g_0 := \braket{\bar{L}_0 | \bar{R}_0}
  = \co(L_0) \cdot \co(R_0)
  = \left( \sqrt{\tfrac{\gamma_L^2}{1 + \gamma_L^2}} \right)^{m_L}
  \left( \sqrt{\tfrac{\gamma_R^2}{1 + \gamma_R^2}} \right)^{m_R},
\end{align*}
where the amplitude ratios 
\(
  \gamma_A = \tfrac{ \sqrt{n_A} \mt{x}_c }{ w + \sqrt{w^2 + n_A \mt{x}_c^2} } \in (0,1),  
\) for $A \in \{L, R\}$,
and \( \mt{x}_c \) is the location of the anti-crossing, defined implicitly by \( e_{L_0}(\mt{x}_c) = e_{R_0}(\mt{x}_c) \).
\end{proposition}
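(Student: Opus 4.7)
The plan is to expand the inner product using the tensor-product definitions of the padded states, then invoke the previously derived formula for $\co(A_0)$ (Eq.~\eqref{eq:co}) to obtain the explicit form.

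First, I would recall from Section~\ref{sec:core-bare} that the padded bare ground states are
\[
\ket{\bar{L}_0} = \ket{L_0} \otimes \ket{\emptyset}_R,
\qquad
\ket{\bar{R}_0} = \ket{\emptyset}_L \otimes \ket{R_0},
\]
where $\ket{\emptyset}_A = \ket{0}^{\otimes m_A}$ is the all-zero (empty-set) computational state on subsystem $A \in \{L,R\}$. Taking the inner product across the two tensor factors separately (and using that all amplitudes are real), I get
\[
g_0 = \braket{\bar{L}_0 | \bar{R}_0}
= \braket{L_0 | 0_L}_L \cdot \braket{0_R | R_0}_R
= \co(L_0)\cdot \co(R_0),
\]
which establishes the first equality in the proposition.

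Next, I would substitute the explicit formula for $\co(A_0)$ stated in Eq.~\eqref{eq:co}, namely
\[
\co(A_0) = \left(\tfrac{\gamma_A}{\sqrt{1+\gamma_A^2}}\right)^{m_A}
= \left(\sqrt{\tfrac{\gamma_A^2}{1+\gamma_A^2}}\right)^{m_A},
\]
applied once for $A=L$ and once for $A=R$. Multiplying the two expressions yields the stated product form for $g_0$. The amplitude ratios $\gamma_A = \tfrac{\sqrt{n_A}\,\mt{x}}{w + \sqrt{w^2 + n_A \mt{x}^2}}$ lie in $(0,1)$ since $\mt{x} > 0$ and $w>0$. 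Finally, since $g_0$ is to be evaluated at the anti-crossing—where the two bare ground energies are degenerate—I would plug in $\mt{x} = \mt{x}_c$, the root of $e_{L_0}(\mt{x}) = e_{R_0}(\mt{x})$ (whose existence under the standing assumption $\sum_i \sqrt{n_i} > m_g$ was already noted in Section~\ref{sec:dis-share-graphs}).

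There is no real obstacle here: the proof is essentially a direct consequence of the tensor structure of the padded bare ground states together with the already-derived overlap formula. The only bookkeeping step worth emphasizing is that the factorization $\braket{\bar{L}_0|\bar{R}_0} = \braket{L_0|0_L}\braket{0_R|R_0}$ hinges on the $L$-part of $\ket{\bar{R}_0}$ and the $R$-part of $\ket{\bar{L}_0}$ both being the empty-set state $\ket{0}$; this is exactly the shared basis state through which $\mb{H}_L^{(0)}$ and $\mb{H}_R^{(0)}$ are coupled in $\Hcore$, consistent with the localization picture of Section~\ref{sec:inner}.
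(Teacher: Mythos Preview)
Your proof is correct. The paper states this proposition without proof, since it follows immediately from the tensor-product structure of the padded states (Section~\ref{sec:core-bare}) together with the overlap formula Eq.~\eqref{eq:co}; your argument spells out exactly this implicit derivation.
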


Since the runtime scales inversely with the square of the minimum gap, by the chain of approximations in Eq.~\eqref{eq:gap-chain},  we have
\[
T \gtrsim \tfrac{1}{g_0^2} 
= \left( 1+\tfrac{1}{\gamma_L^2} \right)^{m_L}
\left( 1+\tfrac{1}{\gamma_R^2} \right)^{m_R}
> 2^{m_L + m_R}. 
\]

This completes the reduction from the full Hamiltonian \(\mb{H}\) to the effective \(2\times 2\) model, providing the exponential runtime bound.

 \section{Conclusion}
\label{sec:end}

This work presents an analytical framework for identifying and quantifying tunneling-induced anti-crossing in stoquastic transverse-field quantum annealing (TFQA), based on a structured class of Maximum Independent Set instances. Our analysis reformulates the effective Hamiltonian in a non-orthogonal basis and derives an exponentially small gap using a generalized eigenvalue approach that remains valid beyond the small transverse field regime.
More broadly, our findings reveal the structural origin of tunneling-induced anti-crossings. Once the system localizes near critical local minima---as captured in the reduced form \( \Hcore \)---a tunneling transition to the global minimum becomes unavoidable, and the resulting gap is exponentially small. This same localization and tunneling mechanism may also arise in less structured graphs.

It has been claimed that non-stoquasticity may not be essential for achieving quantum speedup~\cite{Non-stoquastic,de-sign,CL2020}. Indeed, simply introducing a non-stoquastic driver is not sufficient to bypass the tunneling bottleneck. For example, if $\Jzz$ is set to be too large, the evolution 
will fail the structural steering and encounter a tunneling-induced anti-crossing instead (as illustrated by an example in 
Figure~16 of~\cite{Choi-Beyond}).

In contrast to this work, a recent study~\cite{UnstructuredAQO2024}, which may be of theoretical interest on its own, investigated unstructured adiabatic quantum optimization, aiming to recover Grover-like quadratic speedup~\cite{Grover1996,Roland2002}. However, as we note in the main paper (Section~3.1 in ~\cite{Choi-Beyond}), a simple classical algorithm~\cite{Tarjan1977} can already solve the MIS problem in time \(O(2^{n/3}) \), outperforming the Grover-like \( O(2^{n/2}) \) speedup.

This highlights the importance of algorithmic design and analysis, even for adiabatic quantum algorithms in order to achieve quantum advantage. 
We note that the tunneling-induced anti-crossings analyzed here play a constructive role in the \DDD{} algorithm:  
a polynomial-time TFQA evolution through such an anti-crossing identifies configurations in the set of the degenerate local minima,  
which are then used as seeds for constructing the non-stoquastic $\XX$-driver graph.

We hope that the structural insights developed here may guide future work in rigorously identifying similar bottlenecks in broader classes of problem instances, and in clarifying the fundamental limitations of stoquastic quantum annealing beyond heuristic or empirical claims.

\section*{Acknowledgment}
This work was written by the author with the help of ChatGPT (OpenAI),
which assisted in refining the presentation and in expressing the intended ideas with greater clarity and precision.  
The author thanks Jamie Kerman for introducing her to the angular-momentum 
basis, for discussions, and for the idea of deriving a perturbative bound on 
the anti-crossing gap by constructing an effective Hamiltonian in the 
non-orthogonal basis, as in ~\cite{AndradeFreire2003}.
We recognize that this manuscript may not cite all relevant literature.
If you believe your work should be included in a future version, please do not hesitate to contact the author with appropriate references.
The author acknowledges support from the Defense Advanced Research Projects Agency under Air Force Contract No. FA8702-15-D-0001. Any opinions, findings and conclusions or recommendations expressed in this material are those of the authors and do not necessarily reflect the views of the Defense Advanced Research Projects Agency.

\appendix

\subsection*{Appendix A: Font Conventions for Notation}
We adopt the following conventions throughout:  
\begin{itemize}
\item Hilbert space / Subspace / Basis: calligraphic, e.g.\ \(\mc{V}, \mc{B}\).  
  \item Hamiltonian / Matrix: blackboard bold, e.g.\ \(\mb{H}, \mb{B}\).  
  \item Time-dependent quantity: typewriter, e.g.\ \(\mt{x} := \mt{x}(t), \mt{jxx}:=\mt{jxx}(t)\).  
  \item Named object / Abbreviation: capital typewriter, e.g.\ \LM{}, \GM{}, \MLIS{}.  
\end{itemize}


\begin{thebibliography}{000}

\bibitem{Choi-Beyond}
  V. Choi.
\newblock{Beyond Stoquasticity: Structural Steering and Interference in Quantum Optimization}, 2025.
  
\bibitem{Choi2021}
V. Choi.
\newblock \textit{Essentiality of the Non-stoquastic Hamiltonians and Driver Graph Design in Quantum Optimization Annealing}.
\newblock arXiv:2105.02110v2 [quant-ph], 2021.

\bibitem{ChoiPatent}
V. Choi.
\newblock \textit{Constructing and Programming Driver Graphs in Quantum Hardware for Non-Stoquastic Quantum Optimization Annealing Processes}.
\newblock U.S. Patent No. 12,001,924~B2, issued June~4, 2024.

\bibitem{Farhi2000}
E. Farhi, J. Goldstone, S. Gutmann, and M. Sipser.  
\newblock \textit{Quantum computation by adiabatic evolution}.  
\newblock arXiv:quant-ph/0001106, 2000.  

\bibitem{Farhi2001}
E. Farhi, J. Goldstone, S. Gutmann, J. Lapan, A. Lundgren, and D. Preda.  
\newblock \textit{A quantum adiabatic evolution algorithm applied to random instances of an NP-complete problem}.  
\newblock {\em Science}, 292(5516):472--475, 2001.  

\bibitem{AQC-eq}
D. Aharonov, W. van Dam, J. Kempe, Z. Landau, S. Lloyd, and O. Regev.  
\newblock \textit{Adiabatic quantum computation is equivalent to standard quantum computation}.  
\newblock {\em SIAM Journal on Computing}, 37(1):166--194, 2007.  

\bibitem{AQC-Review}
T. Albash and D. A. Lidar.  
\newblock \textit{Adiabatic quantum computation}.  
\newblock {\em Rev. Mod. Phys.}, 90:015002, 2018.  

\bibitem{general-PF2}
A. Elhashash and D. B. Szyld.  
\newblock \textit{On general matrices having the Perron--Frobenius property}.  
\newblock {\em Electronic Journal of Linear Algebra}, 17:389--402, 2008.



\bibitem{Kerman-Tunnel}
A.~J.~Kerman.  
\newblock \textit{Effective Hamiltonian perturbation theory for the tunneling gap in quantum annealing of structured MIS problems}.  
\newblock To be published.




\bibitem{KKR2006}
J. Kempe, A. Kitaev, and O. Regev.  
\newblock \textit{The complexity of the local Hamiltonian problem}.  
\newblock {\em SIAM Journal on Computing}, 35(5):1070--1097, 2006.


\bibitem{Amin-Choi}
M. H. S. Amin and V. Choi.  
\newblock \textit{First-order phase transition in adiabatic quantum computation}.  
\newblock {\em Phys. Rev. A}, 80(6):062326, 2009.  
\newblock arXiv:0904.1387 [quant-ph].

\bibitem{AKR}
B. Altshuler, H. Krovi, and J. Roland.  
\newblock \textit{Anderson localization makes adiabatic quantum optimization fail}.  
\newblock {\em Proceedings of the National Academy of Sciences}, 107:12446--12450, 2010.  

\bibitem{minor1}
V. Choi.  
\newblock \textit{Minor-embedding in adiabatic quantum computation: I. The parameter setting problem}.  
\newblock {\em Quantum Information Processing}, 7:193--209, 2008.  


\bibitem{Choi2020}
V. Choi.
\newblock{The Effects of the Problem Hamiltonian Parameters on the
  Minimum Spectral Gap in Adiabatic Quantum Optimization.}
\newblock{Quantum Inf. Processing.}, 19:90, 2020.  
{\em arXiv:quant-ph/1910.02985}.   





\bibitem{Grover1996}
L. K. Grover.  
\newblock \textit{A fast quantum mechanical algorithm for database search}.  
\newblock In {\em Proceedings of the 28th Annual ACM Symposium on the Theory of Computing (STOC)}, pages 212--219, 1996.

\bibitem{Roland2002}
J. Roland and N. J. Cerf.  
\newblock \textit{Quantum search by local adiabatic evolution}.  
\newblock {\em Phys. Rev. A}, 65:042308, 2002.  

\bibitem{Tarjan1977}
R. E. Tarjan and A. E. Trojanowski.  
\newblock \textit{Finding a maximum independent set}.  
\newblock {\em SIAM Journal on Computing}, 6(3):537--546, 1977.  

\bibitem{UnstructuredAQO2024}
A. Braida, S. Chakraborty, A. Chaudhuri, J. Cunningham, R. Menavlikar, L. Novo, and J. Roland.  
\newblock \textit{Unstructured Adiabatic Quantum Optimization: Optimality with Limitations}.  
\newblock {\em Quantum}, 9:1790, 2025.  
\newblock arXiv:2411.05736v3 [quant-ph].

\bibitem{Choi-NP-reduction}
V. Choi.  
\newblock \textit{Different adiabatic quantum optimization algorithms for the NP-complete exact cover and 3SAT problems}.  
\newblock {\em Quantum Info. Comput.}, 11(7--8):638--648, July 2011.  






\bibitem{AndradeFreire2003}
P. C. P. de Andrade and J. Freire.  
\newblock \textit{Effective Hamiltonians for the nonorthogonal basis set}.  
\newblock {\em Journal of Chemical Physics}, 118(15):6733--6740, April 15, 2003.

\bibitem{JarretJordan2014}
M. Jarret and S. P. Jordan.  
\newblock \textit{Adiabatic optimization without local minima}.  
\newblock {\em Phys. Rev. A}, 89:022341, 2014.  



\bibitem{Non-stoquastic}
T. Albash.  
\newblock \textit{Role of non-stoquastic catalysts in quantum adiabatic optimization}.  
\newblock {\em Phys. Rev. A}, 99:042334, 2019.  

\bibitem{de-sign}
E. Crosson, T. Albash, I. Hen, and A. P. Young.  
\newblock \textit{De-signing Hamiltonians for quantum adiabatic optimization}.  
\newblock {\em Quantum}, 4:334, 2020.  


\bibitem{CL2020}
E. J. Crosson and D. A. Lidar.  
\newblock \textit{Prospects for quantum enhancement with diabatic quantum annealing}.  
\newblock {\em Nat. Rev. Phys.}, 3:466, 2021.  
\newblock arXiv:2008.09913v1 [quant‑ph]







\end{thebibliography}
\end{document}